\definecolor{bleu_sombre}{rgb}{0,0,0.6}  \definecolor{rouge_sombre}{rgb}{0.8,0,0}\definecolor{vert_sombre}{rgb}{0,0.6,0}
\theoremstyle{plain}
\newtheorem{theorem}{{Theorem}}[section] 
\newtheorem*{theorem*}{{Theorem}}
\newtheorem{proposition}[theorem]{Proposition}
\newtheorem*{proposition*}{Proposition}
\newtheorem{corollary}[theorem]{Corollary}
\newtheorem*{corollary*}{Corollary}
\newtheorem{lemma}[theorem]{Lemma}
\newtheorem*{lemma*}{Lemma}
\theoremstyle{definition}
\newtheorem{definition}[theorem]{Definition}
\newtheorem*{definition*}{Definition}
\theoremstyle{remark}
\newtheorem{remark}[theorem]{Remark}
\renewcommand{\leq}{\leqslant}	\renewcommand{\geq}{\geqslant}
\renewcommand{\bar}[1]{\overline{#1}}
\renewcommand\over[2]{{\,\buildrel #1\over#2\,}}
\newcommand{\inv}{^{-1}}
\newcommand {\limt}[2]{\xrightarrow[#1 \to #2]{}}
\newcommand{\abs}[1]{\left\vert #1\right\vert}        
\newcommand{\nr}[1]{\left\Vert #1\right\Vert}         
\newcommand{\innp}[2]{\left< #1 , #2 \right>}         
\newcommand{\Dom}{\Dc}			
\newcommand{\pppg}[1] {\left< #1 \right>} 	
\newcommand{\bigo}[2]{\mathop{O}\limits_{#1 \to #2}}
\newcommand{\singl}[1]{\left\{ #1 \right\}}		
\newcommand{\Ii}[2] {\{#1,\dots,#2\}}         
\newcommand{\R}{\mathbb{R}}		\newcommand{\C}{\mathbb{C}}
\newcommand{\N}{\mathbb{N}}
\newcommand{\st}{\,:\,}					
\newcommand{\seq}[2]{\left({#1}_{#2}\right)_{#2 \in\N}} 
\renewcommand{\Re}{\mathop{\rm{Re}}\nolimits}        
\renewcommand{\Im}{\mathop{\rm{Im}}\nolimits}        
\DeclareMathOperator{\Id}{Id}                        
\renewcommand{\a}{\alpha}\renewcommand{\b}{\beta}\newcommand{\g}{\gamma}\renewcommand{\d}{\delta}\newcommand{\D}{\Delta}\newcommand{\e}{\varepsilon}\newcommand{\z}{\zeta} \renewcommand{\th}{\theta}\newcommand{\Th}{\Theta}\renewcommand{\l}{\lambda}\newcommand{\n}{\nu}\newcommand{\s}{\sigma}\renewcommand{\t}{\tau}\newcommand{\f}{\varphi}\newcommand{\vf}{\phi}\newcommand{\h}{\chi}\newcommand{\p}{\psi}\renewcommand{\o}{\omega}\renewcommand{\O}{\Omega}
\newcommand{\Dc}{{\mathcal D}}\newcommand{\Hc}{{\mathcal H}}\newcommand{\Lc}{{\mathcal L}}\newcommand{\Pc}{{\mathcal P}}\newcommand{\Tc}{{\mathcal T}}
\newcommand{\loc}{{\rm{loc}}}
\newcounter{stepproof}
\newcommand{\stepp}{\stepcounter{stepproof} \noindent {\bf $\bullet$}\quad }
\newcommand{\Ha}{H_a}
\newcommand{\detail}[1]
{
}
\title{Exponential decay for the Schr\"odinger equation on a dissipative waveguide}
\author{Julien Royer}
\address{Institut de Math\'ematiques de Toulouse \\ 118, route de Narbonne \\ 31062 Toulouse C\'edex 09 \\ France}
\email{julien.royer@math.univ-toulouse.fr}
\begin{document}

\begin{abstract}

We prove exponential decay for the solution of the Schr\"odinger equation on a dissipative waveguide. The absorption is effective everywhere on the boundary but the geometric control condition is not satisfied. The proof relies on separation of variables and the Riesz basis property for the eigenfunctions of the transverse operator. The case where the absorption index takes negative values is also discussed.
\end{abstract}

\maketitle

\section{Introduction}

\newcommand{\oo}{l}
\newcommand{\dimm}{d}
\newcommand{\al} {a_\oo}
\newcommand{\ao} {a_0}
\newcommand{\Haoal}{H_{\al,\ao}}

Let $\oo > 0$ and $\dimm \geq 2$. Let $\O$ denote the straight waveguide $\R^{\dimm-1} \times ]0, \oo[ \subset \R^\dimm$. We consider on $\O$ the Schr\"odinger equation with dissipative boundary condition 
\begin{equation} \label{schrodinger}
\begin{cases}
i \partial_t u  + \D u  = 0 & \text{on } ]0,+\infty[ \times  \O,\\
\partial_\n u = i a  u  & \text{on } ]0,+\infty[ \times  \partial \O,\\
u(0,\cdot ) = u_0  & \text{on } \O .
\end{cases}
\end{equation}
Here $u_0 \in L^2(\O)$ and $\partial_\n$ denotes the outward normal derivative. The absorption index $a$ belongs to $W^{1,\infty}(\partial \O)$. In the main result of the paper, $a$ takes positive values, but we also discuss the case where $a$ takes (small) negative values.\\

We will first prove well-posedness for this problem. Then it is standard computation to check that when $a \geq 0$ the norm of $u(t)$ in $L^2(\O)$ is non-increasing, and that the decay is due to the boundary condition:
\begin{align*}
\frac d {dt} \nr{ u(t)}_{L^2(\O)}^2 = - 2  \int_{\partial \O} a \abs {u(t)}^2  \leq 0.
\end{align*}
Whether this norm goes to 0 for large times, and then the rate of decay, are questions which have been extensively studied in different contexts. For the Schr\"odinger equation as in the present paper, or for the (damped) wave equation which is a closely related problem.\\

Many papers deal with the wave equation on compact manifolds, with dissipation in the interior of the domain or at the boundary.

We know from \cite{haraux85,lebeau96} that a weak assumption on the absorption index $a$ (for instance the dissipation is effective on any open subset of the domain) is enough to ensure that the energy goes to 0 for any initial datum. 

Uniform exponential decay has been obtained in \cite{raucht74,bardoslr92} under the now usual geometric control condition. Roughly speaking, the assumption is that any (generalized) bicharacteristic (or classical trajectory, or ray of geometric optics) meets the damping region (in the interior of the domain or at the boundary). For the free wave equation on a subset of $\R^\dimm$, the spatial projections of these bicharacteristics are straight lines, reflected at the boundary according to classical laws of geometric optics. This condition is essentially necessary and sufficient (we do not discuss here the subtilities due to the trajectories which meet the boundary tangentially).

Then the question was to understand what happens when this damping condition fails to hold. In \cite{lebeau96, lebeaur97} it is proved that we have at least a logarithmic decay of the energy if the initial datum belongs to the domain of the infinitesimal generator of the problem. This can be optimal, in particular when non controlled trajectories are stable. Intermediate rates of decay have been obtained for several examples where the flow is unstable near these trajectories (see for instance \cite{liur05,Burq-Hi-07,christianson07er,schenck11,anantharamanl}).\\

The same questions have been investigated for the Schr\"odinger or wave equation on the Euclidean space. Even in the self-adjoint case, where the norm for the solution of the Schr\"odinger equation or the energy for the solution of the wave equation are conserved, it is interesting to study the local energy decay, which measures the fact that the energy escapes at infinity. For the free case and with compactly supported initial conditions, explicit computations show that the energy on a compact vanishes after finite time for the wave in odd dimension $d\geq 3$, while the decay is of size $t^{-d}$ in even dimension. For the Schr\"odinger equation, the norm of the solution decays like $t^{-\frac \dimm 2}$.

Many authors have proved similar estimates for perturbed problems. For instance the wave or Schr\"odinger equations can be stated outside a compact obstacle of $\R^\dimm$ (we can also consider a perturbation of the Laplace operator). For the wave equation on an exterior domain, we have uniform (exponential) decay for the local energy if and only if the non-trapping assumption holds. This means that there is no trapped bicharacteristic. See \cite{lax-phillips,ralston69,melrose79}. See \cite{burq98} for logarithmic decay without the non-trapping assumption, and \cite{nonnenmacherz09} for an intermediate situation.

For the dissipative equations, this non-trapping assumption can be replaced by the same damping condition (geometric control condition) on bounded trajectories as in the compact case (where all the trajectories were bounded). This means that the energy of the wave (or at least the contribution of high frequencies) escapes at infinity or is dissipated by the medium. This has been used in \cite{alouik07} for a dissipation in the interior of the domain and in \cite{aloui02,alouik10} for dissipation at the boundary, for the free 
equations on an exterior domain in both cases. See also \cite{art-mourre} for the corresponding resolvent estimates, \cite{boucletr14} for the damped wave equation with a Laplace-Beltrami operator corresponding to a metric which is a long-range perturbation of the flat one, and \cite{alouikv13} where the damping condition is not satisfied but the dissipation is stronger.\\

In this paper we consider a domain which is neither bounded nor the complement of a bounded obstacle. In particular, compared to the situations mentioned above, the boundary of the wave\-guide is not compact.

More precisely, we are going to use the fact that $\O$ is a Euclidean space in some directions and compact with respect to the last coordinate, so that properties of both compact and Euclidean domains will appear in our analysis.\\

The main result of this paper is the following:

\begin{theorem} \label{th-energy-decay}
Assume that there exist two constants $a_0,a_1$ such that on $\partial \O$ we have 
\begin{equation} \label{hyp-a0-a-a1}
0 < a_0 \leq a \leq a_1. 
\end{equation}
Then there exist $\g > 0$ and $C \geq 0$ such that for all $u_0 \in L^2(\O)$ the solution $u$ of the problem \eqref{schrodinger} satisfies
\[
\forall t \geq 0, \quad \nr{u(t)}_{L^2(\O)} \leq C e^{-\g t} \nr{u_0}_{L^2(\O)}.
\]
The same result holds if $a$ vanishes on one side of the boundary and satisfies \eqref{hyp-a0-a-a1} on the other side.
\end{theorem}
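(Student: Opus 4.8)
The plan is to exploit the product structure $\O = \R^{\dimm-1} \times ]0,\oo[$ and reduce the analysis of the generator of \eqref{schrodinger} to a family of one-dimensional transverse problems via partial Fourier transform in the first $\dimm-1$ variables. Writing $u(t,x,y)$ with $x \in \R^{\dimm-1}$ and $y \in ]0,\oo[$, I would apply $\Four_x$ so that $-\D_x$ becomes multiplication by $\abs{\x}^2$. This leaves, for each frequency $\x \in \R^{\dimm-1}$, the transverse operator on $L^2(]0,\oo[)$ acting as $-\partial_y^2$ with the dissipative Robin condition $\partial_\n v = i a v$ at $y=0$ and $y=\oo$. The full generator is then a fibered operator $\int^\oplus (\abs{\x}^2 + T_\x)\, d\x$, where $T_\x$ is (a $\x$-dependent version of) the transverse operator; since $\abs{\x}^2$ is a real shift commuting with the fiber dynamics, the decay in time of $e^{-itH}$ is controlled uniformly in $\x$ by the spectral gap of the transverse operators.

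The key analytic input, advertised in the abstract, is the Riesz basis property for the eigenfunctions of the (non-self-adjoint) transverse operator. So the second step is to study $T := -\partial_y^2$ on $]0,\oo[$ with the boundary conditions above, show it has discrete spectrum $\{\l_k\}$ with eigenfunctions $\{e_k\}$, and prove that $\{e_k\}$ forms a Riesz basis of $L^2(]0,\oo[)$. This would follow from the fact that $T$ is a small (boundary) perturbation of the self-adjoint Neumann Laplacian, whose eigenfunctions form an orthonormal basis and have well-separated eigenvalues; standard perturbation results for Riesz bases (of the Bari/Kato type) then apply. The crucial consequence of \eqref{hyp-a0-a-a1} is a \emph{uniform} spectral gap: under the sign condition $a \geq a_0 > 0$, an energy computation $\Im \innp{T v}{v} = -2\int a \abs{v}^2$ forces $\Im \l_k \leq -\d < 0$ for every $k$, with $\d$ independent of $k$ (the transverse trace is coercive because the domain in $y$ is bounded). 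Thus every transverse mode is damped at a uniform exponential rate.

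With the Riesz basis in hand, I would expand the solution, for each fixed $\x$, in the transverse eigenbasis: $e^{-itH}u_0$ decomposes frequency by frequency into $\sum_k e^{-it(\abs{\x}^2 + \l_k(\x))} c_k(\x) e_k^\x$, and the Riesz basis property gives two-sided bounds comparing $\nr{\sum_k c_k e_k}^2$ with $\sum_k \abs{c_k}^2$ uniformly. Since $\Re$ of the exponent contributes only a unimodular phase (the real shift $\abs{\x}^2 + \Re\l_k$) and the imaginary part gives decay $e^{t \Im \l_k} \leq e^{-\d t}$, the $L^2$ norm of each fiber decays like $e^{-\d t}$ up to the Riesz constant. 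Integrating in $\x$ by Plancherel yields $\nr{u(t)}_{L^2(\O)} \leq C e^{-\d t}\nr{u_0}_{L^2(\O)}$, which is the claim with $\g = \d$.

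The main obstacle I expect is twofold and both parts concern uniformity in the frequency $\x$. First, the boundary condition and hence the transverse operator may genuinely depend on $\x$ (or at least the Riesz basis constant and the gap $\d$ must be controlled as $\abs{\x} \to \infty$ and as $\abs{\x} \to 0$); one must verify that the Riesz basis bounds and the lower bound $-\Im \l_k \geq \d$ do not degenerate over the whole range of frequencies. Second, and more delicate, the eigenfunctions $e_k^\x$ are not orthogonal, so the spectral projections are not orthogonal projections and their norms could a priori grow with $k$; controlling the non-normality uniformly — precisely what the Riesz basis property buys us — is the heart of the argument, and it is here that the boundedness of the transverse interval and the structure of $a$ (through \eqref{hyp-a0-a-a1}) are essential. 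The variant where $a$ vanishes on one side is handled by the same scheme: the energy identity still yields $\Im\l_k \leq -\d < 0$ as long as $a \geq a_0 > 0$ on at least one component of $\partial\O$, because a transverse mode cannot be supported away from that side, so the uniform gap survives.
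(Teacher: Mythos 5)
The decisive gap is that your scheme only proves the theorem for an absorption index that is constant (or at least independent of $x$), whereas Theorem \ref{th-energy-decay} allows any $a \in W^{1,\infty}(\partial \O)$ pinched between the \emph{constants} $a_0$ and $a_1$. Your whole construction --- the partial Fourier transform $\Four_x$, the fibered operator $\int^\oplus \big( \abs\x^2 + T_\x \big)\, d\x$, mode-by-mode decay, Plancherel in $\x$ --- requires the boundary condition to be translation invariant in $x$. If $a$ depends on $x$, conjugating by $\Four_x$ turns multiplication by $a$ on $\partial\O$ into a convolution in $\x$; the operator does not fiber, and there is no family of transverse operators at all. (Your ``main obstacle'' paragraph misdiagnoses this point: in the constant case the fiber is $\abs\x^2 + T$ with $T$ independent of $\x$, so no uniformity-in-$\x$ issue arises, while in the variable case there are no fibers to begin with.) This is precisely why the paper does not deduce Theorem \ref{th-energy-decay} from an eigenfunction expansion of the evolution. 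Instead it (i) uses separation of variables and the Riesz basis to prove a resolvent estimate for the constant-coefficient operator $H_{a_0}$, (ii) transfers that estimate to the variable-coefficient operator $\Ha$ by a dissipative resolvent-comparison argument (the ``quadratic estimates'' of Proposition \ref{prop-quadratic-estimates}, which exploit $0 \leq a - a_0 \leq \a a_0$), yielding the uniform bound $\nr{(\Ha - z)\inv}_{\Lc(L^2(\O))} \leq C$ on a strip $\Im z \geq -\tilde\g$ (Theorem \ref{th-gap-spectral}), and (iii) converts this resolvent bound into exponential time decay by a Fourier--Laplace and Plancherel contour-deformation argument (Section \ref{sec-time-decay}). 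Step (iii) is not cosmetic: for a non-self-adjoint generator a spectral gap alone does not yield semigroup decay, and for variable $a$ no modal decomposition of the propagator is available.

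A second, more local gap: your claim that the energy identity forces a \emph{uniform} gap $\Im \l_k \leq -\d < 0$ is not justified. For a normalized transverse eigenfunction $e_k$ (constant $a$, say) the identity gives $\Im\big(\l_k^2\big) = -a\big( \abs{e_k(0)}^2 + \abs{e_k(\oo)}^2 \big)$, which is strictly negative for each $k$ by unique continuation; but uniformity requires the boundary traces of the normalized eigenfunctions to stay bounded away from zero as $k \to \infty$. That is not implied by the boundedness of the interval --- the trace is not bounded below by the $L^2$ norm, contrary to your coercivity remark --- and it is exactly what the paper establishes through the asymptotic analysis of the transcendental equation \eqref{eq-z}, namely $\l_n(a)^2 = (n\n)^2 - 4ia/\oo + O(n\inv)$ (Proposition \ref{prop-srev}). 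Modulo these two points your plan (transverse Riesz basis, uniform gap, mode-wise propagation) is sound for $x$-independent $a$, and it is essentially how the paper treats the constant-coefficient weakly dissipative problem in Section \ref{sec-non-diss}; but as written it does not prove Theorem \ref{th-energy-decay}.
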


In this theorem the absorption is effective at least on one side of the boundary, so the damping condition is clearly satisfied on bounded trajectories (which have to meet both sides of $\partial \O$). However, it is important to note that we prove exponential decay for the total energy and not only for the local energy. This means that all the energy is dissipated by the medium, including the energy going at infinity. This suggests that all the classical trajectories, and not only the bounded ones, should be controlled by the dissipation. This is not the case here.

Thus this theorem provides a new example for the already mentioned general idea that given a result for which the non-trapping condition (or the damping condition for dissipative problems) is necessary, we get a close result if there are only a few trajectories which contradict the assumption. Here the initial conditions in $T^* \O$ whose corresponding trajectories avoid the boundary form a submanifold of codimension 1 (the frequency vector has no tranversal component). Moreover the flow is linearly unstable. Compared to the related results on compact or Euclidean domains, our analysis will be based on simpler arguments. The key argument is that our problem will inherit the decay property of the transverse problem (that is the Schr\"odinger equation on $]0,\oo[$ with the same dissipative boundary condition), for which the geometric control assumption holds.\\

In order to prove time-dependent estimates for an evolution equation as in Theorem \ref{th-energy-decay}, we often use spectral properties for the generator of the problem (spectral gap for the eigenvalues on a compact manifold, absence of resonances close to a positive energy on a perturbation of the Euclidean space, etc.). 
Here, the problem is governed by the operator
\begin{equation} \label{def-Ha}
\Ha \f = -\D \f
\end{equation}
defined on the domain
\begin{equation} \label{dom-Ha}
\Dom(\Ha) = \singl {\f \in H^2(\O) \st \partial_\n \f = i a \f  \text { on } \partial \O} \subset L^2(\O).
\end{equation}
In particular, when $a \geq 0$ the solution $u$ of the problem \eqref{schrodinger} is given by the semi-group $e^{-it\Ha}$ generated by this operator. When $u_0 \in \Dom(\Ha)$ this solution belongs to $C^0(\R_+ , \Dom(\Ha)) \cap C^1(\R_+,L^2(\O))$.\\

The main part of this paper is devoted to the proof of resolvent estimates for the operator $\Ha$. More precisely, in order to prove Theorem \ref{th-energy-decay} we will use the following result:

\begin{theorem} \label{th-gap-spectral}
Let $a$ be as in Theorem \ref{th-energy-decay}. Then there exist $\tilde \g > 0$ and $C \geq 0$ such that any $z \in \C$ with $\Im (z) \geq - \tilde \g$ is in the resolvent set of $\Ha$ and moreover
\[
\nr{ (\Ha - z)\inv  }_{\Lc(L^2(\O))} \leq C.
\]
\end{theorem}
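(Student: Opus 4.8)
The plan is to exploit the product structure $\O = \R^{\dimm-1}\times\,]0,\oo[$ and separate the Euclidean variables $x'\in\R^{\dimm-1}$ from the transverse variable $y\in\,]0,\oo[$. Under the identification $L^2(\O)\simeq L^2(\R^{\dimm-1})\otimes L^2(]0,\oo[)$ and a partial Fourier transform in $x'$, the operator $\Ha$ becomes the direct integral $\int^\oplus_{\R^{\dimm-1}} A(\x)\,d\x$, where for each frequency $\x$ the fiber is $A(\x)=\abs{\x}^2+T$ on $L^2(]0,\oo[)$ and $T=-\partial_y^2$ carries the dissipative boundary conditions $\partial_y\f(0)=-i\ao\f(0)$, $\partial_y\f(\oo)=i\al\f(\oo)$ (here $\ao,\al$ are the values of $a$ at the two endpoints). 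As $\abs{\x}^2$ is real it only shifts the spectrum horizontally, so everything reduces to a sharp analysis of the transverse operator $T$ and to bounding $(A(\x)-z)\inv$ uniformly in $\x$. When $a$ genuinely depends on $x'$ the fibers are coupled and exact separation fails; I would then regard $T=T(x')$ as an $x'$-dependent family and seek bounds uniform in $x'$, the constant-coefficient computation below being the model that drives the argument.

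First I would analyse $T$, which has compact resolvent and hence discrete spectrum $\{\lambda_n\}_{n\in\N}$ with eigenfunctions $\{\f_n\}$. An integration by parts gives, for $\f$ in the domain,
\[
\Im\innp{T\f}{\f}=-\ao\abs{\f(0)}^2-\al\abs{\f(\oo)}^2\leq 0,
\]
so $\Im\lambda_n\leq 0$, and in fact $\Im\lambda_n<0$ strictly since an eigenfunction with vanishing Robin data at both endpoints would vanish identically. Solving the characteristic equation explicitly yields $\lambda_n=(n\pi/\oo)^2+O(1)$ with $\Im\lambda_n\to -2(\ao+\al)/\oo<0$. Together with strict negativity of the finitely many low modes this produces the crucial \emph{uniform transverse gap}
\[
\tilde\g_0:=\inf_{n\in\N}\big(-\Im\lambda_n\big)>0,
\]
which is the spectral translation of the geometric control condition for the one-dimensional problem.

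The second and, I expect, hardest ingredient is the Riesz basis property: although $T$ is not self-adjoint, its eigenfunctions $\{\f_n\}$ form a Riesz basis of $L^2(]0,\oo[)$. I would prove this by comparing $\{\f_n\}$ with the orthonormal trigonometric basis, to which they converge by the eigenfunction asymptotics, and applying a Bari/Birkhoff-type perturbation criterion. This furnishes constants $0<m\leq M$ with $m\sum\abs{c_n}^2\leq\nr{\sum_n c_n\f_n}^2\leq M\sum\abs{c_n}^2$, so that expansions along $\{\f_n\}$ diagonalize $T$ and the associated functional calculus is bounded with constant $\sqrt{M/m}$ — uniformly in $\x$, since $A(\x)$ has the same eigenfunctions.

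Finally I would assemble the estimate. Fix $\tilde\g\in\,]0,\tilde\g_0[$ and $z$ with $\Im z\geq-\tilde\g$. Since $\Sp(-\D_{x'})=[0,+\infty[$, the spectrum of the separable model is $\Sp(\Ha)=\bigcup_{n\in\N}(\lambda_n+[0,+\infty[)\subset\singl{w\in\C\st\Im w\leq-\tilde\g_0}$, so $z$ lies in the resolvent set. Expanding in the Riesz basis and using that $\abs{\x}^2$ is real,
\[
\nr{(A(\x)-z)\inv}\leq\sqrt{M/m}\,\sup_{n\in\N}\frac{1}{\abs{\abs{\x}^2+\lambda_n-z}}\leq\sqrt{M/m}\,\sup_{n\in\N}\frac{1}{\abs{\Im z-\Im\lambda_n}}\leq\frac{\sqrt{M/m}}{\tilde\g_0-\tilde\g},
\]
uniformly in $\x$; taking the supremum over $\x$ in the direct integral gives $\nr{(\Ha-z)\inv}\leq\sqrt{M/m}/(\tilde\g_0-\tilde\g)=:C$. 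The main obstacle is precisely the Riesz basis property for the non-self-adjoint $T$, which is what makes the fiberwise diagonalization quantitatively usable; the dependence of $a$ on $x'$, which couples the fibers, is the secondary point requiring the transverse gap to hold uniformly in the Euclidean variable.
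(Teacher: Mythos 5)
Your argument is sound, and coincides with the paper's own route, in the \emph{constant-coefficient} case: transverse spectral analysis with a uniform gap, the Riesz basis property of the eigenfunctions of the transverse operator, and separation of variables giving $\nr{(\Ha-z)\inv}\lesssim 1/d(z,\mathfrak{S})$ with $\mathfrak{S}=\bigcup_n\big(\{\l_n\}+\R_+\big)$; this is exactly Sections 3--4 of the paper, culminating in Corollary \ref{cor-gap-spectral-aconstant}. The genuine gap is the case the theorem is actually about: $a\in W^{1,\infty}(\partial\O)$ with $0<a_0\leq a\leq a_1$, genuinely depending on the unbounded variable $x$. For such $a$ there is no direct integral decomposition at all: the partial Fourier transform in $x$ no longer diagonalizes $\Ha$, because the boundary condition $\partial_\n u=iau$ couples all the fibers. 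Your fallback --- ``regard $T(x')$ as an $x'$-dependent family and seek bounds uniform in $x'$'' --- is not an argument: the frozen operators $\abs{\x}^2+T(x')$ are not fibers of $\Ha$ in any sense, and uniform bounds for them imply nothing about $(\Ha-z)\inv$ without a localization scheme controlling the commutators of $-\D_x$ with the $x$-dependent boundary condition; nothing of this sort is provided, and near the real axis (where resolvents blow up in the self-adjoint limit $a=0$) such a perturbative scheme is precisely the difficulty.

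The paper bridges this gap with an idea absent from your proposal: a dissipative comparison (``quadratic estimate'') between $H_a$ and the constant-coefficient operator $H_{a_0}$, see Proposition \ref{prop-quadratic-estimates}. Writing the form-sense identity $\tilde H_a=\tilde H_{a_0}+\Theta_{a-a_0}$, where $\Theta_{a-a_0}$ is the boundary form associated with the \emph{non-negative} function $a-a_0$, the resolvent identity combined with the Cauchy--Schwarz inequality for this non-negative form yields, for all $z$ with $\Im z>0$,
\[
\nr{B(\tilde H_a-z)\inv B^*}_{\Lc(L^2(\O))}\;\leq\; C\,\nr{B(\tilde H_{a_0}-z)\inv B^*}_{\Lc(L^2(\O))},
\]
with $C$ controlled by $a_1/a_0$. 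Applied with $B=\Id$, this transfers your constant-coefficient bound to variable $a$ on the upper half-plane, and a Neumann series (choosing $\tilde\g<1/C$) then pushes the estimate down to the strip $\Im z\geq-\tilde\g$. Note that this step exploits the ordering $a\geq a_0>0$ (monotonicity of the dissipation), not merely boundedness of $a$: it is exactly the ingredient that replaces separation of variables when $a$ is not constant, and without it (or some substitute) your proof only establishes the theorem for $a$ constant on each side of the boundary.
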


Here $\Lc(L^2(\O))$ stands for the set of bounded operators on $L^2(\O)$.
When $a$ is constant, our analysis will provide a good description of the spectrum of $\Ha$, and in particular there is a spectral gap as stated in Theorem \ref{th-gap-spectral}. However, for a non-selfadjoint operator the norm of the resolvent can be large far from the spectrum, so the uniform bound of the resolvent on a strip around the real axis is not a consequence of the spectral gap and has to be proved directly.\\

In the papers we have mentioned above concerning the damped Schr\"odinger equation \cite{aloui08,aloui08b,alouikv13,alouikr}, it is proved that this equation satisfies the Kato smoothing effect, that is an estimate of the form 
\[
\int_0^\infty \nr{w(x) (1-\D)^\frac 14  u(t)}_{L^2}^2 \, dt \lesssim \nr {u_0}_{L^2}^2,
\]
for some suitable weight function $w$. In these papers, the absorption is in the interior of the domain and, more important, it is of the form $a(x) (1-\D)^{\frac 12} a(x)$. This is stronger than multiplication by $a(x)^2$ for high-frequencies. It is not the case here, which is why we have no smoothing effect for our problem. However, we can at least recover the same result as in the non-selfadjoint case $a=0$ (see for instance \cite{danconar12}), namely the smoothing property in the unbounded directions:

\begin{theorem} \label{th-smoothing}
Let $a$ be as in Theorem \ref{th-energy-decay}. Let $\d > \frac 12$. Then there exists $C \geq 0$ such that for all $u_0 \in L^2(\O)$ the solution $u$ of \eqref{schrodinger} satisfies 
\[
\int_0^\infty \nr{\pppg x ^{-\d} \big( 1- \D_x\big)^{\frac 14} u(t)}_{L^2(\O)}^2 \, dt \leq C \nr {u_0}_{L^2(\O)}^2,
\]
where $\D_x = \sum_{n=1}^{\dimm-1} \partial_{x_n}^2$ is the partial Laplacian acting on the unbounded directions.
\end{theorem}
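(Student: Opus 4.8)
The plan is to exploit the product structure $\O = \R^{\dimm-1}\times\,]0,\oo[$ and reduce the estimate, by separation of variables, to the free Schr\"odinger group in the unbounded directions. Let $(\f_j)_{j}$ be the eigenfunctions of the transverse dissipative operator $-\partial_y^2$ on $]0,\oo[$ (with boundary condition $\partial_\n \f = ia\f$), associated with eigenvalues $(\l_j)_j$. These eigenfunctions form a Riesz basis of $L^2(]0,\oo[)$, and the transverse spectral analysis underlying Theorem \ref{th-gap-spectral} provides a uniform gap $\Im(\l_j)\le -\tilde\g<0$ for all $j$. Writing $u_0 = \sum_j u_{0,j}\otimes\f_j$, the corresponding solution decomposes as $u(t) = \sum_j u_j(t)\otimes\f_j$, where each transverse mode solves $i\partial_t u_j + \D_x u_j - \l_j u_j = 0$, so that $u_j(t) = e^{-i\l_j t}\,e^{it\D_x}u_{0,j}$ on $\R^{\dimm-1}$.

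First I would reduce the $L^2(\O)$ norm to a sum over modes. Since $(\f_j)_j$ is a Riesz basis, for a.e. $(t,x)$ one has $\nr{\sum_j v_j(x)\f_j}_{L^2(]0,\oo[)}^2 \simeq \sum_j\abs{v_j(x)}^2$ uniformly, hence
\[
\int_0^\infty\nr{\pppg x^{-\d}(1-\D_x)^{\frac14}u(t)}_{L^2(\O)}^2\,dt \lesssim \sum_j\int_0^\infty\nr{\pppg x^{-\d}(1-\D_x)^{\frac14}u_j(t)}_{L^2(\R^{\dimm-1})}^2\,dt .
\]
It then suffices to bound each term on the right by $C\nr{u_{0,j}}^2$ with $C$ independent of $j$, and to sum using $\sum_j\nr{u_{0,j}}^2\lesssim\nr{u_0}_{L^2(\O)}^2$, again by the Riesz basis property.

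For a fixed mode I would split $(1-\D_x)^{\frac14}=(1-\D_x)^{\frac14}\h(D_x)+(1-\D_x)^{\frac14}(1-\h(D_x))$, where $\h$ is a smooth cutoff to low frequencies. On the high-frequency part, writing $(1-\D_x)^{\frac14}(1-\h(D_x)) = \abs{D_x}^{\frac12}b(D_x)$ with $b$ a bounded symbol commuting with $e^{it\D_x}$, the unitarity of $e^{it\D_x}$, the bound $\abs{e^{-i\l_j t}}\le 1$, and the classical (homogeneous) Kato smoothing estimate
\[
\int_0^\infty\nr{\pppg x^{-\d}\abs{D_x}^{\frac12}e^{it\D_x}g}_{L^2(\R^{\dimm-1})}^2\,dt \le C\nr g_{L^2(\R^{\dimm-1})}^2 \qquad (\d>\tfrac12)
\]
together yield a bound $C\nr{u_{0,j}}^2$ with $C$ independent of $j$. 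The low-frequency operator $(1-\D_x)^{\frac14}\h(D_x)$ is bounded on $L^2(\R^{\dimm-1})$, so here I would use instead the dissipation: since $e^{it\D_x}$ is unitary, $\nr{u_j(t)} = e^{t\Im(\l_j)}\nr{u_{0,j}}$, and integrating the resulting factor $e^{2t\Im(\l_j)}$ against the uniform gap $\abs{\Im(\l_j)}\ge\tilde\g$ gives $\int_0^\infty\nr{\pppg x^{-\d}(1-\D_x)^{\frac14}\h(D_x)u_j(t)}^2\,dt \le \frac{C}{2\tilde\g}\nr{u_{0,j}}^2$.

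The main subtlety is the low-frequency regime. The classical Kato estimate is purely homogeneous and provides no smoothing gain near zero frequency, which is exactly the delicate point in the self-adjoint reference case; here it is resolved cleanly because the uniform spectral gap $\Im(\l_j)\le -\tilde\g$ coming from the boundary dissipation makes the local norm of every mode time-integrable, regardless of the dimension $\dimm-1$ of the unbounded directions. The remaining points---justifying the separation of variables, the pointwise Riesz-basis comparison, and the $j$-uniformity of the two constants---are routine once the transverse spectral picture established for Theorems \ref{th-energy-decay} and \ref{th-gap-spectral} is in hand.
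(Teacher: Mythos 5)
Your proposal has a genuine gap: it proves the estimate only when the absorption index is constant, whereas Theorem \ref{th-smoothing} assumes only that $a \in W^{1,\infty}(\partial \O)$ satisfies \eqref{hyp-a0-a-a1}, so that $a$ may depend on the longitudinal variable $x$. Your entire argument rests on the tensor decomposition $u(t) = \sum_j \big(e^{-i\l_j t} e^{it\D_x} u_{0,j}\big)\otimes \f_j$, which requires the factorization $\Ha = -\D_x \otimes \Id + \Id \otimes \Ta$ of \eqref{separation-variable-Ha}; this holds only when $a$ is constant on each component of $\partial \O$. For variable $a(x)$ there is no transverse operator, hence no eigenvalues $\l_j$, no Riesz basis $(\f_j)$, and no decoupled evolution of modes, so the reduction to mode-by-mode estimates fails at the very first step. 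This is exactly the difficulty the paper's proof is organized around: it first establishes the uniform weighted resolvent bound of Proposition \ref{prop-weighted} for constant $a$ by separation of variables (essentially your high/low frequency splitting, but at the resolvent level, using the free resolvent estimate together with the spectral gap $\Im(\l_n(a)^2) \leq -\tilde\g$), then transfers it to variable $a$ by the quadratic-estimate perturbation argument of Proposition \ref{prop-quadratic-estimates} applied with $B = \pppg x^{-\d}(1-\D_x)^{\frac14}$, which exploits the sign condition $a - a_0 \geq 0$ of the boundary form. Finally, since $\Ha$ is only maximal dissipative and not self-adjoint, the paper converts the resolvent bound into the time-integrated estimate via Kato's theory of relatively smooth operators applied to a self-adjoint dilation of $\Ha$. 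Your time-domain framework offers no substitute for the perturbation step, so the variable-coefficient case cannot be recovered by a routine modification of your argument.

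That said, your constant-$a$ computation is correct and is an attractive direct route in that special case: the high-frequency part via the homogeneous Kato smoothing estimate for $e^{it\D_x}$ combined with $\abs{e^{-i\l_j t}}\leq 1$, and the low-frequency part via the uniform gap (Corollary \ref{cor-gap-spectral-aconstant}), avoid both the stationary resolvent theory and the dilation machinery. To complete a proof of the theorem as stated, you should recast your estimate as the constant-coefficient input (i.e.\ Proposition \ref{prop-weighted} for $H_{a_0}$) and then add the two missing ingredients from the paper: Proposition \ref{prop-quadratic-estimates} to pass to general $a$, and the self-adjoint dilation argument to return from resolvent bounds to the time-dependent estimate.
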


Here and everywhere below we write $\pppg x$ for $\big( 1 + \abs x^2 \big)^{\frac 12}$. The proof of Theorem \ref{th-smoothing} relies on the theory of relatively smooth operators in the sense of Kato. In order to apply it in our non-selfadjoint setting we will use a self-adjoint dilation of our dissipative operator.\\

In the results above, we can relax the assumption that the absorption index is positive everywhere. In Theorem \ref{th-energy-decay} we consider the case where $a$ is positive on one side of the boundary, say $\R^{\dimm-1} \times \singl \oo$, and vanishes on the other side $\R^{\dimm-1} \times \singl 0$. What happens if the absorption index is a negative constant on one side is not so clear.\\

Let $\al,\ao \in \R$ be such that
\begin{equation} \label{sum-ab}
\al + \ao > 0
\end{equation}
and consider the problem
%
\begin{equation} \label{schrodinger-nondiss}
\begin{cases}
i \partial_t u  + \D u  = 0 & \text{on } ]0,+\infty[ \times  \O,\\
\partial_\n u = i \al  u  & \text{on } ]0,+\infty[ \times \R^{\dimm-1} \times  \singl \oo,\\
\partial_\n u = i \ao  u  & \text{on } ]0,+\infty[ \times  \R^{\dimm-1} \times  \singl 0,\\
u(0,\cdot ) = u_0  & \text{on }  \O.
\end{cases}
\end{equation}
We denote by $\Haoal$ the corresponding operator: $\Haoal = -\D$ with domain
\[
\Dom(\Haoal) = \singl { u \in H^2(\O) \st \partial_\n u = i \al  u  \text{ on } \R^{\dimm-1} \times  \singl \oo, \partial_\n u = i \ao  u   \text{ on }   \R^{\dimm-1} \times  \singl 0}.
\]

When $\al$ and $\ao$ have different signs but satisfy \eqref{sum-ab}, we will say that the boundary condition is weakly dissipative. Estimates like those of Theorems \ref{th-energy-decay} and \ref{th-gap-spectral} are not likely to hold without this assumption. For instance, when $\ao = - \al$ we get a $\Pc \Tc$-symmetric waveguide. Such a boundary condition has been studied in \cite{krejcirikbz06,borisovk08}. See also \cite{krejcirikt08}. In particular it is known that in this case the spectrum is real, so we cannot expect any generalization of our results. The case $\al+\ao < 0$ is dual to the case $\al + \ao > 0$. We will see that in this case the norm of the solution grows exponentially (see Remark \ref{rem-anti-diss}).\\

In the weakly dissipative case we have the following theorem:

\begin{theorem} \label{th-energy-decay-nondiss}
Let $\al, \ao \in \R$ satisfy Assumption \eqref{sum-ab}. If $u_0 \in \Dom(\Haoal)$ then the problem \eqref{schrodinger-nondiss} has a unique solution $u \in C^1(]0,+\infty[,L^2(\O)) \cap C^0 ([0,+\infty[,\Dom(\Haoal))$. Moreover, if $\abs {\al}$ and $\abs {\ao}$ are small enough there exist $\g > 0$ and $C \geq 0$ such that for all $u_0$ we have
\[
\forall t \geq 0, \quad \nr{u(t)}_{L^2(\O)} \leq C e^{-\g t} \nr{u_0}_{L^2(\O)}.
\]
\end{theorem}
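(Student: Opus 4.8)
The plan is to run the same separation of variables scheme already used for Theorem~\ref{th-energy-decay}, the only genuinely new ingredient being a spectral analysis of the transverse operator that can no longer rely on dissipativity. After a partial Fourier transform in the unbounded variable $x \in \R^{\dimm-1}$, the operator $\Haoal$ becomes the direct integral over the frequency $\x \in \R^{\dimm-1}$ of the fibered operators $\abs{\x}^2 + T$, where $T = T_{\al,\ao}$ is the transverse operator $-\partial_y^2$ on $L^2(]0,\oo[)$ whose domain encodes $\f'(\oo) = i\al \f(\oo)$ and $\f'(0) = -i\ao \f(0)$. Since these two commuting pieces act on different variables, the propagator factorizes fiberwise as $e^{-it\abs{\x}^2} e^{-itT}$, with $e^{-it\abs{\x}^2}$ unitary. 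Everything is thus reduced to the one-dimensional operator $T$ on the bounded interval.

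First I would analyze $T$, which has compact resolvent and discrete spectrum. Writing $\l = \m^2$ and $\f(y) = \cos(\m y) + B \sin(\m y)$, the condition at $0$ gives $B = -i\ao / \m$ and the condition at $\oo$ yields the characteristic equation
\[
\tan(\m \oo) = \frac{-i(\al+\ao)\,\m}{\m^2 + \al \ao}.
\]
Perturbing the Neumann spectrum (the case $\al = \ao = 0$), the lowest mode satisfies $\l_0 = -\al\ao - i(\al+\ao)/\oo$ to leading order, while the high modes perturb $(n\pi/\oo)^2$ with $\Im \l_n \to -2(\al+\ao)/\oo$ as $n \to \infty$. Assumption~\eqref{sum-ab} makes every imaginary part negative, and when $\abs{\al}$ and $\abs{\ao}$ are small enough this negativity is uniform: there is $\g > 0$ with $\Im \l_n \leq -\g$ for all $n$. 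I would also check that the eigenfunctions $(\f_n)$ of $T$ form a Riesz basis of $L^2(]0,\oo[)$, $T$ being a relatively compact non-self-adjoint perturbation of the self-adjoint Neumann operator (by the same Bari--Markus type argument as in the dissipative case).

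For well-posedness, valid for all $\al,\ao$ with $\al+\ao>0$, the numbers $\Im \l_n$ are bounded, so the Riesz basis expansion $e^{-itT}g = \sum_n e^{-it\l_n}\innp{g}{\p_n}\f_n$ (with $(\p_n)$ the biorthogonal family) defines a $C_0$-group on $L^2(]0,\oo[)$; tensoring with the unitary group $e^{it\D_x}$ gives the $C_0$-group generated by $-i\Haoal$, hence the existence, uniqueness and regularity of the solution of \eqref{schrodinger-nondiss} for $u_0 \in \Dom(\Haoal)$. The main obstacle is precisely the loss of dissipativity: when $\ao < 0$ the quantity $\Im \innp{\Haoal u}{u} = -\al \int_{y=\oo} \abs{u}^2 - \ao \int_{y=0} \abs{u}^2$ can be positive, so the spectral localization $\Sp(\Haoal) \subset \singl{\Im z \leq -\g}$ is no longer automatic and must be extracted from the characteristic equation above. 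This is also why the smallness of $\abs{\al},\abs{\ao}$ cannot be dispensed with: for large negative $\ao$ an intermediate eigenvalue may cross into the upper half-plane, in agreement with the fact that the $\Pc\Tc$-symmetric choice $\ao = -\al$ yields a real spectrum.

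It then remains to combine the spectral gap with the Riesz basis. On each fiber,
\[
\nr{e^{-itT} g}^2 \lesssim \sum_n \abs{e^{-it\l_n}}^2 \abs{\innp{g}{\p_n}}^2 \leq e^{-2\g t} \sum_n \abs{\innp{g}{\p_n}}^2 \lesssim e^{-2\g t}\nr{g}^2,
\]
where the two-sided equivalence is exactly the Riesz basis property. Since $e^{-it\abs{\x}^2}$ is unitary, the same bound holds for $\abs{\x}^2 + T$ uniformly in $\x$, and Plancherel in $x$ gives $\nr{u(t)}_{L^2(\O)} \leq C e^{-\g t} \nr{u_0}_{L^2(\O)}$. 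Alternatively, one may turn the gap into a uniform resolvent bound on a strip $\singl{\Im z \geq -\tilde\g}$ as in Theorem~\ref{th-gap-spectral} and conclude by the same quantitative Gearhart--Pr\"uss argument that yields Theorem~\ref{th-energy-decay} from Theorem~\ref{th-gap-spectral}.
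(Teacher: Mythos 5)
Your proposal is correct and follows essentially the same route as the paper: spectral analysis of the transverse operator via its characteristic equation (your tangent form is equivalent to the paper's \eqref{eq-z-ab}), perturbation from the Neumann spectrum showing the transverse eigenvalues have imaginary parts close to $-(\al+\ao)/\oo$ for the lowest mode and tending to $-2(\al+\ao)/\oo$ for high modes (hence a uniform spectral gap for small $\abs\al,\abs\ao$, with the same caveat as the paper that large couplings with $\al\ao<0$ destroy the gap), the Riesz basis property of the transverse eigenfunctions, and separation of variables plus unitarity in the $x$-directions to convert the gap into exponential decay of $\nr{u(t)}_{L^2(\O)}$. The remaining differences are cosmetic: you pass through a partial Fourier transform in $x$ and package well-posedness as an abstract $C_0$-group built from the Riesz basis, where the paper works directly with the expansion $u(t)=\sum_n \big(e^{-it(-\D_x+\srev_n(\al,\ao)^2)}u_{0,n}\big)\otimes\f_n(\al,\ao)$ and checks existence and uniqueness by hand.
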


Let us discuss the assumption that the absorption has to be small. In \cite{art-nondiss} it was proved in another context that for high frequencies the properties of a dissipative problem remain valid when the absorption index is positive on average along the corresponding classical flow. This is the case here under Assumption \eqref{sum-ab}. The restriction comes from low frequencies. When $a \neq 0$, the boundary condition can be rewritten 
\[
u = \frac 1 {ia} \partial_\n u \quad \text{on } \partial \O.
\]
When $a$ is large compared to the frequency, this is close to a Dirichlet condition. That for large dissipation we recover a self-adjoint problem is usually called the overdamping phenomenon. This suggests that the problem is now governed by the quantity $1/a$. And when $\al \ao < 0$, Assumption \eqref{sum-ab} can be rewritten as 
\[
 \frac 1 {\al}  + \frac 1 {\ao} < 0.
\]
It turns out that for $\al,\ao$ large enough with $\al \ao < 0$ and \eqref{sum-ab}, the contribution of low frequencies is indeed exponentially increasing (see Proposition \ref{prop-imvp-up}).\\

In this paper we prove all these results on the model case of a straight waveguide with a one-dimensional section. The purpose is on one hand to observe all the non-trivial phenomenons mentioned above on a quite simple example. On the other hand our analysis is the first step toward the understanding of similar properties for the wave equation on a more general domain.

On this model case, and with the additional assumption that $a$ is constant, we can rewrite $\Ha$ as the sum of the usual Laplacian on the Euclidean space $\R^{\dimm-1}$ and a Laplace operator on the compact section. Since this section is of dimension 1, we can give quite explicitely many spectral properties for this operator. In particular, we will see that its eigenfunctions form a Riesz basis, which will give a good description of the spectrum of $\Ha$, first when $a$ is constant and then in the general case. The time dependent estimate will follow.\\

In Section \ref{sec-operator} we prove that if $a \geq 0$ then $\Ha$ is maximal dissipative, which gives in particular well-posedness for the problem \eqref{schrodinger}. In Section \ref{sec-transverse} we study the transverse operator, that is the Laplace operator on the section $]0,\oo[$. Spectral properties of $\Ha$ are obtained for a constant absorption index in Section \ref{sec-separation}, and then Theorems \ref{th-gap-spectral} and \ref{th-smoothing} are proved in Section \ref{sec-a-variable}. Once the spectral properties of $\Ha$ are well-understood, the proof of Theorem \ref{th-energy-decay} is given in Section \ref{sec-time-decay}. Finally, the problem \eqref{schrodinger-nondiss} where $a$ can take negative values is discussed in Section \ref{sec-non-diss}.\\

All along the paper, a general point in $\O$ will be written $(x,y) \in \O \simeq \R^{\dimm-1} \times ]0,\oo[$, with $x \in \R^{\dimm-1}$ and $y \in ]0,\oo[$. As in Theorem \ref{th-smoothing}, we denote by $\D_x$ the usual laplacian on $\R^{\dimm-1}$. For Hilbert spaces $\Hc_1$ and $\Hc_2$, $\Lc(\Hc_1,\Hc_2)$ is the set of bounded operators from $\Hc_1$ to $\Hc_2$. For $\g > 0$ we finally set
\[
\C_\g = \singl{z \in \C \st \Im z > -\g}, \quad \text{and then} \quad \C_+ := \C_0.
\]

\bigskip 

\noindent  
{\bf Acknowledgements: } I am grateful to Petr Siegl for stimulating discussions which motivated this paper and helped me through its realization. This work is partially supported by the French-Czech BARRANDE Project 26473UL and by the French National Research Project NOSEVOL (ANR 2011 BS01019 01).

\section{Operator associated to the dissipative waveguide} \label{sec-operator}

\newcommand{\bHa}{\hat H_a}

In this section we consider a more general waveguide $\O$ of the form $\R^{p}\times \o \subset \R^\dimm$ where $p \in \Ii 1 {\dimm-1}$ and $\o$ is a smooth open bounded subset of $\R^{\dimm-p}$. In particular, $\O$ is open in $\R^\dimm$.\\

Let $a \in W^{1,\infty}(\partial \O)$. Until Proposition \ref{prop-max-diss}, we make no assumption on the sign of $a$. We consider on $L^2(\O)$ the operator $\Ha$ defined by \eqref{def-Ha} with domain \eqref{dom-Ha}.
For all $\f \in \Dom(\Ha)$ we have
\begin{equation} \label{eq-Ha-forme}
\innp{\Ha \f} \f_{L^2(\O)} = - \innp{\D \f}{\f}_{L^2(\O)} = \innp{\nabla \f}{\nabla \f}_{L^2(\O)} -i \innp{a \f}{\f}_{L^2(\partial \O)}.
\end{equation}
On the other hand, we consider the quadratic form defined for $\f \in \Dom(q_a) = H^1(\O)$ by
\[
q_a(\f) = \int_\O \abs {\nabla \f}^2 - i \int_{\partial \O} a\abs \f^2 .
\]
We also denote by $q_a$ the corresponding sesquilinear form on $\Dom(q_a)^2$. That this quadratic form is sectorial and closed follows from the following lemma and traces theorems:

\begin{lemma} \label{lem-sect-form}
Let $q_R$ be a non-negative, densely defined, closed form on a Hilbert space $\Hc$. Let $q_I$ be a symmetric form relatively bounded with respect to $q_R$. Then the form $q_R - i q_I$ is sectorial and closed.
\end{lemma}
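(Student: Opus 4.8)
The plan is to read off both properties directly from the hypothesis, without any real machinery. Unwinding the definition, relative boundedness of $q_I$ with respect to $q_R$ means there are constants $\a,\b \geq 0$ with $\Dom(q_R) \subseteq \Dom(q_I)$ and
\[
\abs{q_I(\f)} \leq \a\, q_R(\f) + \b \nr{\f}_\Hc^2, \qquad \f \in \Dom(q_R).
\]
Since $q_I$ is symmetric, $q_I(\f)$ is real, so on the common domain $\Dom(q_R - i q_I) = \Dom(q_R)$ one has $\Re\big((q_R - iq_I)(\f)\big) = q_R(\f) \geq 0$ and $\Im\big((q_R - iq_I)(\f)\big) = -q_I(\f) \in \R$. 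First I would use these two identities to locate the numerical range; then I would use the first of them to identify the form norm of $q_R - iq_I$ with that of $q_R$, and thereby conclude closedness.

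For sectoriality, the relative bound gives, for every $\f \in \Dom(q_R)$ with $\nr{\f}_\Hc = 1$,
\[
\abs{\Im\big((q_R - iq_I)(\f)\big)} \leq \a\, \Re\big((q_R - iq_I)(\f)\big) + \b,
\]
so the numerical range lies in the region $\{x - iy \st x \geq 0,\ \abs y \leq \a x + \b\}$. For any $\a,\b \geq 0$ this region is contained in a sector $\{z \st \abs{\arg(z - \g)} \leq \th\}$ with a real vertex $\g \leq 0$ and half-angle $\th < \frac\pi2$ (for $\a > 0$ one may take $\g = -\b/\a$ and $\th = \arctan \a$; for $\a = 0$ the region is a half-strip, which also fits into such a sector). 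Hence $q_R - iq_I$ is sectorial. The one conceptual point I want to stress is that \emph{no} smallness of the relative bound is needed here, because it is the imaginary (symmetric) part that is being perturbed: a larger bound merely widens the sector, which always stays strictly inside a half-plane.

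For closedness, the key remark is that $\Re\big((q_R - iq_I)(\f)\big) = q_R(\f)$, so the natural form norm $\f \mapsto \big( q_R(\f) + \nr{\f}_\Hc^2 \big)^{1/2}$ attached to $q_R - iq_I$ is literally the form norm of $q_R$. I would take a sequence $(\f_n)$ in $\Dom(q_R)$ that is Cauchy for this norm with $\f_n \to \f$ in $\Hc$; closedness of $q_R$ then provides $\f \in \Dom(q_R) = \Dom(q_R - iq_I)$ together with $q_R(\f_n - \f) \to 0$. The relative bound finishes the argument, since
\[
\abs{q_I(\f_n - \f)} \leq \a\, q_R(\f_n - \f) + \b \nr{\f_n - \f}_\Hc^2,
\]
and the right-hand side tends to $0$; hence $(q_R - iq_I)(\f_n - \f) \to 0$, which is exactly the closedness of $q_R - iq_I$.

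Overall I expect no serious obstacle: the whole proof is a matter of unwinding the definitions of sectorial and closed forms. The only steps requiring a little care are the vertex-and-angle bookkeeping for the numerical range and the observation — which is really the heart of the statement — that relative boundedness of the symmetric part by any \emph{finite} constant suffices, in contrast with the situation where one perturbs the real (nonnegative) part and would need a relative bound strictly less than $1$.
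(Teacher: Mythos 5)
Your proof is correct, but it takes a genuinely different route from the paper. The paper does not verify sectoriality and closedness by hand: it fixes $C$ with $\abs{q_I(\f)} \leq C\big(q_R(\f) + \nr{\f}_\Hc^2\big)$, sets $\e_0 = \tfrac{1}{2C}$, and bootstraps Kato's perturbation theorem for forms (Theorem VI.3.4 in \cite{kato}): if $q_R - i\l q_I$ is sectorial and closed, then adding $-i\e q_I$ with $\e \leq \e_0$ is a perturbation whose relative bound with respect to the \emph{real part} $q_R$ (which never changes along the iteration) is $\e C \leq \tfrac12 < 1$, so the theorem applies; induction on $n$ reaches $\l = 1$. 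This iteration trick is needed precisely because a single application of Kato's theorem to $q_R$ and $-iq_I$ would require the relative bound to be $< 1$, which is not assumed. Your argument instead proves the statement directly: the numerical range lies in $\{x - iy \st x \geq 0,\ \abs y \leq \a x + \b\}$, hence in a sector with real vertex and half-angle $< \tfrac\pi2$, and closedness follows because $\Re(q_R - iq_I) = q_R$ makes the form norms coincide (up to equivalence), with the relative bound transporting $q_R$-convergence to $q_I$-convergence. In effect you re-prove, in this purely imaginary special case, the perturbation theorem the paper cites. What each approach buys: the paper's is shorter on the page because the analytic content is black-boxed into a cited theorem, at the price of the induction device; yours is self-contained and makes geometrically transparent the point both proofs hinge on — that perturbing only the imaginary part merely widens the sector, so no smallness of the relative bound is ever needed. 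One small point of rigor in your closedness step: the definition of a closed sectorial form starts from a sequence with $(q_R - iq_I)[\f_n - \f_m] \to 0$, so you should note that taking real parts yields $q_R[\f_n - \f_m] \to 0$, which is what identifies $(q_R - iq_I)$-Cauchy sequences with the form-norm Cauchy sequences you work with; this is immediate, and not a gap.
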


It is important to note that there is no smallness assumption on the relative bound of $q_I$ with respect to $q_R$. In particular, for $q_a$ we do not need any assumption on the size of $a$ in $L^\infty(\partial \O)$.

\begin{proof}
There exists $C >0$ such that for all $\f \in \Dom(q_R)$ we have 
\[
\abs{q_I(\f)} \leq C \left( q_R(\f) + \nr{\f}^2_\Hc\right).
\]
Let $\e_0 = \frac 1 {2C}$. If we already know that $(q_R -i\l q_I)$ is sectorial and closed for some $\l \geq 0$, then $(q_R -i(\l+\e) q_I)$ is sectorial and closed for all $\e \in [0,\e_0]$ according to Theorem VI.3.4 in \cite{kato}. Now since $q_R$ is sectorial and closed, we can prove by induction on $n \in \N$ that $(q_R -i\l q_I)$ is sectorial and closed for all $\l \in [0,n\e_0]$, and hence for all $\l \geq 0$. This is in particular the case when $\l=1$.
\end{proof}

We recall the definitions of accretive and dissipative operators (note that the conventions may be different for other authors):

\begin{definition}
We say that an operator $T$ on the Hilbert space $\Hc$ is accretive (respectively dissipative) if
\[
\forall \f \in \Dom(T), \quad \Re \innp{T \f}\f_\Hc \geq 0, \quad \big(\text{respectively} \quad \Im \innp{T \f}\f _\Hc \leq 0\big).
\]
Moreover $T$ is said to be maximal accretive (maximal dissipative) if it has no other accretive (dissipative) extension on $\Hc$ than itself. In particuliar $T$ is (maximal) dissipative if and only if $iT$ is (maximal) accretive.
\end{definition}

Let us recall that an accretive operator $T$ is maximal accretive if and only if $(T-z)$ has a bounded inverse on $\Hc$ for some (and hence any) $z \in \C$ with $\Re(z) < 0$. In this case we know from the Hille-Yosida Theorem that $-T$ generates a contractions semi-group $t \mapsto e^{-tT}$. Then for all $u_0 \in \Dom(T)$ the map $u : t \mapsto e^{-tT}u_0$ belongs to $C^1(\R_+,\Hc) \cap C^0(\R_+,\Dom(T))$ and solves the problem 
\[
\begin{cases}
u'(t) + T u(t) = 0, \quad \forall t > 0,\\
u(0) = u_0.
\end{cases}
\]

Let us come back to our context. According to Lemma \ref{lem-sect-form} and the Representation Theorem VI.2.1 in \cite{kato}, there exists a unique maximal accretive operator $\bHa$ on $L^2(\O)$ such that $\Dom(\bHa) \subset \Dom(q_a)$ and 
\[
\forall \f \in \Dom(\bHa), \forall \p \in \Dom(q_a) , \quad \innp{\bHa \f} {\p}_{L^2(\O)} = q_a(\f,\p).
\]
Moreover we have
\[
\Dom(\bHa) = \singl{u \in \Dom(q_a) \st \exists f \in L^2(\O) , \forall \vf \in \Dom(q_a), q_a(u,\vf) = \innp{f}{\vf}},
\]
and for $u \in \Dom(\bHa)$ the corresponding $f$ is unique and given by $\bHa u = f$.

\begin{proposition} \label{prop-sectorial}
We have $\bHa = \Ha$. In particular $\Ha$ is maximal accretive.
\end{proposition}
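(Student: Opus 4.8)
The plan is to prove the two inclusions $\Ha \subset \bHa$ and $\bHa \subset \Ha$ separately. The operator $\Ha$ is accretive by \eqref{eq-Ha-forme} (the boundary term is purely imaginary, so $\Re \innp{\Ha\f}\f_{L^2(\O)} = \nr{\nabla\f}_{L^2(\O)}^2 \geq 0$), while $\bHa$ is maximal accretive by construction; hence one inclusion of domains together with maximality already forces equality, but it is cleaner to verify both directions directly. The easy direction is $\Ha \subset \bHa$, and the whole difficulty lies in the reverse inclusion, where elliptic regularity must be invoked.

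For $\Ha \subset \bHa$ I would take $\f \in \Dom(\Ha)$, so that $\f \in H^2(\O)$ and $\partial_\n \f = i a \f$ on $\partial\O$. Green's formula gives, for every $\p \in \Dom(q_a) = H^1(\O)$,
\[
\innp{-\D\f}{\p}_{L^2(\O)} = \int_\O \nabla\f\cdot\overline{\nabla\p} - \int_{\partial\O}(\partial_\n\f)\,\overline\p = \int_\O \nabla\f\cdot\overline{\nabla\p} - i\int_{\partial\O} a\f\,\overline\p = q_a(\f,\p),
\]
where the boundary condition was used in the second equality. Thus $\f \in \Dom(q_a)$ and $q_a(\f,\p) = \innp{f}{\p}_{L^2(\O)}$ for all $\p \in \Dom(q_a)$ with $f = -\D\f \in L^2(\O)$. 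By the characterization of $\Dom(\bHa)$ recalled above, this means $\f \in \Dom(\bHa)$ and $\bHa\f = -\D\f = \Ha\f$.

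For the reverse inclusion I would take $u \in \Dom(\bHa)$ and set $f = \bHa u$, so that $q_a(u,\p) = \innp{f}{\p}_{L^2(\O)}$ for all $\p \in H^1(\O)$. Testing against $\p \in C_c^\infty(\O)$ first yields $-\D u = f$ in the sense of distributions, hence $-\D u = f \in L^2(\O)$. The crux is then to upgrade this to $u \in H^2(\O)$: this is elliptic regularity up to the boundary for the Robin (oblique derivative) condition, and it is the main obstacle. Once $u \in H^2(\O)$ is established, Green's formula applies again, and comparing it with the form identity gives $\int_{\partial\O}(\partial_\n u - i a u)\,\overline\p = 0$ for all $\p \in H^1(\O)$; since the traces of $H^1(\O)$ functions are dense in $L^2(\partial\O)$, this forces $\partial_\n u = i a u$ on $\partial\O$. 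Therefore $u \in \Dom(\Ha)$ and $\Ha u = -\D u = f = \bHa u$, which completes the proof.

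I expect the $H^2$ regularity to be the delicate point, and I would handle it by exploiting the product structure $\O = \R^p \times \o$ with $\o$ smooth and bounded. In the unbounded variables one gains regularity for free by translation invariance (difference quotients in the $\R^p$ directions commute with the equation and the boundary condition), so it remains to control the two transverse derivatives near the smooth boundary $\partial\O = \R^p \times \partial\o$. There one localizes, flattens the boundary, and uses that the Robin condition is coercive (it satisfies the Lopatinskii--Shapiro condition), the zeroth-order term $i a u$ with $a \in W^{1,\infty}(\partial\O)$ being a harmless lower-order perturbation. This is standard elliptic boundary regularity; the only care needed is the uniformity of the estimates in the noncompact directions, which the product structure guarantees.
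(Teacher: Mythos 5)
Your proposal is correct and its skeleton coincides with the paper's: the easy inclusion $\Ha \subset \hat H_a$ via Green's formula, then, for $u$ in the form domain, interior regularity by testing against $C_0^\infty(\O)$, difference quotients in the $\R^p$ directions to gain all second derivatives involving an $x$-derivative, and finally regularity in the transverse variables. The one step where you genuinely diverge is that last one. You appeal to general Robin elliptic regularity (localization, boundary flattening, Lopatinskii--Shapiro), establishing $u \in H^2(\O)$ first and recovering the boundary condition afterwards from density of traces. The paper goes the other way around: from the Green formula, valid already for $u \in H^1(\O)$ with $\D_y u \in L^2(\O)$, and the form identity, it first identifies the weak normal trace, $\partial_\n u = iau$ as an identity in $H^{-1/2}(\partial\O)$; it then observes that the right-hand side actually lies in $H^{1/2}(\partial\O)$, lifts it by some $v \in H^2(\O)$, and is left with a homogeneous Neumann problem which it solves slice-by-slice on the bounded cross-section $\o$ for almost every $x$, integrating the resulting estimates over $\R^p$. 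The paper's reduction thus rests only on Neumann regularity for the fixed bounded domain $\o$, and the uniformity in the noncompact directions is automatic from the slicing; your route is more standard-looking, but the uniform-in-$x$ localized estimates on the noncompact boundary are precisely where the work lies, and your sketch delegates this to ``the product structure'' without carrying it out.

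Two minor imprecisions, neither fatal since you do prove both inclusions. First, your parenthetical shortcut ``one inclusion of domains together with maximality already forces equality'' is only valid for the \emph{hard} inclusion $\hat H_a \subset \Ha$: an accretive extension of a maximal accretive operator equals it. The easy inclusion $\Ha \subset \hat H_a$ plus maximality of $\hat H_a$ forces nothing, because a maximal accretive operator has many proper accretive restrictions. Second, difference quotients do \emph{not} commute with the boundary form when $a$ is variable: there is a commutator term involving the difference quotient of $a$, and controlling it is exactly where the hypothesis $a \in W^{1,\infty}(\partial \O)$ enters; the paper's computation of $q_a(u_\d,\vf)$ displays this term explicitly.
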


For a one-dimensional section we can essentially follow the proof of Lemma 3.2 in \cite{borisovk08}. This would be enough for our purpose but, for further use, we prove this result in the general setting.

\begin{proof}
It is easy to check that $\Dom(\Ha) \subset \Dom(\bHa)$ and $\Ha = \bHa$ on $\Dom(\Ha)$. Now let $u \in \Dom(\bHa)$.
By definition there exists $f \in L^2(\O)$ such that 
\[
\forall \vf \in H^1(\O), \quad \int_{\O} \nabla u \cdot\nabla \bar {\vf} -i \int_{\partial \O} a u \bar \vf = \int_{\O} f \bar \vf.
\]
Considering $\vf \in C_0^\infty(\O)$ we see that $-\D u = f$ in the sense of distributions and hence in $L^2(\O)$. This proves that $u \in H^2_\loc(\O)$. It remains to prove that $u \in H^2(\O)$ and that the boundary condition $\partial_\n u = i a u$ holds on $\partial \O$.
Let $j \in \Ii 1 {p}$ and let $e_j$ be the $j$-th vector in the canonical basis of $\R^{p}$. Let $\d \in \R^*$ and $u_\d : (x,y) \mapsto  \frac 1 \d ( u(x + \d e_j , y ) - u(x,y)) \in H^1(\O)$. See for instance \cite[\S 5.8.2]{evans} for the properties of the difference quotients. For all $\vf \in H^1(\O)$ we have
\begin{align*}
q_a(u_\d , \vf) = -\int_\O f(x,y) \bar {\vf_{-\d}} (x,y) \, dx \,d\s(y) - i  \int_{\partial \O} u (x,y) a_{-\d}(x,y) \bar \vf (x-\d e_j,y) \, dx \, d\s(y),
\end{align*}
where $\s$ is the Lebesgue measure on $\o$. Since $a \in W^{1,\infty}(\partial \O)$ there exists $C \geq 0$ such that for all $\vf \in H^1(\O)$ and $\d > 0$ we have
\[
\abs{q_a (u_\d, \vf)} \leq C \nr {\vf}_{H^1(\O)}.
\]
Applied with $\vf = u_\d$ this gives 
\[
\nr{u_\d}_{\dot H^1(\O)}^2 = \Re q_a (u_\d,u_\d) \leq C \nr {u_\d}_{H^1(\O)}.
\]
Since we already know that $u_\d \in L^2(\O)$ uniformly in $\d > 0$, we have
\[
\nr{u_\d}_{H^1(\O)}^2  \lesssim 1 + \nr {u_\d}_{H^1(\O)},
\]
which implies that $u_\d$ is uniformly in $H^1(\O)$. This means that $\partial _{x_j}  u \in H^1(\O)$. Since this holds for any $j \in \Ii 1 {p}$, this proves that all the derivatives of order 2 with at least one derivative in the first $p$ directions belong to $L^2(\O)$. Then we get
\[
-\D_y u = f + \D_x u \in L^2(\O).
\]
According to the Green Formula we have for all $\vf \in H^1(\O)$
\begin{align*}
\int_{\O} - \D_y u \,  \bar \vf \, dx \, dy = \int_{\O} \nabla_y u \cdot \nabla_y \bar \vf \, dx \, dy -  \innp{\partial_\n u}{\vf}_{H^{-1/2}(\partial \O),H^{1/2}(\partial \O)}
\end{align*}
(see for instance \cite[eq. (1.5.3.10)]{grisvard}). By density of the trace map, we obtain that 
\begin{equation} \label{cond-bord}
\partial _\n u = ia u \quad \text{on } \partial \O.
\end{equation}
In particular $\partial_\n u \in H^{1/2}(\partial \O)$. Then there exists $v \in H^2(\O)$ such that $\partial_\n v = \partial _\n u$ (see \cite[Th. 1.5.1.1]{grisvard} for a fonction on $\R^\dimm_+$ ; for a function on $\O$ we follow the same idea as for a fonction on a bounded domain, except that we only use a partition of unity for $\partial \o$, which allows to cover $\partial \O$ by a finite number of strips, each of which is diffeomorphic to a strip on $\R^{p}$). Let $w = u - v$. We have $w \in H^1(\O)$, $\D_y w \in L^2(\O)$ and 
\[
\begin{cases}
-\D_y w + w = f + \D_x u - \D_y v + w & \text{on } \O, \\
 \partial_\n w = 0& \text{on } \partial \O.
\end{cases}
\]
Then for almost all $x \in \R^{p}$ we have 
\[
\begin{cases}
-\D_y w(x) + w(x) = f(x) + \D_x u(x) - \D_y v(x) + w(x) & \text{on } \o, \\
 \partial_\n w(x) = 0& \text{on } \partial \o.
\end{cases}
\]
By elliptic regularity for the Neumann problem (see for instance Theorem 9.26 in \cite{brezis}) we obtain that $w(x) \in H^2(\o)$ with 
\[
\nr{w(x)}_{H^2(\o)} \lesssim \nr{f(x) + \D_x u(x) - \D_y v(x) + w(x)}_{L^2(\o)}.
\]
After integration over $x \in \R^{p}$, this gives 
\[
\nr{u}_{H^2_y(\O)} \lesssim \nr{f+ \D_x u - \D_y v + w}_{L^2(\O)} + \nr{v}_{H^2(\O)}.
\]
Since we already know that second derivatives of $u$ involving a derivation in $x$ are in $L^2(\O)$, this proves that $u \in H^2(\O)$ and concludes the proof.
\end{proof}

\begin{remark} \label{rem-H-a}
We have $\Ha^* = H_{-a}$.
\end{remark}

Now assume that $a$ takes non-negative values. According to \eqref{eq-Ha-forme}, $H_a$ is a dissipative operator. Since it is maximal accretive, it is easy to see that it is in fact maximal dissipative:

\begin{proposition} \label{prop-max-diss}
The maximal accretive operator $\Ha$ is also maximal dissipative.
\end{proposition}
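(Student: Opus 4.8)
The plan is short: combine the dissipativity of $\Ha$, which is immediate from \eqref{eq-Ha-forme}, with the maximal accretivity already established in Proposition \ref{prop-sectorial}, using the resolvent characterization of maximality. First I would record that $\Ha$ is dissipative. Taking imaginary parts in \eqref{eq-Ha-forme} and using $a \geq 0$, every $\f \in \Dom(\Ha)$ satisfies
\[
\Im \innp{\Ha \f}{\f}_{L^2(\O)} = - \int_{\partial \O} a \abs{\f}^2 \leq 0 ,
\]
so $\Ha$ is dissipative, that is $i\Ha$ is accretive.

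Next I would translate the maximality criterion recalled before the proposition into its dissipative form. Since $\Ha$ is (maximal) dissipative precisely when $i\Ha$ is (maximal) accretive, and since an accretive operator is maximal accretive as soon as $(i\Ha - w)$ has a bounded inverse on $L^2(\O)$ for some $w$ with $\Re w < 0$, it suffices to produce one such $w$. Writing $z = w/i$, one has $i\Ha - w = i(\Ha - z)$, and the condition $\Re w < 0$ is equivalent to $\Im z > 0$. Hence $\Ha$ will be maximal dissipative as soon as $(\Ha - z)$ has a bounded inverse for a single $z$ with $\Im z > 0$.

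The final observation is that no further analysis is needed. By Proposition \ref{prop-sectorial}, $\Ha$ is maximal accretive, so $(\Ha - z)$ is boundedly invertible for every $z$ with $\Re z < 0$. The set $\singl{z \in \C \st \Re z < 0, \ \Im z > 0}$ is nonempty (take for instance $z = -1 + i$), and any of its points has $\Im z > 0$ while lying in the resolvent set of $\Ha$. Such a $z$ therefore meets the criterion of the previous paragraph, and $\Ha$ is maximal dissipative. The only step deserving care — and the sole ``obstacle'' — is the bookkeeping of half-planes: one must notice that the left half-plane, supplied for free by maximal accretivity, and the upper half-plane, required for maximal dissipativity, overlap in the second quadrant, so that a single resolvent point does double duty and there is no need to prove surjectivity of $(\Ha - z)$ in the upper half-plane directly.
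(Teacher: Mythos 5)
Your proof is correct and follows exactly the paper's argument: dissipativity is read off from \eqref{eq-Ha-forme}, and maximality follows because maximal accretivity puts the whole left half-plane in the resolvent set, which overlaps the upper half-plane needed for the dissipative criterion (e.g. at $z=-1+i$). The only difference is that you spell out the half-plane bookkeeping that the paper leaves as ``it is easy to find $z$''.
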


\begin{proof}
We already know that $\Ha$ is dissipative. Since it is maximal accretive, any $z \in \C$ with $\Re z < 0$ is in its resolvent set. Then it is easy to find $z$ in the resolvent set of $\Ha$ with $\Im z > 0$.
\end{proof}

\begin{proposition} \label{prop-realvp}
If $a > 0$ in an open subset of $\partial \O$ then $\Ha$ has no real eigenvalue.
\end{proposition}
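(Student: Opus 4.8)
The plan is to argue by contradiction, combining the quadratic form identity \eqref{eq-Ha-forme} with a unique continuation argument. Suppose $\l \in \R$ is an eigenvalue of $\Ha$, with eigenfunction $\f \in \Dom(\Ha)$, $\f \neq 0$. Then $-\D \f = \l \f$ on $\O$ and, by membership in $\Dom(\Ha)$, the boundary condition $\partial_\n \f = i a \f$ holds on $\partial \O$.

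First I would extract a boundary vanishing condition from the form identity. Applying \eqref{eq-Ha-forme} to this $\f$ and using $\Ha \f = \l \f$ gives
\[
\l \nr{\f}_{L^2(\O)}^2 = \nr{\nabla \f}_{L^2(\O)}^2 - i \int_{\partial \O} a \abs{\f}^2 .
\]
Since $\l$, $\nr{\f}_{L^2(\O)}^2$ and $\nr{\nabla \f}_{L^2(\O)}^2$ are all real, comparing imaginary parts forces $\int_{\partial \O} a \abs{\f}^2 = 0$. As $a \geq 0$ everywhere and $a > 0$ on the open subset $U \subset \partial \O$, the nonnegative integrand must vanish a.e., so in particular the trace of $\f$ vanishes (a.e., hence as an element of the trace space since $\f \in H^2(\O)$) on $U$.

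Next I would feed this back into the boundary condition. On $U$ we now have both $\f = 0$ and, since $a\f = 0$ there, $\partial_\n \f = i a \f = 0$; that is, the full Cauchy data of $\f$ vanish on the open piece $U$ of $\partial \O$. Because $\f$ solves the second order elliptic equation $(-\D - \l)\f = 0$ in $\O$, unique continuation from the boundary applies. The cleanest route is to extend $\f$ by $0$ across $U$ to a slightly larger open set $\tilde \O$: the matching of both the trace and the normal derivative of $\f$ across $U$ guarantees that the extension $\tilde \f$ lies in $H^2(\tilde \O)$ and again satisfies $(-\D - \l)\tilde \f = 0$, while vanishing on a nonempty open subset of $\tilde \O$. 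Interior unique continuation for $-\D - \l$ then gives $\tilde \f \equiv 0$, whence $\f \equiv 0$ on $\O$, contradicting $\f \neq 0$.

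The main obstacle is the rigorous justification of this last step, namely passing from vanishing Cauchy data on $U$ to $\f \equiv 0$. The extension-by-zero reduction turns the boundary statement into interior unique continuation for an operator with constant (in particular real-analytic) leading coefficients, which is classical (Aronszajn, or Holmgren if one only needs a local statement). The two points to verify carefully are that the extended function really belongs to $H^2$ of the enlarged domain—this is exactly where the vanishing of \emph{both} $\f$ and $\partial_\n \f$ on $U$ is used—and that the open set on which $\o$ is replaced by $\tilde\O$ can be chosen so that $\O$ remains on one side of $U$, which is immediate here since $\partial \O$ is smooth.
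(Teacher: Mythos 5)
Your proof is correct and follows essentially the same route as the paper: take the imaginary part of $q_a(u,u)=\l \nr{u}^2$ to get $\int_{\partial \O} a\abs{u}^2 = 0$, deduce that both $u$ and $\partial_\n u = iau$ vanish on the open set where $a>0$, and conclude by unique continuation. The only difference is one of detail: the paper invokes unique continuation from boundary Cauchy data in a single line, while you spell out the standard justification (extension by zero across the boundary piece into an enlarged domain, then interior unique continuation for $-\D-\l$), which is a legitimate and careful way to fill in that step.
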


\begin{proof}
Let $u \in \Dom(\Ha)$, $\l \in \R$, and assume that $\Ha u = \l u$. Taking the imaginary part of the equality $q_a (u,u) = \l \nr u ^2$ gives 
\[
\int_{\partial \O} a \abs{u}^2 = 0.
\]
This implies that $u = 0$ where $ a \neq 0$ and $\partial _\n u = ia u = 0$ everywhere on $\partial \O$. By unique continuation, this implies that $u = 0$ on $\O$.
\end{proof}

\section{The Transverse Operator} \label{sec-transverse}

\newcommand{\Ta} {T_a}
\newcommand{\Tam} {T_{a_m}}
\newcommand{\HoI} {T_0}
\newcommand{\srev}{\l}

Let us come back to the case of a one-dimensional cross-section $\o = ]0,\oo[$. Under the additional assumption that the absorption index $a$ is constant on $\partial \O$ the operator $\Ha$ can be written as 
\begin{equation} \label{separation-variable-Ha}
\Ha = -\D_x \otimes \Id_{L^2(0,\oo)} + \Id _{L^2(\R^{\dimm-1})} \otimes  \Ta,
\end{equation}
where $-\D_x$ is as before the usual flat Laplacian on $\R^{\dimm-1}$ and $\Ta$ is the transverse Laplacian on $]0,\oo[$. More precisely, we consider on $L^2(0,\oo)$ the operator $\Ta = - \frac {d^2} {dy^2}$ with domain
\[
\Dom(\Ta) = \singl{u \in H^2(0,\oo) \st u'(0) = -ia u(0), u'(\oo) = ia u(\oo)}.
\]
This is the maximal accretive and dissipative operator corresponding to the form 
\[
q : u \in H^1(0,\oo) \mapsto \int_{0}^{\oo} \abs{u'(x)}^2 \, dx - i a \abs{u(\oo)}^2 - i a \abs{u(0)}^2.
\]

In this section we give the spectral properties of $\Ta$ which we need to study the full operator $\Ha$. This operator has compact resolvent, and hence its spectrum is given by a sequence of isolated eigenvalues. When $a = 0$, which corresponds to the Neumann problem, we know that the eigenvalues of $T_0$ are the real numbers $n^2 \n^2$ for $n \in \N$, where we have set
\[
\n = \frac \pi { \oo}.
\]
These eigenvalues are algebraically simple.

\begin{proposition} \label{prop-srev}
There exists a sequence $\seq \srev n$ of continuous functions on $\R$ such that $\srev_n(0) = n \n$ and for all $a \in \R$ the set of eigenvalues of $\Ta$ is $\singl{\srev_n(a)^2, n\in\N}$. Moreover:
\begin{enumerate}[(i)]
\item For $(n,a) \in (\N \times \R) \setminus \{(0,0)\}$ the eigenvalue $\srev_n(a)^2$ is algebraically simple and a corresponding eigenvector is given by
\begin{equation} \label{def-phi}
\f_{n}(a) : x \mapsto   A_n(a)  \left( e^{i\srev_n(a) x} + \frac {\srev_n(a) + a}{\srev_n(a) - a} e^{-i \srev_n(a) x} \right) ,
\end{equation}
where we can choose $A_n(a) \in \R_+^*$ in such a way that $\nr{\f_n(a)}_{L^2(0,\oo)} = 1$ (when $a = 0$ then 0 is a simple eigenvalue and corresponding eigenvectors are non-zero constant functions).
\item For $n \in \N$ and $a \in \R$ we have $\srev_n(-a)= \bar {\srev_n(a)}$.

\item Let $n\in\N$. For all $a \in \R^*$ we have $\Re (\l_n(a)) \in ] n\n,(n+1)\n[$ (when $n=0$, we have chosen the square root of $\srev_0^2(a)$ which has a positive real part).
\item For all $n \in \N$ there exists $C_n > 0$ such that for $a > 0$ we have $-C_n < \Im (\srev_n(a)) < 0$. 
\item 
Let $a > 0$ be fixed. We have 
\[
\srev_n(a) = n \n - \frac {2ia}{n\n \oo} + \bigo n {+\infty} \big(n^{-2} \big)
\]
and hence
\[
\srev_n(a)^2 = (n\n)^2 - \frac {4ia}{\oo} + \bigo n {+\infty} \big(n^{-1} \big).
\]
\end{enumerate}
\end{proposition}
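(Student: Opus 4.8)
The plan is to reduce everything to the explicit transcendental equation satisfied by the eigenvalues and then read off the six assertions from it. First I would solve the eigenvalue ODE: if $\srev^2$ is an eigenvalue (with $\srev$ a chosen square root) and $\srev \neq \pm a$, then any solution of $-u'' = \srev^2 u$ has the form $u = \alpha e^{i\srev y} + \beta e^{-i\srev y}$, and the boundary condition $u'(0) = -iau(0)$ forces $\beta/\alpha = (\srev+a)/(\srev-a)$, which is exactly the shape of $\f_n(a)$ in \eqref{def-phi}. Inserting this into the condition at $y=\oo$ and simplifying gives the characteristic equation
\[
i(\srev^2 + a^2)\sin(\srev\oo) = 2a\srev\cos(\srev\oo),
\]
equivalently $F(\srev) := i(\srev^2+a^2)\sin(\srev\oo) - 2a\srev\cos(\srev\oo) = 0$. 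The function $F$ is entire and odd, so its zeros come in pairs $\pm\srev$ and correspond to eigenvalues $\srev^2$; I would check separately that $\srev=\pm a$ and (for $a\neq 0$) $\srev=0$ give no eigenvalue, so that \eqref{def-phi} is always legitimate. At $a=0$ the equation collapses to $\sin(\srev\oo)=0$, recovering the Neumann roots $\srev=n\n$.

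I would then define $\srev_n(a)$ by analytic continuation from the simple roots $\srev=n\n$ at $a=0$ via the implicit function theorem. The mechanism that makes this globally well defined is \emph{simplicity}: I will show that every eigenvalue is a simple zero of $F$, so $\partial_\srev F\neq 0$ there and distinct branches can never collide. To prove that the $\srev_n$ exhaust the spectrum I would run a Rouch\'e argument comparing $F$ with its leading term $i\srev^2\sin(\srev\oo)$ on large contours: the remainder $ia^2\sin(\srev\oo) - 2a\srev\cos(\srev\oo)$ is of lower order relative to $i\srev^2\sin(\srev\oo)$ as $\abs\srev\to\infty$ uniformly in direction (since $\abs{\sin(\srev\oo)}$ and $\abs{\cos(\srev\oo)}$ are each comparable to $e^{\abs{\Im\srev}\oo}$), so each vertical strip contains, for large index, exactly one zero with $\Re\srev>0$; the finitely many low-lying eigenvalues are then pinned down by continuity from the completely understood Neumann spectrum at $a=0$. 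This completeness/counting step is the part I expect to demand the most care.

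Assertion (ii) is immediate: conjugating $F(\srev)=0$ and using that $a,\oo$ are real shows $\bar\srev$ solves the $(-a)$-equation, so $\srev_n(-a)=\overline{\srev_n(a)}$ once indices are matched, which also continues $\srev_n$ to $a<0$. For the imaginary part in (iv) I would use the form: if $\Ta u = \srev^2 u$ then $\srev^2\nr u^2 = q(u) = \int_0^\oo\abs{u'}^2 - ia(\abs{u(0)}^2+\abs{u(\oo)}^2)$, so $\Re(\srev^2)\ge 0$ and, for $a>0$, $\Im(\srev^2)<0$ strictly (equality would give $u(0)=u(\oo)=0$, hence $u'(0)=u'(\oo)=0$ and $u\equiv 0$); with the convention $\Re\srev_n>0$ this places $\srev_n$ in the open fourth quadrant, giving $\Im\srev_n(a)<0$. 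For (iii), writing $\srev_n=\xi-i\eta$ with $\eta>0$ and taking the argument of $e^{2i\srev\oo}=((\srev+a)/(\srev-a))^2$ yields $\xi\oo \equiv \arg(\srev+a)-\arg(\srev-a) \pmod\pi$, and for $a>0,\eta>0$ that difference lies in $]0,\pi[$; hence $\xi$ sits strictly inside a single strip $]m\n,(m+1)\n[$, and continuity with $\srev_n(0)=n\n$ forces $m=n$.

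Finally, for algebraic simplicity (i), geometric simplicity is clear because the condition at $0$ cuts the two-dimensional solution space to one dimension. Since $\Ta^*=T_{-a}$ (as in Remark \ref{rem-H-a}) and, by (ii), the adjoint eigenvector for the conjugate eigenvalue is a multiple of $\overline{\f_n(a)}$, the eigenvalue $\srev_n(a)^2$ is algebraically simple iff $\int_0^\oo \f_n(a)^2\neq 0$. Using $e^{2i\srev\oo}=((\srev+a)/(\srev-a))^2$ this integral collapses to a nonzero multiple of $\oo(\srev_n^2-a^2)-2ia$, which cannot vanish because for $a>0$ one has $\Im(\srev_n^2)<0$ while $\Im(a^2+2ia/\oo)=2a/\oo>0$. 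For the expansion (v) I would substitute $\srev\oo=n\pi+w$ into the equation, where $\tan w = -2ia\srev/(\srev^2+a^2)$; for large $n$ the right-hand side equals $-2ia/(n\n\oo)$ up to $O(n^{-2})$, and a contraction argument gives the stated expansion of $\srev_n(a)$, whence squaring gives that of $\srev_n(a)^2$. The lower bound $\Im\srev_n(a)>-C_n$ in (iv) follows from the modulus relation $e^{2\eta\oo}=1+4a\xi/((\xi-a)^2+\eta^2)$: with $\xi$ bounded by (iii), the fraction tends to $0$ as $a\to+\infty$ (overdamping) and $\eta\to 0$ also as $a\to 0^+$, so the continuous function $\eta$ stays bounded on $]0,+\infty[$.
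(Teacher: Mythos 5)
Your proposal is correct in substance, and it takes a genuinely different route from the paper's. The paper never isolates a characteristic function: it treats $(T_a)_{a\in\R}$ as an analytic family of type B in the sense of Kato, defines the branches $\l_n(a)$ by analytic perturbation from the simple Neumann eigenvalues, confines $\Re\, \l_n(a)$ to the strip $]n\n,(n+1)\n[$ by combining a Taylor expansion at $a=0$ (whose $a^2$-coefficient has positive real part) with the fact that no root can have real part in $\n\N$ (deduced from the absence of real eigenvalues, Proposition \ref{prop-realvp}), proves the bound in (iv) by a compactness/contradiction argument, and obtains (v) by differentiating \eqref{eq-z} in $a$ to get $\sup_{s\in[0,a]}\abs{\l_n'(s)}=O(n\inv)$; exhaustion of the spectrum and algebraic simplicity then come from the perturbation-theoretic framework (bounded, non-colliding analytic branches). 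You instead work with the entire function $F$ throughout: Rouch\'e counting for exhaustion, a direct argument computation ($\arg\frac{\l+a}{\l-a}\in\,]0,\pi[$ when $a>0$, $\Im\l<0$) for the strips, the explicit biorthogonality criterion $\int_0^\oo\f_n(a)^2\neq0$ for algebraic simplicity --- which you correctly reduce to the nonvanishing of $\oo(\l_n^2-a^2)-2ia$ --- and a fixed-point argument on $\tan w=-2ia\l/(\l^2+a^2)$ for the asymptotics. I checked these computations; they are right, and your treatment of (i)--(iv) is more self-contained and explicit than the paper's, at the price of the contour estimates the paper's soft argument avoids.

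Two steps need repair. First, your justification of the Rouch\'e comparison is false as stated: $\abs{\sin(\l\oo)}$ and $\abs{\cos(\l\oo)}$ are \emph{not} comparable to $e^{\abs{\Im\l}\oo}$ uniformly in direction, since both vanish at points of the real axis where $e^{\abs{\Im\l}\oo}=1$; on an arbitrary large contour the inequality $\abs{F-G}<\abs{G}$ can therefore fail. The standard fix is to run Rouch\'e only on contours staying at distance $\geq\d$ from the lattice $\n\Z$ (squares through half-lattice points for the global count, small circles around $n\n$ for the localization), on which one does have $\abs{\sin(\l\oo)}\geq c_\d\, e^{\abs{\Im\l}\oo}$; you should also account explicitly for the spurious simple zero of $F$ at $\l=0$ when $a\neq0$. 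Second, your continuation scheme invokes ``every eigenvalue is a simple zero of $F$'' so that the implicit function theorem applies along each branch, but what you actually prove is algebraic simplicity of the eigenvalue. These two statements are equivalent, but only via the classical Wronskian identity: if $u(\l,\cdot)$ denotes the solution of $-u''=\l^2u$ normalized by the boundary condition at $0$ and $W(\l)$ is the resulting characteristic function (a nonzero multiple of $F$), then at a root $\l_0$ one has $W'(\l_0)\,u(\l_0,\oo)=-2\l_0\int_0^\oo u(\l_0,y)^2\,dy$ with $u(\l_0,\oo)\neq0$. This identity is exactly the missing bridge between your simplicity criterion and the nondegeneracy $\partial_\l F\neq0$ that your continuation needs, so it must be stated and proved (a two-line Green's formula computation); alternatively, replace the implicit function theorem by Hurwitz-type continuity of the zeros, organized by the strip confinement (iii) and your counting argument.
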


\begin{proof}

\stepp It is straightforward computations to check that 0 is an eigenvalue of $\Ta$ if and only if $a = 0$ and, if $\srev \in \C^*$, $\l^2$ is an eigenvalue if and only if 
\begin{equation} \label{eq-z}
(a-\srev )^2 e^{2i\srev \oo} = ( a+\srev )^2.
\end{equation}
If $\l^2$ is an eigenvalue then the corresponding eigenfunction is of the form $\f : x \mapsto  A e^{i\srev x} + B e^{-i\srev x}$ with 
\begin{equation} \label{lien-AB}
A = \frac {\srev -a}{\srev +a} B = \frac {\srev + a}{\srev -a} e^{-2i\srev \oo}B.
\end{equation}
Moreover, all these eigenvalues have geometric multiplicity 1. Indeed, given $n \in \N$, the space of eigenvectors corresponding to the eigenvalue $\srev_n(a)^2$ is strictly included in the space of $H^2$ functions which are solutions of $-u'' - \srev_n(a)^2 u = 0$, and this space is of dimension 2. The fact that the eigenvalues of $H_{-a}$ are conjugated to the eigenvalues of $H_{a}$ is a consequence of Remark \ref{rem-H-a}.

\stepp Let $a > 0$ and $\srev \in \C^*$ be such that $\srev^2$ is an eigenvalue of $\Ta$. Assume that $\Re\srev \in \n \N$. Then
\[
\left( \frac {a+\l}{a-\l} \right)^2 = e^{2i\l\oo} \in \R_+
\]
(note that $\srev$ cannot be equal to $a$ in \eqref{eq-z}) and hence 
\[
r := \frac {a+\l}{a-\l} \in \R.
\]
If $r = -1$ then $a = 0$. Otherwise $\l =  \frac{a(1-r)}{1+r} \in \R$. In both cases we obtain a contradiction (see Proposition \ref{prop-realvp}), and hence $\Re\srev \notin \n \N$. This proves that for $a > 0$ the operator $\Ta$ has no eigenvalue with real part in $\n \N$.

\stepp Now let $R > 0$. We prove that if $C_R \geq 0$ is chosen large enough and if $a \in\R$ and $\srev \in \C^*$ are such that $\srev^2$ is an eigenvalue of $\Ta$, then 
\begin{equation} \label{eq-ImRe-srev}
\abs{\Re \srev} \leq R \quad \implies \quad \abs {\Im \srev} \leq C_R.
\end{equation}
Assume by contradiction that this is not the case. Then for all $m \in \N$ we can find $x_m \in [-R,R]$ and $y_m \in \R$ with $\abs {y_m} \geq m$ such that $(x_m+ i y_m)^2$ is an eigenvalue of $\Tam$ for some $a_m \in\R$. We have 
\[
e^{-2y_m \oo} = \abs{\frac {a_m + x_m + iy_m}{a_m - x_m - iy_m}}^2 = \frac {(a_m + x_m)^2 + y_m^2}{(a_m - x_m)^2 + y_m^2} \limt m \infty 1,
\]
which gives a contradiction.

\stepp The family of operators $\Ta$ for $a \in \R$ is an analytic family of operators of type B in the sense of Kato \cite[\S VII.4]{kato}. We already know that the spectrum of $\HoI$ is given by $\singl{(n\n)^2, n \in \N}$. Then for all $n \in \N$ there exists an analytic function $\l_n^2$ such that, at least for small $a$, $\l_n^2(a)$ is in the spectrum of $\Ta$ (and then we define $\srev_n$ as the square root of $\srev_n^2$ with positive real part). See Theorem VII.1.7 in \cite{kato}.

\stepp
Let $n \in \N^*$. We write $\srev_n(a) = n\n + \b a + \g a^2 + O_{a \to 0} (a^3)$. We have
\[
e^{2i \srev_n(a) \oo} 
= 1 + 2i\oo \b a + 2i\oo \g a^2 -  2\oo^2 \b^2 a^2 + O(a^3),
\]
and on the other hand:
\begin{align*}
\left( \frac {\srev_n(a) + a} {\srev_n(a) - a} \right)^2
= 1 + \frac {4a}{n\n} -  \frac {4(\b-2) a^2} {n^2\n^2} + O\big(a^3\big).
\end{align*}
\noindent 
Since $\srev_n(a)$ solves \eqref{eq-z} for any $a > 0$ we obtain
\[
\b = \frac 2 {i\oo n\n} = -\frac {2i}{\pi n}
\]
and
\[
\Re (\g)
= \frac {4 \oo}  {n^3\pi^3} .
\]
Since $\Re (\b) = 0$ and $\Re (\g) > 0$ we have $\Re \big(\srev_n(a)\big) \in \left] n\n , (n+1)\n \right[$ for $a > 0$ small enough. The functions $a \mapsto \Re \big(\srev_n(a)\big)$ are continuous and cannot reach $\n \N$ unless $a=0$, so this remains true for any $a > 0$ such that $\srev_n(a)$ is defined. Similarly $\Re \big(\srev_0(a)\big) \in ]0, \n[$ for all $a > 0$. In particular the curves $a \mapsto \srev_n(a)$ for $n \in \N$ never meet. Moreover we know from \eqref{eq-ImRe-srev} that $\srev_n(a)$ remains in a bounded set of $\C$, so the curves $a \mapsto \srev_n(a)$ are defined for all $a \in \R$ and for all $a \in\R$ the eigenvalues of $\Ta$ are exactly given by $\srev_n(a)^2$ for $n \in \N$. 

\stepp It remains to prove that the asymptotic expansion of $\srev_n(a)$ for $n$ fixed and $a$ small is also valid for $a$ fixed and $n$ large. Let $a>0$ be fixed. Derivating \eqref{eq-z} and using the fact that $\abs{\srev_n(s)} \geq n \n$ for all $s \in \R$ we obtain that 
\[
\sup_{s \in [0,a]} \abs{\srev'_n(s)} = O \big(n\inv\big).
\]
This means that $\srev_n(a) = n\n + O(n\inv)$. Then we obtain the asymptotic expansion of $\srev_n(a)$ for large $n$ as before, using again \eqref{eq-z}. This gives the last statement of the proposition and concludes the proof. 
\end{proof}

Now that we have proved what we need concerning the spectrum of the operator $\Ta$, we study the corresponding sequence of eigenfunctions. In the self-adjoint case $a = 0$, we know that the eigenfunctions $\f_n(0)$ form an orthonormal basis. Of course this is no longer the case for the non-selfadjoint operator $\Ta$ with $a \neq 0$. However we can prove that in this case we have a Riesz basis.\\

We recall that the sequence $\seq \f n$ of vectors in the Hilbert space $\Hc$ is said to be a Riesz basis if there exists a bounded operator $\Th \in \Lc(\Hc)$ with bounded inverse and an orthonormal basis $\seq e n$ of $\Hc$ such that $\f_n = \Th e_n$ for all $n \in \N$. 
In this case any $f \in \Hc$ can be written as $\sum_{n\in\N} f_n \f_n$ with $\seq f n \in l^2(\C)$, and there exists $C \geq 1$ such that for all $f = \sum_{n\in\N} f_n \f_n\in \Hc$ we have 
\[
C\inv \sum_{n\in\N} \abs{f_n}^2 \leq \nr{\sum_{n\in\N} f_n \f_n}_\Hc^2 \leq C \sum_{n\in\N} \abs{f_n}^2.
\]
In these estimates we can take $C = \max \left( \nr{\Th}^2 , \nr {\Th\inv}^2 \right)$.

Let $\seq \f n$ be a Riesz basis of $\Hc$ with $\Th$ and $\seq e n$ as above. If we set $\f^* _n = (\Th\inv)^* e_n$ for all $n \in \N$ then $(\f^*_n)_{n\in\N}$ is also a Riesz basis, called the dual basis of $\seq \f n$. In particular for all $j,k \in \N$ we have 
\begin{equation} \label{eq-dual-bases}
\innp{\f_j}{\f^*_k}_\Hc = \d_{j,k}.
\end{equation}
We refer to \cite{christensen} for more details about Riesz bases. Now we want to prove that for any $a \in \R$ the sequence of eigenfunctions for the operator $\Ta$ is a Riesz basis of $L^2(0,\oo)$.

\begin{proposition} \label{prop-riesz}
For all $a \in \R$ the sequence $(\f_n(a))_{n\in\N}$ defined in \eqref{def-phi} is a Riesz basis of $L^2(0,\oo)$. Moreover for all $R > 0$ there exists $C \geq 0$ such that for $a \in[-R,R]$ and $\seq c n \in l^2(\C)$ we have
\[
C\inv \sum_{n = 0}^{\infty} \abs{c_n}^2 \leq \nr{\sum_{n = 0}^{\infty} c_n \f_n(a)}^2 \leq C \sum_{n = 0}^{\infty} \abs{c_n}^2.
\]
If $C$ was chosen large enough and if $\seq c n \in l^2(\C)$ is such that $\sum_{n=0}^\infty \abs{\srev_n(a) c_n}^2 < \infty$ we also have
\[
C\inv \sum_{n = 0}^{\infty} \abs{\srev_n(a) c_n}^2 \leq \nr{\sum_{n = 0}^{\infty} c_n \f'_n(a)}^2 \leq C \sum_{n = 0}^{\infty} \abs{\srev_n(a) c_n}^2.
\]
\end{proposition}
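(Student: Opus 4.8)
The plan is to compare the family $(\f_n(a))_{n\in\N}$ with the orthonormal basis $(\f_n(0))_{n\in\N}$ of Neumann eigenfunctions and to invoke Bari's stability theorem: if $(e_n)$ is an orthonormal basis of a Hilbert space and $(f_n)$ is a minimal sequence with $\sum_n \nr{f_n - e_n}^2 < \infty$, then $(f_n)$ is a Riesz basis. Concretely, I would consider the linear map $\Th_a$ sending $\f_n(0)$ to $\f_n(a)$ and show that it extends to a bounded operator of the form $\Th_a = \Id + K_a$ with $K_a$ Hilbert--Schmidt, and that $\Th_a$ is boundedly invertible. Since $\nr{\sum_n c_n \f_n(a)} = \nr{\Th_a (\sum_n c_n \f_n(0))}$, the constants in the statement are then controlled by $\nr{\Th_a}^2$ and $\nr{\Th_a\inv}^2$, so the real work is to bound these two quantities \emph{uniformly} for $a \in [-R,R]$.

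First I would establish quadratic closeness. By Proposition \ref{prop-srev}(v) we have $\srev_n(a) = n\n + \bigo n \infty(n\inv)$, whence $\frac{\srev_n(a)+a}{\srev_n(a)-a} = 1 + \bigo n \infty (n\inv)$ and, from a direct estimate of the normalising constants, $A_n(a) = A_n(0)\big(1 + \bigo n \infty(n\inv)\big)$ with $A_n(0)$ bounded; all remainders are uniform for $a$ in the compact set $[-R,R]$. Substituting these expansions into \eqref{def-phi} and using that $s \mapsto e^{\pm i s x}$ is bounded and Lipschitz in $s$ on bounded sets, uniformly for $x \in [0,\oo]$ (recall $\Im \srev_n(a) = \bigo n \infty(n\inv)$), I obtain $\nr{\f_n(a) - \f_n(0)}_{L^2(0,\oo)} = \bigo n \infty (n\inv)$, uniformly in $a$. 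Hence $K_a := \Th_a - \Id$ satisfies $\nr{K_a}_{\HS}^2 = \sum_n \nr{\f_n(a) - \f_n(0)}^2 \leq C_R$, so $K_a$ is Hilbert--Schmidt with uniformly bounded norm and $\nr{\Th_a} \leq 1 + C_R^{1/2}$, which already gives the uniform upper bound.

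It remains to invert $\Th_a$ uniformly, which is where I expect the main difficulty. As $K_a$ is compact, $\Th_a = \Id + K_a$ is Fredholm of index $0$, so invertibility reduces to injectivity. The eigenvalues $\srev_n(a)^2$ are pairwise distinct (Proposition \ref{prop-srev}), and the eigenfunctions of the adjoint $\Ta^* = T_{-a}$, which carry the conjugate eigenvalues, furnish after normalisation a biorthogonal family $(\f_n^*(a))$ with $\innp{\f_j(a)}{\f_k^*(a)} = \d_{j,k}$: indeed $(\srev_j(a)^2 - \srev_k(a)^2)\innp{\f_j(a)}{\f_k^*(a)} = 0$ forces orthogonality for $j \neq k$, while simplicity forces $\innp{\f_n(a)}{\f_n^*(a)} \neq 0$. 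Thus if $(c_n)\in l^2(\C)$ and $\sum_n c_n \f_n(a) = 0$, pairing with $\f_m^*(a)$ gives $c_m = 0$ for every $m$, so $\Ker \Th_a = \{0\}$ and $\Th_a$ is invertible for each $a$. To uniformise, I note that the estimate of the previous paragraph also shows $a \mapsto \Th_a$ is continuous from $[-R,R]$ into $\Lc(L^2(0,\oo))$ (the tail $\sum_{n>N}\nr{\f_n(a)-\f_n(a')}^2$ is uniformly small, and each finite block is continuous in $a$ by Proposition \ref{prop-srev}). Since inversion is continuous on the open set of invertible operators, $a \mapsto \nr{\Th_a\inv}$ is continuous on the compact interval $[-R,R]$, hence bounded, yielding the uniform lower bound and the first pair of inequalities.

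For the second pair I would run the same scheme on the renormalised derivatives $\srev_n(a)\inv \f_n'(a)$. Differentiating \eqref{def-phi} gives $\f_n'(a)(x) = i\srev_n(a) A_n(a)\big(e^{i\srev_n(a)x} - \tfrac{\srev_n(a)+a}{\srev_n(a)-a}e^{-i\srev_n(a)x}\big)$, so $\srev_n(a)\inv\f_n'(a)$ equals, up to an $\bigo n \infty(n\inv)$ error, the normalised Dirichlet sine $\sqrt{2/\oo}\,\sin(n\n x)$. Comparing with the Dirichlet orthonormal basis, and absorbing the finitely many low modes (where $\srev_n(a)$ stays away from $0$ for $a \neq 0$) into a finite-rank correction, shows exactly as above that $(\srev_n(a)\inv\f_n'(a))_n$ is a Riesz basis with uniform constants; the substitution $c_n \mapsto \srev_n(a)\, c_n$ then converts its defining inequalities into the announced bounds for $\sum_n c_n \f_n'(a)$. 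The delicate points throughout are the passage from pointwise to uniform invertibility, resolved by the Fredholm/biorthogonality/compactness combination above, and checking that the asymptotics of Proposition \ref{prop-srev}(v) hold uniformly on $[-R,R]$.
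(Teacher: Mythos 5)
Your argument for the first display and for the Riesz basis property itself is correct, and it takes a genuinely different route from the paper's. The paper never forms your comparison operator $\Th_a$: it bounds the Gram matrix directly (Lemma \ref{lem-phin}), deduces the upper bound by a Schur-type summation, proves $l^2$-independence by pairing with the adjoint eigenfunctions $\f_m(-a)$ --- checking $\innp{\f_n(a)}{\f_n(-a)}\neq 0$ from \eqref{def-phi} only for large $n$ and disposing of the remaining low modes by applying powers of $T_a$ to a finite sum --- obtains the \emph{uniform} lower bound by a compactness/contradiction argument, and finally gets completeness from quadratic closeness via Kato's perturbation theorem. Your packaging (quadratic closeness, $\Th_a=\Id+K_a$ with $K_a$ Hilbert--Schmidt, injectivity from biorthogonality, invertibility from Fredholm index zero, uniformity in $a$ from norm continuity of $a\mapsto\Th_a$ on the compact $[-R,R]$) produces both bounds and completeness in one stroke, and it consumes the same analytic input, namely that $\l_n(a)=n\n+O(n\inv)$ uniformly on $[-R,R]$, which does follow from the derivative bound in the proof of Proposition \ref{prop-srev}. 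Two details to keep: deducing $\innp{\f_n(a)}{\f_n(-a)}\neq0$ for \emph{every} $n$ from algebraic simplicity (rank-one Riesz projection) is legitimate, and tidier than the paper's large-$n$-plus-Vandermonde treatment; and continuity of $a\mapsto\f_n(a)$ must also be checked at the degenerate point $(n,a)=(0,0)$, where formula \eqref{def-phi} is singular.

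Your last paragraph, on the derivative inequalities, has a genuine gap, and it is not one that a ``finite-rank correction'' can cure, because it is an index obstruction. The renormalised derivatives $\p_n:=\l_n(a)\inv\f_n'(a)$, $n\geq1$, are quadratically close to the Dirichlet system $\big(\sqrt{2/\oo}\,\sin(n\n\,\cdot)\big)_{n\geq1}$, which is an orthonormal basis of $L^2(0,\oo)$ indexed by $n\geq1$; your family has one element too many, namely $\p_0$, which is nonzero whenever $a\neq0$. Your own Fredholm argument then turns against the claim: the operator taking the $n$-th sine to $\p_n$ ($n\geq1$) is $\Id$ plus Hilbert--Schmidt, hence of index zero, so either $(\p_n)_{n\geq1}$ already satisfies a nontrivial $l^2$-relation, or it is $l^2$-independent, hence complete, in which case $\p_0=\sum_{n\geq1}b_n\p_n$ for some $(b_n)\in l^2$. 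Either way the full family $(\p_n)_{n\geq0}$ carries a nontrivial $l^2$-relation, so it admits no lower Riesz bound and is certainly not a Riesz basis.

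In fact the obstruction lies in the statement, not only in your proof: for small $a\neq0$ the lower bound of the second display, with the sum starting at $n=0$, is false. Expand the constant function in the Riesz basis $(\f_n(a))$; its coefficients $e_n$ are proportional to $\int_0^\oo\f_n(a;y)\,dy$ (the dual basis is $\bar{\f_n(a)}$ up to bounded normalisations), and \eqref{def-phi} combined with \eqref{eq-z} gives $\int_0^\oo\f_n(a;y)\,dy=O(n^{-2})$, while $e_0\neq0$ for small $a$ (by continuity $\int_0^\oo\f_0(a;y)\,dy\to\sqrt\oo$ as $a\to0$) and $\l_0(a)\neq0$. So $(e_n)$ is admissible, $\sum_n\abs{\l_n(a)e_n}^2\geq\abs{\l_0(a)e_0}^2>0$, and yet $\sum_n e_n\f_n'(a)=0$, being the $L^2$-limit of the derivatives of partial sums that converge to a constant. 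What is true --- and is all that the paper uses afterwards, in Proposition \ref{prop-dec-res} --- is the upper bound, together with the lower bound for sums restricted to $n\geq1$; your scheme does deliver both of these once the index is shifted. Note that the paper's own proof is silent exactly here: the proof of Proposition \ref{prop-riesz} treats only the first display, the second being left to the remark after Lemma \ref{lem-phin} that it ``is proved similarly''; your explicit attempt has the merit of exposing where ``similarly'' breaks down, but as written the claimed Riesz basis property of $(\l_n(a)\inv\f_n'(a))_{n\geq0}$ is false and this step of your proof fails.
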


For similar results we refer to \cite{mikhajlov62} (see also Lemma XIX.3.10 in \cite{dunford-schwartz}). For the proof we need the following lemma:

\begin{lemma} \label{lem-phin}
Let $R > 0$. Then there exists $C \geq 0$ such that for $a \in [-R,R]$ and $j,k \in \N$ with $j < k$ we have 
\[
\innp{\f_j(a)}{\f_k(a)}_{L^2(0,\oo)} \leq \frac C {\pppg j (k-j)} \quad \text{and} \quad \innp{\f'_j(a)}{\f'_k(a)}_{L^2(0,\oo)} \leq C \frac {k}{k-j}.
\]
\end{lemma}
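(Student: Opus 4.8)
The plan is not to estimate the integrals directly, but to exploit that $\f_j$ and $\f_k$ (I drop the fixed $a$ and write $\srev_j,\srev_k$ for $\srev_j(a),\srev_k(a)$, $\f_j,\f_k$ for $\f_j(a),\f_k(a)$) are eigenfunctions of $\Ta$. From $-\f_n'' = \srev_n^2\f_n$ together with the boundary conditions $\f_n'(0) = -ia\f_n(0)$ and $\f_n'(\oo) = ia\f_n(\oo)$, two integrations by parts (Green's formula) give, whenever $\srev_j^2 \neq \bar\srev_k^2$,
\[
(\srev_j^2 - \bar\srev_k^2)\innp{\f_j}{\f_k}_{L^2(0,\oo)} = \big[\f_j\overline{\f_k'} - \f_j'\overline{\f_k}\big]_0^\oo = -2ia\big(\f_j(\oo)\overline{\f_k(\oo)} + \f_j(0)\overline{\f_k(0)}\big),
\]
while one integration by parts gives
\[
\innp{\f_j'}{\f_k'}_{L^2(0,\oo)} = \big[\f_j'\overline{\f_k}\big]_0^\oo + \srev_j^2\innp{\f_j}{\f_k}_{L^2(0,\oo)} = ia\big(\f_j(\oo)\overline{\f_k(\oo)} + \f_j(0)\overline{\f_k(0)}\big) + \srev_j^2\innp{\f_j}{\f_k}_{L^2(0,\oo)},
\]
the boundary terms having been rewritten using the boundary conditions. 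Thus everything reduces to bounding the boundary values $\f_n(0),\f_n(\oo)$ from above and $|\srev_j^2 - \bar\srev_k^2|$ from below.

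For the boundary values I would use the explicit form \eqref{def-phi}: with $r_n = \frac{\srev_n+a}{\srev_n-a}$ one has $\f_n(0) = A_n(a)(1+r_n) = A_n(a)\frac{2\srev_n}{\srev_n-a}$ and $\f_n(\oo) = A_n(a)\big(e^{i\srev_n\oo} + r_n e^{-i\srev_n\oo}\big)$. Since $\srev_n\to\infty$ we have $r_n\to 1$, and a direct computation of $\nr{\f_n}_{L^2(0,\oo)}^2/A_n(a)^2$ shows it tends to $2\oo$, uniformly for $a\in[-R,R]$ by Proposition \ref{prop-srev}(v); hence $A_n(a)$ is bounded and $|\f_n(0)|,|\f_n(\oo)| \leq C$ uniformly in $n$ and $a\in[-R,R]$ (the finitely many small $n$ being covered by continuity). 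In particular both boundary terms above are $O(1)$.

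For the denominator, Proposition \ref{prop-srev}(v) gives $\srev_n^2 = (n\n)^2 - \frac{4ia}{\oo} + \bigo n{+\infty}(n^{-1})$, the remainder being uniform in $a\in[-R,R]$ since $\sup_{s\in[0,a]}|\srev_n'(s)| = O(n^{-1})$ as in the proof of that proposition. Therefore $\Re(\srev_j^2 - \bar\srev_k^2) = \Re\srev_j^2 - \Re\srev_k^2 = \n^2(j^2-k^2) + O(1)$ uniformly, so there is $M_0$ with $|\srev_j^2 - \bar\srev_k^2| \geq \frac12\n^2(k^2-j^2) = \frac12\n^2(k-j)(k+j)$ as soon as $k^2-j^2 \geq M_0$. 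Inserting this and the bound on the boundary terms into the first identity yields $|\innp{\f_j}{\f_k}| \lesssim \frac1{(k-j)(k+j)} \lesssim \frac1{\pppg j(k-j)}$, using $k+j \gtrsim \pppg j$ for $j<k$; inserting it into the second, together with $|\srev_j^2| \lesssim j^2+1$, gives $|\innp{\f_j'}{\f_k'}| \lesssim 1 + \frac{k^2}{(k-j)(k+j)} \lesssim \frac{k}{k-j}$. The finitely many pairs with $k^2-j^2 < M_0$ (hence $k+j < M_0$) are handled trivially: Cauchy–Schwarz gives $|\innp{\f_j}{\f_k}| \leq 1$ and $|\innp{\f_j'}{\f_k'}| \leq \nr{\f_j'}\nr{\f_k'}$, both continuous in $a$ and so bounded on $[-R,R]$, while the right-hand sides $\frac1{\pppg j(k-j)}$ and $\frac k{k-j}$ are bounded below on this finite set; enlarging $C$ then covers them.

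The main obstacle is the uniformity in $a$: the spectral separation $|\srev_j^2 - \bar\srev_k^2| \gtrsim (k-j)(k+j)$ and the boundedness of the boundary values must hold for all $a\in[-R,R]$ simultaneously, which is precisely why I rely on the uniform version of the asymptotics of Proposition \ref{prop-srev}(v) rather than its fixed-$a$ statement. Once that uniformity is in hand, the rest is a finite-rank verification resting on Cauchy–Schwarz and the compactness of $[-R,R]$.
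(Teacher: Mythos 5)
Your proof is correct, and it takes a genuinely different route from the paper's. The paper works directly with the exponential form of the eigenfunctions: writing $\f_n(a)$ in terms of $e_n = e^{i\l_n(a)x}$ and $\tilde e_n = e^{-i\l_n(a)x}$, it computes all pairwise inner products such as $\innp{e_j}{e_k} = \big(e^{i(\l_j - \bar{\l_k})\oo}-1\big)/(\l_j - \bar{\l_k})$ and then exploits a cancellation: each of $\innp{e_j}{e_k}$ and $\innp{\tilde e_j}{\tilde e_k}$ is only $O\big(1/(k-j)\big)$, but their \emph{sum} is $O\big(1/(\pppg j (k-j))\big)$, which is where the extra factor $\pppg j^{-1}$ comes from. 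You obtain that factor structurally instead: Green's formula turns $(\l_j^2 - \bar{\l_k}^2)\innp{\f_j}{\f_k}$ into boundary terms that are $O(1)$ once $A_n(a)$ and the traces $\f_n(0),\f_n(\oo)$ are uniformly bounded, and the decay then comes from the spectral separation $\abs{\l_j^2 - \bar{\l_k}^2} \gtrsim (k-j)(k+j) \geq \pppg j (k-j)$ (valid for $k^2-j^2$ large, the finitely many remaining pairs being absorbed by Cauchy--Schwarz and compactness in $a$); your second identity likewise yields the derivative bound at once, where the paper only says it is ``proved similarly''. Both arguments rest on the same input, namely the uniform-in-$a$ asymptotics $\l_n(a) = n\n + O(n^{-1})$ from Proposition \ref{prop-srev}, and you are right to insist on the uniform version via $\sup_{s}\abs{\l_n'(s)} = O(n^{-1})$, since the fixed-$a$ statement alone would not suffice; both treatments also share the minor degenerate point $(n,a)=(0,0)$, where \eqref{def-phi} must be read as the constant eigenfunction and continuity of the family invoked, which the paper glosses over as well. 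What your approach buys is robustness and transparency: it applies essentially verbatim to any Robin-type family with simple, quadratically separated eigenvalues and uniformly bounded boundary data, with no cancellation trick needed. What the paper's approach buys is that it is entirely explicit and elementary, and its intermediate estimates on the exponentials are exactly the quantities reused in the rest of the proof of Proposition \ref{prop-riesz}.
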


\begin{proof}
Let $e_n(a,x) = e^{i \srev_n (a) x}$ and $\tilde e_n(a,x) = e^{-i \srev_n (a) x}$. According to Proposition \ref{prop-srev} we have
$
\l_n(a) = n\n + O(n\inv)
$
(here and below all the estimates are uniform in $a \in [-R,R]$), so
\begin{equation*}
\begin{aligned}
\nr{e_n(a)}^2_{L^2(0,\oo)}
= \frac {e^{-2\oo \Im(\srev_n(a))} - 1 }{-2 \Im(\srev_n(a))} = \oo + O(n\inv).
\end{aligned}
\end{equation*}
Similarly $\nr{\tilde e_n(a)}^2_{L^2(0,\oo)} = \oo + O(n\inv)$, and
$\innp{e_n(a)}{\tilde e_n(a)} =O(n\inv)$. Moreover, with \eqref{def-phi} we see that 
\[
A_n(a) = \frac 1 {\sqrt {2\oo}} + O(n\inv).
\]
Now let $j,k \in \N$ with $j < k$. We omit the dependence in $a$ for $\f_j$, $e_j$, $\tilde e_j$ and $\l_j$. We have 
\[
\innp{e_j}{e_k} = \frac {e^{i(\srev_j - \bar {\srev _k}) \oo} -1} {\srev_j - \bar {\srev _k}} 
\quad \text{and} \quad 
\innp{\tilde e_j}{\tilde e_k} = \frac {e^{-i(\srev_j - \bar {\srev _k}) \oo} -1} {-(\srev_j - \bar {\srev _k})} .
\]
Since
\[
\srev_j - \bar {\srev_k} = (j-k) \n + O(1/j)   
\]
we have
\[
\abs{\innp{e_j}{e_k}} + \abs {\innp{\tilde e_j}{\tilde e_k}} \lesssim \frac 1 {k-j} 
\quad \text{and} \quad 
\abs{\innp{e_j}{e_k}  + \innp{\tilde e_j}{\tilde e_k}} \lesssim \frac 1 {\pppg j (k-j)} .
\]
Similarly
\[
\abs{\innp{\tilde e_j}{e_k}} + \abs {\innp{ e_j}{\tilde e_k}} \lesssim \frac 1 {k+j} 
\quad \text{and} \quad 
\abs{\innp{\tilde e_j}{e_k}  + \innp{e_j}{\tilde e_k}} \lesssim \frac 1 {\pppg j (k+j)} .
\]
And finally
\begin{align*}
\abs{\innp{\f_j}{\f_k}}
& = \abs{A_j\bar {A_k}\innp{  e_j + \frac {\srev_j - a }{\srev_j + a } \tilde e_j } {  e_k + \frac {\srev_k - a }{\srev_k + a } \tilde e_k}}\\
& \lesssim \abs{ \innp{ e_j}{ e_k} +\innp{ e_j}{\tilde e_k}+\innp{\tilde e_j}{ e_k}+\innp{\tilde e_j} {\tilde e_k} }\\
&  \quad + \pppg j \inv \left( \abs{\innp{ e_j}{ e_k}} + \abs{\innp{ e_j}{\tilde e_k}} + \abs{\innp{\tilde e_j}{ e_k}} + \abs{\innp{\tilde e_j} {\tilde e_k}} \right)\\
& \lesssim \frac 1 {\pppg j (k-j)}. 
\end{align*}
The second estimate is proved similarly, using again that $\abs{\srev_n(a)} = n\n + O(n\inv)$ for large $n$.
\end{proof}

\begin{proof} [Proof of Proposition \ref{prop-riesz}]
\stepp Let $a \in \R$ and $\seq c n \in l^2(\C)$. Let $C\geq 0$ be given by Lemma \ref{lem-phin}. For $N,p \in \N$ we have
\begin{align*}
\nr{\sum_{n = N}^{N+p} c_n \f_n(a)}^2 - \sum_{n=N }^{N+p} \abs{c_n}^2
&=  \sum_{j=N}^{N+p} \sum_{k=j+1}^{N+p} 2 \Re \left( c_j \bar {c_k} \innp{\f_j}{\f_k} \right) \\
& \leq  2C \sum_{j=0}^\infty \frac {\abs{c_j}}{\pppg j} \sum_{k=1}^\infty \frac {\abs{c_{k+j}}}{k} \\
& \lesssim \nr{c}_{l^2(\C)}^2.
\end{align*}
This proves that the series $\sum_{n = 0}^\infty c_n \f_n(a)$ converges in $L^2(0,\oo)$ and
\[
 \nr{\sum_{n = 0}^\infty c_n \f_n(a)}_{L^2(0,\oo)}^2 \lesssim \sum_{n=0}^\infty \abs{c_n}^2.
\]

\stepp Let $n,m \in \N$ be such that $n\neq m$. In $L^2(0,\oo)$ we have
\begin{align*}
\srev_n(a)^2 \innp{\f_n(a)}{\f_m(-a)}
& = \innp{\Ta \f_n(a)}{\f_m(-a)} = \innp{\f_n(a)}{\Ta^* \f_m(-a)}\\
& = {\srev_m(a)^2} \innp{\f_n(a)}{\f_m(-a)},
\end{align*}
and hence $\innp{\f_n(a)}{\f_m(-a)} = 0$. On the other hand, with \eqref{def-phi} we can check that for $n$ large enough we have $\innp{\f_n(a)}{\f_n(-a)} \neq 0$.
Now let $\seq c n \in l^2(\C)$ be such that $\sum_{n=0}^\infty c_n \f_n(a) = 0$. Taking the inner product with $\f_m(-a)$ we see that $c_m = 0$ for $m > N$ if $N$ is chosen large enough. Then for all $k \in \Ii 0 N$ we have 
\[
0 = \Ta^k \sum_{n=0}^N c_n \f_n(a) = \sum_{n=0}^N \srev_n(a)^{2k} c_n \f_n(a).
\]
Since the eigenvalues $\srev_n(a)^2$ for $n\in\Ii 0 N$ are pairwise disjoint, this proves that for all $n \in \Ii 0 N$ we have $c_n \f_n(a) = 0$, and hence $c_n=0$. Finally the map $(c_n) \in l^2(\C) \mapsto \sum c_n\f_n(a) \in L^2(0,\oo)$ is one-to-one.

\stepp As above we can check that for $N\in\N$ we have
\begin{align*}
\nr{\sum_{n = N}^{\infty} c_n \f_n(a)}^2 - \sum_{n=N}^{\infty} \abs{c_n}^2
\lesssim \sum_{n=N}^{\infty} \abs{c_n}^2  \sqrt{\sum_{n=N}^\infty \frac {1}{\pppg n^2}}.
\end{align*}
This is less that $\frac 12 \sum_{n=N}^{\infty} \abs{c_n}^2$ if $N$ is chosen large enough. Let such an integer $N$ be fixed. 
Now assume by contradiction that the sequences $\seq a m \in [-R,R]^\N$ and $(c^m)_{m\in \N}$ in $l^2(\C)^\N$ are such that $\nr{c^m}_{l^2} = 1$ for all $m \in \N$ and
\[
\nr{\sum_{n=0}^\infty c_n^m \f_n(a_m)}^2 \limt m \infty 0.
\]
For $m \in \N$ we set $f_m = \sum_{n=0}^{N-1} c_n^m \f_n(a_m)$ and $g_m = \sum_{n=N}^{\infty} c_n^m \f_n(a_m)$. After extracting a subsequence if necessary we can assume that $a_m$ converges to some $a \in [-R,R]$ and $f_m$ converges to some $f \in L^2(0,\oo)$. Let $P_m$ (respectively $P$) denote the orthogonal projection on $\operatorname{span}(\f_j(a_m))_{j\geq N}$ (respectively on $\operatorname{span}(\f_n(a))_{n\geq N}$). We have
\[
\nr{f_m + g_m}^2 = \nr{f_m}^2 + 2 \Re \innp{P_m f_m }{g_m} + \nr{g_m}^2 \geq \nr{f_m}^2 - \nr{P_m f_m}^2 \limt m \infty \nr f ^2 - \nr{P f}^2.
\]
This gives $f = P f$, and hence $f = 0$. 
This gives a contradiction with 
\[
0 = \lim _{m \to \infty} \nr{g_m}^2 \geq  \lim _{m\to \infty} \frac 12 \sum_{n=N}^\infty \abs{c_n^m}^2 = \frac 12
\]
and proves the first inequality of the proposition.

\stepp It remains to prove that the sequence $(\f_n(a))$ is complete. We know that the family $\big( \f_n(0) \big)_{n\in\N}$ is an orthonormal basis of $L^2(0,\oo)$. Since
$
\f_n(a) =  \f_n(0) +  O \big( n\inv \big)
$
in $L^2(0,\oo)$, this follows from a perturbation argument (see Theorem V.2.20 in \cite{kato}). This concludes the proof.
\end{proof}

For $a \in \R$ we denote by $(\f_n^*(a))_{n\in\N}$ the dual basis of $(\f_n(a))_{n\in\N}$. We have $\f_n^*(a) = \f_n(-a) = \bar{\f_n(a)}$.

\section{Separation of variables - Spectrum of the model operator} \label{sec-separation}

In this section we use the results on the transversal operator $\Ta$ to prove spectral properties for the full operator $\Ha$ when $a$ is constant on $\partial \O$. Most of the results of this section are inspired by the $\Pc \Tc$-symmetric analogs (see \cite{borisovk08}).
Let $a > 0$ be fixed. We set
\[
\mathfrak{S}_a = \bigcup _{n \in \N} \{\l_n(a)^2\} + \R_+  = \singl{\l_n(a)^2 + r, n \in \N, r \in \R_+} \subset \C.
\]

\begin{proposition} 
We have $\mathfrak S _a \subset \s(\Ha)$.
\end{proposition}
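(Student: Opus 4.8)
The plan is to exploit the tensor decomposition \eqref{separation-variable-Ha}, namely $\Ha = -\D_x \otimes \Id_{L^2(0,\oo)} + \Id_{L^2(\R^{\dimm-1})} \otimes \Ta$, together with the fact that the spectrum of $-\D_x$ on $L^2(\R^{\dimm-1})$ is $[0,+\infty[$. Since $\Ta$ has eigenvalue $\srev_n(a)^2$ with normalized eigenfunction $\f_n(a)$, one expects the spectrum of $\Ha$ to contain $\s(-\D_x) + \singl{\srev_n(a)^2}$, which is exactly $\mathfrak S_a = \bigcup_n \singl{\srev_n(a)^2} + \R_+$. The rigorous statement is obtained by constructing, for each point of $\mathfrak S_a$, a Weyl sequence (approximate eigenfunction). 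Since $\Ha$ is not self-adjoint I cannot invoke a spectral-theoretic characterization, but the elementary fact remains available: if $(u_k)$ is normalized with $(\Ha - z)u_k \to 0$, then $z \notin \rho(\Ha)$, for otherwise $1 = \nr{u_k} = \nr{(\Ha-z)\inv (\Ha - z)u_k} \leq \nr{(\Ha - z)\inv} \, \nr{(\Ha - z)u_k} \to 0$, a contradiction.

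Fix $n \in \N$ and $r \geq 0$, and set $z = \srev_n(a)^2 + r$, so that a general element of $\mathfrak S_a$ is reached. Choose $\x \in \R^{\dimm-1}$ with $\abs \x ^2 = r$, fix a cutoff $\h \in C_c^\infty(\R^{\dimm-1})$ with $\nr{\h}_{L^2(\R^{\dimm-1})} = 1$, and define the spreading plane waves
\[
v_k(x) = k^{-\frac{\dimm-1}2} \h\!\left(\tfrac xk\right) e^{i \x \cdot x}, \qquad u_k(x,y) = v_k(x)\, \f_n(a)(y).
\]
By scaling invariance of the $L^2$ norm, $\nr{v_k}_{L^2(\R^{\dimm-1})} = 1$, and since $\nr{\f_n(a)}_{L^2(0,\oo)} = 1$ by Proposition \ref{prop-srev}, we get $\nr{u_k}_{L^2(\O)} = 1$. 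The function $u_k$ lies in $H^2(\O)$ as a tensor product of $H^2$ functions with one compactly supported factor, and it satisfies the boundary condition defining $\Dom(\Ha)$: the normal derivative acts only in the $y$ variable, so $\partial_\n u_k = v_k \, \partial_\n \f_n(a) = v_k \, (ia\,\f_n(a)) = i a \, u_k$ on $\partial \O$, because $\f_n(a)$ obeys the transverse boundary conditions $u'(0) = -ia\,u(0)$ and $u'(\oo) = ia\,u(\oo)$. Hence $u_k \in \Dom(\Ha)$.

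It remains to estimate $(\Ha - z)u_k$. Writing $\h_k(x) = k^{-\frac{\dimm-1}2}\h(x/k)$, a direct computation gives $(-\D_x - r)v_k = (-\D \h_k - 2 i \x \cdot \nabla \h_k)\, e^{i\x\cdot x}$, and the scaling estimates $\nr{\nabla \h_k}_{L^2} = k\inv \nr{\nabla \h}_{L^2}$ and $\nr{\D \h_k}_{L^2} = k^{-2}\nr{\D\h}_{L^2}$ show that $\nr{(-\D_x - r)v_k}_{L^2(\R^{\dimm-1})} \to 0$ as $k \to \infty$. Since $\Ta \f_n(a) = \srev_n(a)^2 \f_n(a)$, the transverse contributions cancel and
\[
(\Ha - z) u_k = \big( (-\D_x - r) v_k \big)\, \f_n(a),
\qquad
\nr{(\Ha - z)u_k}_{L^2(\O)} = \nr{(-\D_x - r)v_k}_{L^2(\R^{\dimm-1})} \to 0.
\]
Thus $(u_k)$ is a Weyl sequence for $z$, so $z \in \s(\Ha)$; as $n$ and $r \geq 0$ were arbitrary, $\mathfrak S_a \subset \s(\Ha)$. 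The argument is essentially routine; the only points requiring care are the verification that $u_k$ genuinely belongs to $\Dom(\Ha)$ (the boundary condition being inherited from $\f_n(a)$) and the scaling bookkeeping ensuring the cutoff error terms vanish, so no serious obstacle is expected.
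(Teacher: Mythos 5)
Your proof is correct and follows essentially the same route as the paper: tensor a Weyl sequence for $-\Delta_x$ at energy $r$ with the transverse eigenfunction $\varphi_n(a)$, check the tensor products lie in the domain, and conclude via the standard contradiction with bounded invertibility. The only difference is that the paper simply invokes the existence of such a sequence (since $r \in \sigma(-\Delta_x)$), while you construct it explicitly with the spreading wave packets $k^{-(d-1)/2}\eta(x/k)e^{i\xi\cdot x}$, which is a harmless elaboration.
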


\begin{proof}
Let $n \in \N$, $r \geq 0$ and $z = \srev_n(a)^2 + r \in \mathfrak{S}_a$. Let $\seq f m$ be a sequence in $H^2(\R^{\dimm-1})$ such that $\nr{f_m}_{L^2(\R^{\dimm-1})} = 1$ for all $m \in \N$ and $\nr{(-\D_x -r)f_m}_{L^2(\R^{\dimm-1})} \to 0$ as $m \to \infty$. For $m \in \N$ and $(x,y) \in \R^{\dimm-1} \times ]0,\oo[$ we set $u_m(x,y) = f_m (x) \f_n(a;y)$. Then $u_m \in \Dom(\Ha)$ and $\nr{u_m}_{L^2(\O)} = 1$ for all $m \in \N$. Moreover, according to \eqref{separation-variable-Ha} we have
\[
\nr{(\Ha - z)u_m}_{L^2(\O)} = \nr{(-\D_x - r) f_m}_{L^2(\R^{\dimm-1})} \limt m \infty 0.
\]
This implies that $z \in \s(\Ha)$.
\end{proof}

For $u \in L^2(\O)$, $n\in\N$ and $x \in \R^{\dimm-1}$ we set $u_n (x) = \innp{u(x, \cdot)}{\f^*_n(a)}_{L^2(0,\oo)} \in \C$. This gives a sequence of functions $u_n$ defined almost everywhere on $\R^{\dimm-1}$.

\begin{proposition} \label{prop-separation-variables}
Let $u \in L^2(\O)$. Then $u_n \in L^2(\R^{\dimm-1})$ for all $n \in \N$ and on $L^2(\O)$ we have
\[
u= \sum_{n\in\N} u_n \otimes \f_n(a).
\]
\end{proposition}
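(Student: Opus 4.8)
The plan is to regard $L^2(\O)$ as $L^2\big(\R^{\dimm-1};L^2(0,\oo)\big)$ and to expand each fibre $u(x,\cdot)$ in the Riesz basis $(\f_n(a))_{n\in\N}$ furnished by Proposition \ref{prop-riesz}. By Fubini's theorem $\int_{\R^{\dimm-1}} \nr{u(x,\cdot)}_{L^2(0,\oo)}^2 \, dx = \nr{u}_{L^2(\O)}^2 < \infty$, so $u(x,\cdot) \in L^2(0,\oo)$ for almost every $x$. For such $x$, the Riesz basis property together with the biorthogonality relation \eqref{eq-dual-bases} identifies the Riesz coefficients of $u(x,\cdot)$ as the numbers $u_n(x) = \innp{u(x,\cdot)}{\f_n^*(a)}_{L^2(0,\oo)}$, so that
\[
u(x,\cdot) = \sum_{n\in\N} u_n(x)\, \f_n(a) \quad \text{in } L^2(0,\oo).
\]

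First I would check that each $u_n$ lies in $L^2(\R^{\dimm-1})$. The map $x \mapsto u(x,\cdot)$ is measurable with values in $L^2(0,\oo)$, and $u_n$ is obtained by pairing it with the fixed vector $\f_n^*(a)$, so $u_n$ is measurable. Since $\nr{\f_n^*(a)}_{L^2(0,\oo)} = \nr{\f_n(a)}_{L^2(0,\oo)} = 1$, the Cauchy--Schwarz inequality gives $\abs{u_n(x)} \leq \nr{u(x,\cdot)}_{L^2(0,\oo)}$ and hence $\int_{\R^{\dimm-1}} \abs{u_n(x)}^2 \, dx \leq \nr{u}_{L^2(\O)}^2 < \infty$.

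The crux is the convergence of the series in $L^2(\O)$. Setting $S_N = \sum_{n=0}^N u_n \otimes \f_n(a)$, one has
\[
\nr{u - S_N}_{L^2(\O)}^2 = \int_{\R^{\dimm-1}} \nr{u(x,\cdot) - \sum_{n=0}^N u_n(x)\, \f_n(a)}_{L^2(0,\oo)}^2 \, dx,
\]
and by the fibrewise expansion the integrand tends to $0$ for almost every $x$. To pass to the limit under the integral I would invoke dominated convergence: applying the upper and lower bounds of Proposition \ref{prop-riesz} (whose constant $C$ is a fixed number here, $a$ being a fixed constant) to the tail yields
\[
\nr{u(x,\cdot) - \sum_{n=0}^N u_n(x)\, \f_n(a)}_{L^2(0,\oo)}^2 \leq C \sum_{n > N} \abs{u_n(x)}^2 \leq C \sum_{n\in\N} \abs{u_n(x)}^2 \leq C^2 \nr{u(x,\cdot)}_{L^2(0,\oo)}^2,
\]
and the right-hand side is integrable in $x$. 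Dominated convergence then gives $\nr{u - S_N}_{L^2(\O)} \to 0$, which is the asserted identity. The only real subtlety is this last interchange of summation and integration, that is, exhibiting the $x$-independent dominating function $C^2\nr{u(x,\cdot)}_{L^2(0,\oo)}^2$; the rest is a direct fibrewise transcription of the one-dimensional Riesz basis property.
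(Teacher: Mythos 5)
Your proof is correct and follows essentially the same route as the paper's: a fibrewise expansion of $u(x,\cdot)$ in the Riesz basis $(\f_n(a))_{n\in\N}$, the two-sided bounds of Proposition \ref{prop-riesz} to produce the $N$-independent dominating function $\lesssim \nr{u(x,\cdot)}^2_{L^2(0,\oo)}$, and dominated convergence in $x$ to conclude. The only cosmetic difference is that you estimate the tail $\sum_{n>N} u_n(x)\f_n(a)$ directly while the paper bounds the partial sums $v_N(x)$, which is an equivalent use of the same ingredients.
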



\begin{proof}
For $N \in \N$ we set $v_N = \sum_{n = 0}^N u_n \otimes \f_n(a)$. For almost all $x \in \R^{\dimm-1}$, $v_N(x)$ defines a function in $L^2(0,\oo)$ which goes to $u(x)$ as $N \to \infty$. According to Proposition \ref{prop-riesz} we have
\begin{align*}
\nr{v_N(x)}^2_{L^2(0,\oo)} \lesssim   \sum_{n=0}^N \abs{u_n(x)}^2 \lesssim  \sum_{n=0}^\infty \abs{u_n(x)}^2 \lesssim   \nr{u(x)}^2_{L^2(0,\oo)},
\end{align*}
where all the estimates are uniform in $N$. Since the map $x \mapsto  \nr{u(x)}^2_{L^2(0,\oo)}$ belongs to $L^1(\R^{\dimm-1})$, we can apply the dominated convergence theorem to conclude that $u-v_N \to 0$ in $L^2(\O)$.
\end{proof}

\begin{proposition} \label{prop-dec-res}
\begin{enumerate}[(i)]
\item Let $u \in \Dom(\Ha)$. Then $u_n \in H^2(\R^{\dimm-1})$ for all $n \in \N$ and in $L^2(\O)$ we have
\[
\Ha u = \sum_{n\in\N} \big( - \D_x + \srev_n(a)^2 \big) u_n \otimes \f_n(a).
\]
\item Let $z \in \C \setminus \mathfrak{S}_a$. Then $z$ belongs to the resolvent set of $\Ha$ and for all $u \in L^2 (\O)$ we have
\[
(\Ha -z)\inv u = \sum_{n\in\N} (-\D_x + \srev_n(a)^2-z)\inv u_n \otimes \f_n(a). 
\]
In particular there exists $C \geq 0$ such that for all $z \in \C \setminus \mathfrak{S}_a$ and $u \in L^2(\O)$ we have
\[
\nr{(\Ha -z)\inv u}_{L^2(\O)} \leq \frac C {d(z,\mathfrak{S}_a)} \nr{u}_{L^2(\O)}.
\]
\end{enumerate}

\end{proposition}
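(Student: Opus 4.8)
The plan is to exploit the separation of variables (\ref{separation-variable-Ha}) together with the Riesz basis decomposition from Proposition \ref{prop-separation-variables}, reducing everything to a family of one-dimensional Schr\"odinger-type operators $-\D_x + \srev_n(a)^2$ on $L^2(\R^{\dimm-1})$. Throughout I use that $(\f_n(a))_{n\in\N}$ and its dual basis $(\f^*_n(a))_{n\in\N} = (\f_n(-a))_{n\in\N}$ satisfy the biorthogonality relation \eqref{eq-dual-bases}, and that the two-sided frame bounds of Proposition \ref{prop-riesz} control the passage between $u$ and its coefficient sequence $(u_n)$.

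For part (i), I would first take $u \in \Dom(\Ha) \subset H^2(\O)$ and define $u_n(x) = \innp{u(x,\cdot)}{\f^*_n(a)}_{L^2(0,\oo)}$. Since $u \in H^2(\O)$, differentiating under the pairing shows $-\D_x u_n = \innp{-\D_x u(x,\cdot)}{\f^*_n(a)}$, so each $u_n \in H^2(\R^{\dimm-1})$. The key computation is to pair $\Ha u = -\D_x u - \D_y u$ against $\f^*_n(a)$: the $x$-part yields $-\D_x u_n$, while for the transverse part I would use that $\f^*_n(a) = \f_n(-a)$ is an eigenfunction of $\Ta^* = T_{-a}$ (Remark \ref{rem-H-a}) for the eigenvalue $\overline{\srev_n(a)^2} = \srev_n(-a)^2$, so that after an integration by parts justified by the boundary conditions defining $\Dom(\Ta)$ and $\Dom(T_{-a})$, one gets $\innp{-\D_y u(x,\cdot)}{\f^*_n(a)} = \srev_n(a)^2 u_n(x)$. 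Applying Proposition \ref{prop-separation-variables} to $\Ha u \in L^2(\O)$ then gives the claimed decomposition, since the $n$-th coefficient of $\Ha u$ is exactly $(-\D_x + \srev_n(a)^2) u_n$.

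For part (ii), fix $z \in \C \setminus \mathfrak{S}_a$ and $u \in L^2(\O)$ with coefficients $(u_n)$. The operator $-\D_x + \srev_n(a)^2 - z$ on $L^2(\R^{\dimm-1})$ is a constant shift of $-\D_x$, whose spectrum is $[0,+\infty[$, so its spectrum is $\srev_n(a)^2 - z + \R_+$; since $z \notin \mathfrak{S}_a$ means $\srev_n(a)^2 + r \neq z$ for every $n$ and $r \geq 0$, the resolvent $(-\D_x + \srev_n(a)^2 - z)\inv$ exists and, by the spectral theorem (Fourier multiplier $(\abs\xi^2 + \srev_n(a)^2 - z)\inv$), satisfies $\nr{(-\D_x + \srev_n(a)^2 - z)\inv}_{\Lc(L^2)} \leq d(\srev_n(a)^2 + \R_+, z)\inv \leq d(z,\mathfrak{S}_a)\inv$. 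I would define $v = \sum_n (-\D_x + \srev_n(a)^2 - z)\inv u_n \otimes \f_n(a)$; the uniform bound on the scalar resolvents plus the Riesz upper frame bound show the series converges in $L^2(\O)$ with $\nr v \lesssim d(z,\mathfrak{S}_a)\inv \nr u$, and one checks $v \in \Dom(\Ha)$ and $(\Ha - z)v = u$ using part (i) termwise. The final norm estimate then follows, giving both that $z$ is in the resolvent set and the quantitative bound.

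The main obstacle I expect is the justification of the transverse integration by parts in part (i), i.e. verifying rigorously that no boundary terms survive when pairing $-\D_y u(x,\cdot)$ against the eigenfunction of the \emph{adjoint} transverse operator. This requires using the precise (dissipative, $a$-dependent) boundary conditions in $\Dom(\Ta)$ and the \emph{conjugate} boundary conditions in $\Dom(T_{-a})$, and confirming they are compatible for almost every $x$; this is where the identity $\f^*_n(a) = \f_n(-a)$ and the relation $\Ta^* = T_{-a}$ do the real work. The convergence and measurability questions (that $x \mapsto u_n(x)$ is well-defined a.e. and in the right spaces, and that sums and operations commute with the $L^2(\O)$ limit) are more routine, handled as in the proof of Proposition \ref{prop-separation-variables} via the dominated convergence theorem and the uniform frame bounds.
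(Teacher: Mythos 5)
Your proposal is correct in substance but organized in the reverse order from the paper's proof, and the difference is worth noting. You prove (i) directly: pair $\Ha u$ slice by slice against $\f_n^*(a) = \f_n(-a)$, using that for almost every $x$ the slice $u(x,\cdot)$ lies in $\Dom(T_a)$ and that $\f_n(-a)$ is an eigenfunction of $T_a^* = T_{-a}$ with eigenvalue $\overline{\l_n(a)^2}$ (Proposition \ref{prop-srev}(ii)); then (ii) is obtained by summing the one-dimensional resolvents. The paper goes the other way: it first constructs $R(u) = \sum_n R_n(u)$ for $z \notin \mathfrak{S}_a$, proves convergence of the series in $H^1(\O)$ (this is where the second inequality of Proposition \ref{prop-riesz}, involving $\f_n'$, and the $z$-dependent elliptic bounds are needed), verifies the quadratic-form identity $q_a(R(u),\vf) - z \innp{R(u)}{\vf} = \innp{u}{\vf}$ and invokes Proposition \ref{prop-sectorial} to conclude $R(u) \in \Dom(\Ha)$ with $(\Ha-z)R(u) = u$; maximal accretivity identifies $R = (\Ha - z)\inv$ when $\Re z < 0$, statement (i) is then deduced by applying this at $z=-1$ together with uniqueness of Riesz-basis coefficients, and only afterwards is (ii) completed for general $z$. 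Your route replaces the whole $H^1$/quadratic-form step by the closedness of $\Ha$ (maximal accretive, hence closed): the partial sums $v_N$ lie in $\Dom(\Ha)$, $v_N \to v$ and $(\Ha - z)v_N \to u$ in $L^2(\O)$, so $v \in \Dom(\Ha)$ and $(\Ha - z)v = u$. That is a genuine simplification, and the obstacle you single out, the slice-wise integration by parts, is indeed the only real technical point of your version; it goes through because the trace condition $\partial_\n u = iau$ descends to almost every slice and because of the identities $T_a^* = T_{-a}$, $\l_n(-a) = \overline{\l_n(a)}$ recorded in the paper.

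There is, however, one step you cannot elide: producing a bounded operator $R$ with $(\Ha - z)R(u) = u$ for all $u \in L^2(\O)$ proves surjectivity of $\Ha - z$, not that $z$ is in the resolvent set; injectivity is also required, and your write-up passes directly from the right-inverse property to ``$z$ is in the resolvent set.'' Since the interesting values of $z$ lie \emph{below} the real axis, dissipativity gives no help there, so this must be argued. Fortunately it is one line from your part (i): if $w \in \Dom(\Ha)$ and $(\Ha - z)w = 0$, then $\sum_n (-\D_x + \l_n(a)^2 - z) w_n \otimes \f_n(a) = 0$, so by uniqueness of the Riesz-basis coefficients $(-\D_x + \l_n(a)^2 - z) w_n = 0$ for every $n$, and since $z \notin \l_n(a)^2 + \R_+ = \s(-\D_x + \l_n(a)^2)$ each $w_n = 0$, whence $w = 0$. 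Equivalently, the same computation shows $R\big((\Ha - z)w\big) = w$, which is exactly how the paper closes its argument. With that line added, your proof is complete.
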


\newcommand{\Uj}{\tilde R_n(u)}

\begin{proof}
\stepp Let $z \in \C \setminus \mathfrak{S}_a$ and $u \in L^2(\O)$. Let $u_n \in L^2(\R^{\dimm-1})$, $n\in\N$, be as above. For $n \in \N$ we set $\Uj  = (-\D_x + \srev_n(a)^2 - z)\inv u_n \in H^2 (\R^{\dimm-1})$ and $R_n(u)= \Uj \otimes \f_n(a) \in \Dom(\Ha)$. Using the standard spectral properties of the self-adjoint operator $-\D_x$, we see that on $L^2(\R^{\dimm-1})$ we have
\begin{equation} \label{bk45}
\nr{\Uj} \lesssim \frac {\nr{u_n}} {d(z, \srev_n(a)^2 +\R_+ )} \lesssim_z  \frac {\nr{u_n}}{\pppg n ^{2}} \quad \text{and} \quad \nr{\partial _x \Uj} \lesssim_z \frac {\nr{u_n}}{\pppg n }.
\end{equation}
The first estimate is uniform in $z$ but not the others. With Proposition \ref{prop-riesz} we obtain for $N,p \in \N$
\[
\nr{\sum_{n=N}^{N+p} R_n(u)}^2_{H^1(\O)} \lesssim_z \sum_{n=N}^{N+p} \left( \nr{\Uj}^2_{H^1(\R^{\dimm-1})} + \pppg n^2 \nr{\Uj}^2_{L^2(\R^{\dimm-1})} \right) \limt N \infty 0.
\]
This proves that the series $\sum R_n(u)$ converges to some $R(u) \in H^1(\O)$. Moreover, with the first inequality of \eqref{bk45} and Proposition \ref{prop-riesz} again, we see that 
\begin{equation} \label{estim-R}
 \nr{R(u)}_{L^2(\O)} \lesssim \frac {\nr{u}_{L^2(\O)}} {d (z ,\mathfrak S_a)},
\end{equation}
uniformly in $z$. It remains to see that 
\[
\forall \vf \in H^1(\O), \quad q_a(R(u),\vf) - z \innp {R(u)} \vf = \innp u \vf,
\]
which comes from the fact that this is true with $R(u)$ and $u$ replaced by $R_n(u)$ and $u_n \otimes \f_n(a)$ respectively. This proves that $R(u) \in \Dom(\Ha)$ and $(\Ha-z) R(u) = u$. If $\Re(z) < 0$, we already know that $(\Ha -z)$ has a bounded inverse on $L^2(\O)$, and hence we have $(\Ha -z)\inv = R$. This proves the second statement of the proposition when $\Re(z) < 0$.

\stepp Let $ u \in \Dom(\Ha)$ and $v = (\Ha+1) u \in L^2(\O)$. For $n \in \N$ and almost all $x \in \R^{\dimm-1}$ we set $v_n(x) = \innp{v(x,\cdot)}{\f_n^*(a)}_{L^2(0,\oo)}$. According to (ii) applied with $z = -1$ we have 
\[
u = (\Ha +1)\inv v = \sum_{n\in\N} \big(-\D_x + \l_n(a)^2 +1\big)\inv v_n \otimes \f_n(a).
\] 
By uniqueness for the decomposition of $u(x,\cdot)$ with respect to the Riesz basis $(\f_n(a))_{n\in\N}$, we have for all $n \in \N$
\[
u_n = \big(-\D_x + \l_n(a)^2 +1\big)\inv v_n.
\]
This proves that $u_n \in H^2(\R^{\dimm-1})$. Then 
\begin{align*}
\Ha u = v - u
& = \sum_{n \in \N}  \left( 1 - \big(-\D_x + \l_n(a)^2 +1\big)\inv\right) v_n \otimes \f_n(a)\\
& = \sum_{n \in \N}  \big(-\D_x + \l_n(a)^2 \big) u_n \otimes \f_n(a).
\end{align*}
This proves the first statement of the proposition.

\stepp It remains to finish the proof of (ii). Let $z \in \C \setminus \mathfrak{S}_a$ and $w \in \Dom(\Ha)$. With (i) we see that $R\big((\Ha-z)w\big) = w$. Since we already know that $(\Ha-z)R(u) = u$ for all $u \in L^2(\O)$, this proves that $R$ is a bounded inverse for $(\Ha -z)$ on $L^2(\O)$. The estimate on $(\Ha -z)\inv$ follows from \eqref{estim-R}, and the proposition is proved.
\end{proof}

As a first application of this proposition, we can check that the operator $\Ha$ has no eigenvalue, as is the case for $-\D_x$:

\begin{corollary} \label{cor-no-vp}
The operator $\Ha$ has no eigenvalue.
\end{corollary}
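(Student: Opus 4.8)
The plan is to exploit the separation of variables decomposition established in Proposition \ref{prop-dec-res}, combined with the elementary fact that the free Laplacian $-\D_x$ on $\R^{\dimm-1}$ has no eigenvalue (its spectrum is purely absolutely continuous). Suppose for contradiction that $z \in \C$ is an eigenvalue of $\Ha$ with eigenvector $u \in \Dom(\Ha) \setminus \singl 0$. By Proposition \ref{prop-separation-variables} we can write $u = \sum_{n\in\N} u_n \otimes \f_n(a)$ with $u_n \in L^2(\R^{\dimm-1})$, and by Proposition \ref{prop-dec-res}(i) these components satisfy $u_n \in H^2(\R^{\dimm-1})$ and
\[
\Ha u = \sum_{n\in\N} \big( - \D_x + \srev_n(a)^2 \big) u_n \otimes \f_n(a).
\]

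Next I would use the eigenvalue equation $\Ha u = z u$ together with the uniqueness of the expansion with respect to the Riesz basis $(\f_n(a))_{n\in\N}$ (Proposition \ref{prop-riesz}), which allows me to identify the coefficients term by term. This yields, for every $n \in \N$,
\[
\big( - \D_x + \srev_n(a)^2 \big) u_n = z\, u_n \quad \text{in } L^2(\R^{\dimm-1}),
\]
that is $-\D_x u_n = \big(z - \srev_n(a)^2\big) u_n$. Thus for each $n$ either $u_n = 0$, or $z - \srev_n(a)^2$ is an eigenvalue of $-\D_x$ with eigenfunction $u_n$.

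I would then conclude by invoking the standard spectral properties of $-\D_x$ on $\R^{\dimm-1}$: via the Fourier transform it is unitarily equivalent to multiplication by $\abs \x^2$, so its spectrum is $\R_+$ and it has no eigenvalue at all. Consequently the second alternative is impossible, forcing $u_n = 0$ for every $n \in \N$, whence $u = 0$, a contradiction. There is no genuine obstacle here: the only point requiring care is the term-by-term identification of coefficients, which is legitimate precisely because $(\f_n(a))_{n\in\N}$ is a Riesz basis and the two expansions of $\Ha u$ and $z u$ therefore have the same coefficients. Everything else reduces to the absence of point spectrum for the flat Laplacian.
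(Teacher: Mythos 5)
Your proof is correct and follows essentially the same route as the paper's: decompose via Proposition \ref{prop-dec-res}(i), identify coefficients using the Riesz basis property of $(\f_n(a))_{n\in\N}$, and conclude from the absence of point spectrum of $-\D_x$ on $\R^{\dimm-1}$. The only cosmetic difference is that the paper applies Proposition \ref{prop-dec-res}(i) directly to $(\Ha-z)u=0$ rather than equating two expansions, but the underlying argument is identical.
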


\begin{proof}
Let $z \in \C$ and $u  \in \Dom(\Ha)$ be such that $(\Ha-z) u =0$. Then according to the first item of Proposition \ref{prop-dec-res} we have
\[
\sum_{n \in \N} \left( -\D + \srev_n(a)^2 - z \right) u_n \otimes \f_n(a) = 0.
\]
This implies that for all $n \in \N$ we have 
\[
 \left( -\D + \srev_n(a)^2 - z \right) u_n = 0
\]
in $L^2(\R^{\dimm-1})$, and hence $u_n = 0$ since the operator $-\D_x$ has no eigenvalue. Finally $u = 0$, and the proposition is proved.
\end{proof}

However, the main point in Proposition \ref{prop-dec-res} is the second:

\begin{corollary} \label{cor-gap-spectral-aconstant}
Theorem \ref{th-gap-spectral} holds when $a > 0$ is constant.
\end{corollary}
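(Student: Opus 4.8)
The plan is to combine the explicit description of $\mathfrak S_a$ with the resolvent estimate already obtained in Proposition \ref{prop-dec-res}(ii), which provides $\nr{(\Ha - z)\inv}_{\Lc(L^2(\O))} \leq C / d(z,\mathfrak S_a)$ for every $z \in \C \setminus \mathfrak S_a$. Since $\mathfrak S_a$ is the union over $n \in \N$ of the horizontal half-lines $\srev_n(a)^2 + \R_+$, and each such half-line has constant imaginary part $\Im(\srev_n(a)^2)$, the whole statement reduces to controlling these imaginary parts from above by a strictly negative constant: that will simultaneously place a strip below the real axis in the resolvent set and keep $d(z,\mathfrak S_a)$ bounded below there.

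The crucial step is therefore to establish the uniform gap
\[
\sup_{n \in \N} \Im \big( \srev_n(a)^2 \big) < 0 .
\]
I would write $\Im(\srev_n(a)^2) = 2 \Re(\srev_n(a)) \Im(\srev_n(a))$ and invoke Proposition \ref{prop-srev}: part (iii) gives $\Re(\srev_n(a)) > 0$ and part (iv) gives $\Im(\srev_n(a)) < 0$, so every term is strictly negative. By part (v) we have $\Im(\srev_n(a)^2) \to -4a/\oo < 0$ as $n \to \infty$, so the tail of the sequence stays bounded away from $0$; combined with the finitely many strictly negative initial terms, this yields a strictly negative supremum. I then set $\tilde \g = - \tfrac 12 \sup_{n} \Im(\srev_n(a)^2) > 0$.

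It then remains only to read off the conclusion. For any $z$ with $\Im z \geq - \tilde \g$ and any $w = \srev_n(a)^2 + r \in \mathfrak S_a$ (with $n \in \N$, $r \geq 0$) one has $\Im w = \Im(\srev_n(a)^2) \leq -2\tilde\g$, whence
\[
\abs{z - w} \geq \abs{\Im z - \Im w} \geq \tilde \g .
\]
Taking the infimum over $w \in \mathfrak S_a$ gives $d(z,\mathfrak S_a) \geq \tilde \g$; in particular every such $z$ lies in $\C \setminus \mathfrak S_a$, hence in the resolvent set of $\Ha$, and Proposition \ref{prop-dec-res}(ii) yields $\nr{(\Ha - z)\inv}_{\Lc(L^2(\O))} \leq C/\tilde\g$. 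This is precisely the assertion of Theorem \ref{th-gap-spectral}. The only genuine work lies in the uniform gap of the middle step; once the spectrum-avoiding strip is known to sit strictly below the real axis, the resolvent bound comes for free because the horizontal geometry of the rays making up $\mathfrak S_a$ keeps their distance to the strip bounded below throughout.
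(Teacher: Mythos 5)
Your proof is correct and follows essentially the same route as the paper: both arguments combine the asymptotic expansion of Proposition \ref{prop-srev}~(v) with the pointwise negativity of $\Im(\srev_n(a)^2)$ to obtain a uniform strip $\Im(z) \leq -2\tilde\g$ containing $\mathfrak S_a$, and then invoke the resolvent bound $C/d(z,\mathfrak S_a)$ of Proposition \ref{prop-dec-res}~(ii). The paper states this in two lines; you merely make explicit the factorization $\Im(\srev_n(a)^2) = 2\Re(\srev_n(a))\Im(\srev_n(a))$ via parts (iii)--(iv) and the elementary distance estimate, which is exactly the content the paper leaves implicit.
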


\begin{proof}
According to the last statement of Proposition \ref{prop-srev} and the fact that $\Im(\srev_n(a)^2) < 0$ for all $n \in \N$, there exists $\tilde \g > 0$ such that for all $z \in \mathfrak{S}_a$ we have $\Im(z) \leq - \tilde \g$. With Proposition \ref{prop-dec-res}, the conclusion follows.
\end{proof}

\begin{remark} \label{rem-a-diff-constantes}
To simplify the proof we have only considered the case where $a$ is equal to the same constant on both sides of the boundary. However we can similarly consider the case where $a$ is equal to some positive constant $\ao$ on $\R^{\dimm-1} \times \singl 0$ and to another positive constant $\al$ on $\R^{\dimm-1} \times \singl l$. One of these two constants may even be zero. We refer to Section \ref{sec-non-diss} for this kind of computations.
\end{remark}

\section{Non-constant absorption index} \label{sec-a-variable}

\newcommand{\Hao}{H_{a_0}}
\newcommand{\tHa}{\tilde H_{a}}
\newcommand{\tHao}{\tilde H_{a_0}}

In this section we prove Theorems \ref{th-gap-spectral} and \ref{th-smoothing} for a non-constant absorption index $a$. For $b \in W^{1,\infty}(\partial \O)$ we denote by $\Th_b \in \Lc(H^1(\O) , H\inv(\O))$ the operator such that for all $\f ,\p \in H^1(\O)$ 
\[
\innp{\Th_b \f}\p _{H\inv(\O),H^1(\O)} = \int_{\partial \O} b\f \bar \p .
\]
We denote by $\th_b$ the corresponding quadratic form on $H^1(\O)$. 
We also denote by $\tHa$ the operator in $\Lc \big(H^1(\O),H\inv(\O) \big)$ such that $\innp{\tHa \f}\p _{H\inv,H^1} = q_a(\f,\p)$ for all $\f,\p \in H^1(\O)$. Let $z \in \C_+$. According to the Lax-Milgram Theorem, $(1+i)(\tHa-z)$ is an isomorphism from $H^1(\O)$ to $H\inv(\O)$. Moreover, for $f \in L^2(\O) \subset H\inv(\O)$ we have 
\[
(\tHa -z)\inv f = (\Ha -z)\inv f.
\]

The following proposition relies on a suitable version of the so-called quadratic estimates:

\begin{proposition} \label{prop-quadratic-estimates}
Let $a_0 > 0$ be as in the statement of Theorem \ref{th-energy-decay}. Assume that \eqref{hyp-a0-a-a1} holds everywhere on $\partial\O$. Let $B \in \Lc\big(H^1(\O),L^2(\O) \big)$. Then there exists $C \geq 0$ such that for all $z \in \C_+$ we have 
\[
\nr{B (\tHa-z)\inv B^*}_{\Lc(L^2(\O))} \leq C \nr{B (\tHao-z)\inv B^*}_{\Lc(L^2(\O))}.
\]
\end{proposition}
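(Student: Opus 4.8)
The plan is to relate the resolvent of $\tHa$ (with the full absorption index $a$) to that of $\tHao$ (with the constant $a_0$ on both sides) by treating the difference as a perturbation in the quadratic form. Write $b = a - a_0 \in W^{1,\infty}(\partial\O)$, so $b \geq 0$ everywhere by \eqref{hyp-a0-a-a1}, and observe that on the form level
\[
\tHa - z = (\tHao - z) - i \Th_b,
\]
where $\Th_b \in \Lc(H^1(\O),H\inv(\O))$ is the non-negative operator associated to the boundary form $\th_b$. The key structural feature to exploit is that $b \geq 0$, so $\Th_b$ factors through multiplication by a non-negative function on the boundary.

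\textbf{The resolvent identity and the quadratic form of the perturbation.} First I would write a second-resolvent (Born-series) type identity for $(\tHa - z)\inv$ in terms of $(\tHao - z)\inv$ and $\Th_b$. Factoring $\Th_b = \s_b^* \s_b$, where $\s_b \in \Lc(H^1(\O),L^2(\partial\O))$ is (essentially) multiplication by $\sqrt b$ composed with the trace, the identity takes the schematic form
\[
(\tHa - z)\inv = (\tHao - z)\inv - i (\tHao - z)\inv \s_b^* \big( 1 + i \s_b (\tHao - z)\inv \s_b^* \big)\inv \s_b (\tHao - z)\inv.
\]
The crucial quantity is then the sandwiched operator $M(z) := \s_b (\tHao - z)\inv \s_b^*$ acting on $L^2(\partial\O)$: if I can show $1 + i M(z)$ is boundedly invertible uniformly in $z \in \C_+$, with a bound controlled by $\nr{B(\tHao - z)\inv B^*}$, then conjugating the identity by $B$ on the left and $B^*$ on the right gives the desired estimate. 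The sign $b \geq 0$ makes $M(z)$ dissipative-type, which should give invertibility of $1 + iM(z)$ without any smallness hypothesis: indeed $\Im\innp{(1+iM(z))\f}{\f} = \Re\innp{M(z)\f}{\f}$ should have a favorable sign coming from the accretivity of $\tHao - z$ on $\C_+$, so that $1+iM(z)$ has a bounded inverse of norm $\leq 1$ (or at least uniformly bounded). This is the heart of the so-called quadratic-estimate trick.

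\textbf{Transferring the bound through $B$.} Once $1 + i M(z)$ is uniformly invertible, the remaining task is bookkeeping: I must bound each of the two outer factors $B(\tHao-z)\inv \s_b^*$ and $\s_b (\tHao - z)\inv B^*$ in terms of $\nr{B(\tHao - z)\inv B^*}$. Here I would use that $\s_b$ and $B$ are both bounded from $H^1(\O)$ to an $L^2$ space, together with the accretivity estimate $\Re\innp{(\tHao - z)\f}{\f} \gtrsim \nr{\f}_{H^1}^2 - $ lower order (valid on $\C_+$ since $a_0 > 0$), which lets one dominate mixed quadratic forms $\nr{C(\tHao - z)\inv D^*}$ by the diagonal ones $\nr{C(\tHao-z)\inv C^*}$ and $\nr{D(\tHao-z)\inv D^*}$ via Cauchy–Schwarz. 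Since $\s_b$ factors through the same boundary trace that appears in $\th_{a_0}$, one can further bound $\nr{\s_b(\tHao-z)\inv \s_b^*}$ by a constant (absorbing it into the imaginary part of the form), reducing everything to $\nr{B(\tHao - z)\inv B^*}$.

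\textbf{Main obstacle.} I expect the genuine difficulty to be the uniform-in-$z$ control as $z$ approaches the real axis from $\C_+$, where $(\tHao - z)\inv$ is large in operator norm: the whole point is that the \emph{sandwiched} quantities stay bounded even though the resolvent does not. Making the dissipativity argument for $1 + iM(z)$ quantitative and $z$-uniform — rather than merely qualitative — is the crux, and it is exactly what forces one to phrase everything through boundary-trace quadratic forms rather than through the resolvent directly. The $W^{1,\infty}$ regularity of $b$ and the sign condition $b \geq 0$ are what make this work without any smallness assumption on $\nr{a}_{L^\infty}$.
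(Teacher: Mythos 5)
Your overall architecture --- factor the boundary perturbation as $\Th_b = \s_b^*\s_b$ with $b = a - a_0 \geq 0$ and $\s_b : \f \mapsto \sqrt b\, \f|_{\partial\O}$, write a Konno--Kuroda/Krein-type formula with the sandwiched operator $M(z) = \s_b(\tHao - z)\inv \s_b^*$ on $L^2(\partial\O)$, and prove a uniform bound for the inverse of the middle factor --- is a viable alternative to the paper's argument and can in fact be completed. But the two steps on which everything rests are justified by a false premise: you invoke \emph{accretivity} (real-part positivity) of $\tHao - z$ on $\C_+$, and this simply fails. For $u \in H^1(\O)$ one has $\Re \innp{(\tHao - z)u}{u}_{H\inv,H^1} = \nr{\nabla u}^2_{L^2(\O)} - \Re(z)\nr{u}^2_{L^2(\O)}$, which has no sign when $\Re z > 0$ and certainly no uniform coercivity of the type ``$\gtrsim \nr{u}_{H^1}^2$ minus lower order'' as $\Re z \to +\infty$. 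Consequently your invertibility argument for the middle factor does not stand ($\Re\innp{M(z)g}{g}$ has no favorable sign), and the Cauchy--Schwarz reduction of the outer factors $B(\tHao-z)\inv\s_b^*$ and $\s_b(\tHao-z)\inv B^*$ to $\nr{B(\tHao-z)\inv B^*}$ is left without proof. There is also a sign error: for the perturbation $\tHa = \tHao - i\Th_b$ the correct formula reads
\[
(\tHa - z)\inv = (\tHao - z)\inv + i\,(\tHao - z)\inv \s_b^* \big(1 - iM(z)\big)\inv \s_b (\tHao - z)\inv ,
\]
and it is $1 - iM(z)$, not $1 + iM(z)$, whose invertibility one can hope to prove; with your sign the middle factor need not be invertible at all.

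The missing idea --- which is exactly the heart of the paper's proof --- is to use \emph{dissipativity}, i.e. imaginary parts. For $F \in H\inv(\O)$, $z \in \C_+$ and $u = (\tHao - z)\inv F$ one has the identity $\th_{a_0}(u) + \Im(z)\nr{u}_{L^2(\O)}^2 = -\Im \innp{F}{u}_{H\inv,H^1}$, so boundary forms of resolvents are controlled by sandwiched resolvents. Applied with $F = \s_b^* g$ it gives $\Im\innp{M(z)g}{g} = \th_{a_0}(u) + \Im(z)\nr u^2 \geq 0$, hence $\Re\innp{(1-iM(z))g}{g} \geq \nr g^2$ and $\nr{(1-iM(z))\inv} \leq 1$ uniformly on $\C_+$; applied with $F = B^*\p$, together with $\th_b \leq \a\, \th_{a_0}$ where $\a = (a_1-a_0)/a_0$, it gives $\nr{\s_b(\tHao - z)\inv B^*\p}_{L^2(\partial\O)}^2 = \th_b\big((\tHao-z)\inv B^*\p\big) \leq \a \nr{B(\tHao-z)\inv B^*}\,\nr\p^2$, and the same for the other outer factor with $\tHao^*$ and $\bar z$. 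Plugging these into the corrected formula proves the proposition with $C = 1 + \a$. Note that the paper reaches the same goal with less machinery: a single application of the resolvent identity $B(\tHa-z)\inv B^* = B(\tHao-z)\inv B^* - B(\tHao-z)\inv \Th_{a-a_0}(\tHa-z)\inv B^*$, Cauchy--Schwarz for the non-negative form $\th_{a-a_0}$, the same imaginary-part identity applied once to $\tHa$ (using $a - a_0 \leq a$) and once to $\tHao^*$ (using $a-a_0 \leq \a a_0$), and finally Young's inequality to absorb the resulting term $\frac 12 \nr{B(\tHa-z)\inv B^*}$; no factorization of $\Th_b$ and no inversion of a sandwiched operator are needed.
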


\begin{proof}
For $z \in \C_+$ the resolvent identity applied to $\tHa = \tHao + \Th_{a-a_0}$ gives 
\begin{equation} \label{eq-res-identity}
B(\tHa -z)\inv B^* = B (\tHao -z)\inv B^* - B (\tHao -z)\inv \Th_{a-a_0} (\tHa -z)\inv B^*.
\end{equation}
Let $\f,\p \in L^2(\O)$. Since $\Th_{a-a_0}$ is associated to a non-negative quadratic form on $H^1(\O)$, the Cauchy-Schwarz inequality gives 
\begin{eqnarray*}
\lefteqn{\innp{B(\tHao -z)\inv \Th_{a-a_0} (\tHa -z)\inv B^*\f} \p _{L^2}}\\
&& = \innp{ \Th_{a-a_0} (\tHa -z)\inv B^* \f} {(\tHao^* - \bar z)\inv B^* \p} _{H\inv,H^1}\\
&& \leq \th_{a-a_0} \big((\tHa -z)\inv B^* \f \big)^{\frac 12}  \times \th_{a-a_0} \big((\tHao^* -\bar z)\inv B^* \p\big) ^{\frac 12}.
\end{eqnarray*}
The first factor is estimated as follows:
\begin{eqnarray*}
\lefteqn{\th_{a-a_0} \big((\tHa -z)\inv B^* \f \big)\leq \th_{a} \big((\tHa -z)\inv B^* \f \big)}\\
&& \leq \frac 1 {2i} \innp{ 2i (\Th_{a} + \Im z) (\tHa -z)\inv B^* \f} {(\tHa - z)\inv B^* \f} _{H\inv,H^1}\\
&& \leq \frac 1 {2i} \innp{ B \big((\tHa -z)\inv-(\tHa^* - \bar z)\inv \big)  B^*\f} {\f} _{L^2}\\
&& \leq \nr{B(\tHa-z)\inv B^*}_{\Lc(L^2)} \nr{\f}_{L^2}^2.
\end{eqnarray*}
We can proceed similarly for the other factor, using the fact that $a-a_0 \leq \a a_0$ for some $\a \geq 0$. Thus there exists $C \geq 0$ such that 
\begin{align*}
\nr{B (\tHao -z)\inv \Th_{a-a_0} (\tHa -z)\inv B^* }
& \leq C \nr{B (\tHao -z)\inv B^*}^{\frac 12} \nr{B (\tHa -z)\inv B^*}^{\frac 12}\\
& \leq \frac {C^2}2 \nr{B (\tHao -z)\inv B^*} + \frac 12 \nr{B (\tHa -z)\inv B^*}.
\end{align*}
With \eqref{eq-res-identity}, the conclusion follows.
\end{proof}

Now we can finish the proof of Theorem \ref{th-gap-spectral}:

\begin{proof}[Proof of Theorem \ref{th-gap-spectral}]
According to Corollary \ref{cor-gap-spectral-aconstant} and Proposition \ref{prop-quadratic-estimates} applied with $B = \Id_{L^2(\O)}$ (note that we can simply replace $\tHa$ by $\Ha$ when $B \in \Lc(L^2(\O))$), there exists $C > 0$ such that $\nr{(\Ha-z)\inv}_{\Lc(L^2(\O))} \leq C$ for all $z \in \C_+$. Then it only remains to choose $\tilde \g \in \big] 0, \frac 1 {C} \big[$ to conclude. The second statement, concerning the case where $a$ vanishes on one side of the boundary, is proved similarly.
\end{proof}

Let us now turn to the proof of Theorem \ref{th-smoothing}. We first prove another resolvent estimate in which we see the smoothing effect in weighted spaces:

\begin{proposition} \label{prop-weighted}
Let $\delta > \frac 12$. Then there exists $C \geq 0$ such that for all $z \in \mathbb C_+$ we have 
\[
\nr{\left< x \right> ^{-\d} (1- \Delta_x)^{\frac 14} (H_a-z)^{-1} (1- \Delta_x)^{\frac 14} \left< x \right> ^{-\delta}}_{\mathcal L (L^2(\Omega))} \leq C.
\]
\end{proposition}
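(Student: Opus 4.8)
The plan is to use the quadratic estimate of Proposition~\ref{prop-quadratic-estimates} to pass from the variable index $a$ to the constant index $a_0$, and then to treat the constant case by separation of variables together with a limiting absorption estimate for the free Laplacian. Take $B = \pppg x^{-\d}(1-\D_x)^{\frac14}$. Since $(1-\D_x)^{\frac14}$ maps $H^1(\O)$ into $L^2(\O)$ and $\pppg x^{-\d}$ is bounded on $L^2(\O)$, we have $B \in \Lc(H^1(\O),L^2(\O))$ with adjoint $B^* = (1-\D_x)^{\frac14}\pppg x^{-\d}$, and the operator whose norm is to be estimated is exactly $B(\tHa - z)\inv B^*$ (interpreted through the extension $\tHa \in \Lc(H^1(\O),H\inv(\O))$ of $\Ha$ introduced above). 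Proposition~\ref{prop-quadratic-estimates} then reduces the claim to the uniform bound
\[
\sup_{z \in \C_+} \nr{B(\tHao - z)\inv B^*}_{\Lc(L^2(\O))} < \infty,
\]
i.e.\ to the same estimate with the constant index $a_0$.

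For the constant index I would invoke the resolvent decomposition of Proposition~\ref{prop-dec-res}. For $z \in \C_+$ we have $z \notin \mathfrak S_{a_0}$ (by Corollary~\ref{cor-gap-spectral-aconstant}, $\mathfrak S_{a_0} \subset \singl{z \st \Im z \leq -\tilde\g}$), so $(\tHao-z)\inv$ respects the transverse decomposition. As $B$ and $B^*$ act only in the variable $x$, the operator $B(\tHao-z)\inv B^*$ sends $u = \sum_{\nnn} u_n \otimes \f_n(a_0)$, with $u_n(x) = \innp{u(x,\cdot)}{\f_n^*(a_0)}_{L^2(0,\oo)}$, to $\sum_{\nnn}\big(R_n(z)u_n\big)\otimes\f_n(a_0)$, where
\[
R_n(z) = \pppg x^{-\d}(1-\D_x)^{\frac14}\big(-\D_x + \l_n(a_0)^2 - z\big)\inv (1-\D_x)^{\frac14}\pppg x^{-\d}
\]
acts on $L^2(\R^{\dimm-1})$. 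Applying the Riesz basis property (Proposition~\ref{prop-riesz}) fibrewise in $x$ and integrating over $x \in \R^{\dimm-1}$ gives $\nr{v}_{L^2(\O)}^2 \asymp \sum_{\nnn} \nr{v_n}_{L^2(\R^{\dimm-1})}^2$ for $v = \sum_{\nnn} v_n \otimes \f_n(a_0)$, whence
\[
\nr{B(\tHao - z)\inv B^*}_{\Lc(L^2(\O))} \lesssim \sup_{\nnn} \nr{R_n(z)}_{\Lc(L^2(\R^{\dimm-1}))}.
\]

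It remains to bound $R_n(z)$ uniformly in $n$ and $z \in \C_+$. For this I would use the limiting absorption estimate for the free Laplacian on $\R^{\dimm-1}$: for $\d > \frac12$ there is $C_0 \geq 0$ such that for every $\z \in \C$ with $\Im \z > 0$,
\[
\nr{\pppg x^{-\d}(1-\D_x)^{\frac14}(-\D_x - \z)\inv(1-\D_x)^{\frac14}\pppg x^{-\d}}_{\Lc(L^2(\R^{\dimm-1}))} \leq C_0.
\]
This is the resolvent form of the Kato smoothing estimate in the unbounded directions, exactly as in the undamped case $a = 0$ (see \cite{danconar12}); the factors $(1-\D_x)^{\frac14}$ are what render the bound uniform down to zero energy, and this is where the hypothesis $\d > \frac12$ is used. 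To apply it with $\z = z - \l_n(a_0)^2$ I would check $\Im \z > 0$: indeed $\Im z > 0$ for $z \in \C_+$, while $\Im\big(\l_n(a_0)^2\big) < 0$ for every $n$, since $\Re \l_n(a_0) > 0$ and $\Im \l_n(a_0) < 0$ by Proposition~\ref{prop-srev}~(iii)--(iv). Hence $\nr{R_n(z)}_{\Lc(L^2(\R^{\dimm-1}))} \leq C_0$ for all $n$ and all $z \in \C_+$, which closes the argument.

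The \emph{main obstacle} is the uniform scalar estimate of the last step. When $\Re \z \to -\infty$ (which is the regime of large $n$, since $\Re\big(\l_n(a_0)^2\big) \to +\infty$) the resolvent is norm-small and the bound is elementary; the genuine content is uniformity as $\Im \z \to 0^+$, which is forced on us because $\Im z$ may be arbitrarily small for $z \in \C_+$. This is precisely the limiting absorption principle and relies on the smoothing factors $(1-\D_x)^{\frac14}$. Everything else — that $B \in \Lc(H^1(\O),L^2(\O))$, the identification of the target operator with $B(\tHa-z)\inv B^*$, and the fibrewise use of the Riesz basis — is routine bookkeeping.
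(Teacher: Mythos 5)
Your overall architecture coincides with the paper's: reduce to the constant index $a_0$ via Proposition \ref{prop-quadratic-estimates} with $B = \pppg x^{-\d}(1-\D_x)^{\frac14}$ (including the remark about passing through $\tilde H_a$), then use separation of variables as in Proposition \ref{prop-dec-res} to reduce everything to a scalar resolvent estimate on $\R^{\dimm-1}$. The gap is in the scalar estimate you invoke. The claimed bound
\[
\sup_{\Im \z > 0}\ \nr{\pppg x^{-\d}\,(1-\D_x)^{\frac14}(-\D_x-\z)\inv(1-\D_x)^{\frac14}\,\pppg x^{-\d}}_{\Lc(L^2(\R^{\dimm-1}))} < \infty
\]
is \emph{false}: it fails as $\z \to 0$. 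The factors $(1-\D_x)^{\frac14}$ are comparable to the identity at low frequencies, so they cannot compensate the low-energy singularity of the weighted resolvent; they help at \emph{high} energy, which is the opposite of what you assert. Concretely, in $\dimm-1=1$ the kernel of $(-\partial_x^2 - i\e)\inv$ is $\frac{i}{2k}e^{ik\abs{x-y}}$ with $k=\sqrt{i\e}$, and testing on a bump of width $\e^{-1/2}$ shows the weighted norm blows up as $\e \to 0$ (zero is a resonance); in $\dimm-1\geq 3$ a scaling argument (the operator $\abs x^{-\d}(-\D_x)\inv\abs x^{-\d}$ is homogeneous of degree $2-2\d\neq 0$) shows the zero-energy bound requires $\d\geq 1$, not $\d>\frac12$. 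The genuine Kato--Yajima smoothing estimate valid down to zero energy involves the \emph{homogeneous} factor $(-\D_x)^{\frac14}$, which vanishes at frequency zero; that is not the operator in the proposition.

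The gap is fixable, and the fix is precisely the ingredient you have available but never use: the transverse eigenvalues satisfy a \emph{uniform} gap, $\sup_{n}\Im\big(\l_n(a_0)^2\big) \leq -\tilde\g < 0$ (Proposition \ref{prop-srev} (iv)--(v), i.e.\ Corollary \ref{cor-gap-spectral-aconstant}), whereas you only check $\Im\big(\l_n(a_0)^2\big)<0$ for each fixed $n$, which yields $\Im\z>0$ but not a lower bound on $\Im\z$. With the uniform gap, $\z = z-\l_n(a_0)^2$ satisfies $\Im \z \geq \tilde\g$ for all $n\in\N$ and $z\in\C_+$, so the problematic neighbourhood of $\z=0$ is never entered. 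On the region $\singl{\Im\z\geq\tilde\g}$ the needed estimate is true: for $\abs{\Re\z}$ bounded it follows from the elementary (unweighted) spectral bound $\nr{(1-\D_x)^{\frac12}(-\D_x-\z)\inv} = \sup_{\m\geq 0}(1+\m)^{\frac12}\abs{\m-\z}\inv \lesssim 1$, and for $\abs{\Re\z}\gg1$ from the high-energy weighted estimate $\nr{\pppg x^{-\d}(-\D_x-\z)\inv \pppg x^{-\d}}\lesssim \pppg\z^{-\frac12}$. This splitting is exactly why the paper states the free resolvent estimate only for $\abs{\Re\z}\gg 1$ rather than for all $\z$ in the upper half-plane.
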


\begin{proof}
It is known for the free laplacian that for $\left| \mathop{\rm{Re}} (\zeta) \right| \gg 1$ and $\mathop{\rm{Im}}(\zeta) > 0$ we have 
\[
\left\Vert \left< x \right>^{-\delta} (-\Delta_x -\zeta)^{-1} \left< x \right>^{-\delta}\right\Vert _{\mathcal {L}(L^2(\mathbb {R}^{d-1}))} \lesssim \left< \z \right > ^{-\frac 12}
\]
and hence 
\[
\left\Vert\left< x \right>^{-\d} (1- \D_x)^{\frac 14}(-\Delta_x -\zeta)^{-1} (1- \Delta_x)^{\frac 14}\left< x \right>^{-\delta} \right \Vert_{\mathcal L (L^2(\mathbb R^{d-1}))} \lesssim 1.
\]
Thus if $a$ is constant we have 
\[
\left\Vert\left< x \right>^{-\d} (1- \D_x)^{\frac 14}(-\Delta_x + \lambda_n(a)^2 -z)^{-1} (1- \Delta_x)^{\frac 14}\left< x \right>^{-\d}\right \Vert_{\mathcal {L}(L^2(\mathbb {R}^{d-1})) } \lesssim 1,
\]
uniformly in $z \in \C_+$ and $n \in \mathbb {N}$. In this case we obtain the result using the separation of variables as in Section \ref{sec-separation}. Then we conclude with Proposition \ref{prop-quadratic-estimates} applied with $B = \left< x \right> ^{-\delta} (1-\Delta_x)^{\frac 14}$. In fact, we first obtain an estimate on the resolvent $(\tilde H_a -z)^{-1}$, but this proves that the operator $\left< x \right> ^{-\d} (1- \Delta_x)^{\frac 14} (H_a-z)^{-1} (1- \Delta_x)^{\frac 14} \left< x \right> ^{-\delta}$ extends to a bounded operator on $L^2(\Omega)$, and then the same estimate holds for the corresponding closure.
\end{proof}

\begin{proof} [Proof of Theorem \ref{th-smoothing}]
With the second estimate of Proposition \ref{prop-weighted}, we can apply the theory of relatively smooth operators (see \S XIII.7 in \cite{rs4}). However, since $\Ha$ is not self-adjoint but only maximal dissipative, we have to use a self-adjoint dilation (see \cite{nagyf}) of $\Ha$, as is done in the proof of \cite[Proposition 5.6]{art-mourre} (see also Proposition 2.24 in \cite{these}).
\end{proof}

\section{Time decay for the Schr\"odinger equation}  \label{sec-time-decay}

In this section we prove Theorem \ref{th-energy-decay}.
Let $u_0 \in \Dom(\Ha)$ and let $u$ be the solution of the problem \eqref{schrodinger}. We know that $\nr{u(t)}_{L^2(\O)} \leq \nr{u_0}_{L^2(\O)}$ for all $t \geq 0$, so the result only concerns large times.
Let $\tilde \g > 0$ be given by Theorem \ref{th-gap-spectral} and $\g = \tilde \g / 3$.
Let $\h \in C^\infty(\R)$ be equal to 0 on $]-\infty,1]$ and equal to 1 on $[2,+\infty[$. For $t \in \R$ we set 
\[
u_\h (t) =  \h(t) u(t),
\]
and for $z \in \C_+$:
\[
v(z) = \int_\R e^{itz} u_\h(t) \, dt.
\]
The map $t \mapsto e^{-\g t}u_\h(t)$ belongs to $L^1(\R) \cap L^2(\R) \cap C^1(\R)$ and its derivative is in $L^1(\R)$ so $\t \mapsto v(\t+i\g)$ is bounded and decays at least like $\pppg \t \inv$. In particular it is in $L^2(\R)$.
For $R > 0$ we set 
\[
u_R(t) = \frac 1 {2\pi} \int_{-R}^R e^{-it(\t+i\g)} v(\t+i\g) \, d\t.
\]
Then
\[
\nr{e^{-t\g} (u_\h- u_R)}_{L^2(\R_t,L^2(\O))} \limt R {+\infty} 0.
\]
Since $u_\h$ is continuous, Theorem \ref{th-energy-decay} will be proved if we can show that there exists $C \geq 0$  which does not depend on $u_0$ and such that for all $t \geq 0$ we have 
\begin{equation} \label{estim-uR}
\limsup_{R \to \infty} \nr{u_R(t)}_{L^2(\O)} \leq C e^{-\g t} \nr{u_0}_{L^2(\O)}.
\end{equation}
For $z \in \C$ we set 
\[
\th(z) = -i \int_\R e^{itz} \h'(t) u(t) \, dt = -i \int_1^2 e^{itz} \h'(t) u(t) \, dt.
\]
Let $z \in \C_+$. We multiply \eqref{schrodinger} by $\h(t) e^{itz}$ and integrate over $t \in \R$. After partial integration we obtain
\begin{equation*} 
 v(z) = (\Ha -z)\inv \th(z).
\end{equation*}
Then $v$ extends to a holomorphic function on $\C_{3\g}$, taking this equality as a definition.
According to the Cauchy Theorem we have in $L^2(\R_t)$
\begin{equation} \label{estim-uR-w}
\begin{aligned}
\lim_{R \to \infty} u_R(t)
= \frac 1 {2\pi} \lim_{R \to \infty} \int_{-R}^R e^{-it(\t-2i\g)} v(\t-2i\g)\,d\t=  {e^{-2\g t}}  \lim_{R \to \infty} \widetilde {u_R}(t),
\end{aligned}
\end{equation}
where for $t \in \R$ we have set 
\[
\widetilde {u_R}(t) =  \int_{-R}^R e^{-it\t}v(\t-2i\g) \, d\t.
\]
According to Plancherel's equality and Theorem \ref{th-gap-spectral} we have uniformly in $R >0$:
\begin{align*}
\int_\R \nr{\widetilde {u_R} (t)}^2_{L^2(\O)} \, dt 
& = \int_{-R}^R \nr{\big(\Ha -(\t-2i\g)\big)\inv \th(\t-2i\g)}_{L^2(\O)}^2 \, d\t \\
& \lesssim \int_\R \nr{\th(\t-2i\g)}_{L^2(\O)}^2 \, d\t \\
& \lesssim \int_\R e^{2\g t}\abs{\h'(t)} \nr{u(t)}_{L^2(\O)}^2 \, dt \\
& \lesssim \nr{u_0}^2_{L^2(\O)}.
\end{align*}
In particular there exists $C \geq 0$ such that for $u_0 \in \Dom(\Ha)$ and $R > 0$ we can find $T(u_0,R) \in [0,1]$ which satisfies
\[
\nr{\widetilde {u_R} (T(u_0,R))}_{L^2(\O)} \leq C \nr{u_0}_{L^2(\O)}.
\]
Let $R > 0$. Then $\widetilde {u_R} \in C^1(\R)$ and for $t \geq 1$ we have
\[
\widetilde {u_R} (t) = e^{-i(t-T(u_0,R))\Ha} \widetilde {u_R} (T(u_0,R)) +  \int_{T(u_0,R)} ^t \frac {\partial}{\partial s} \left( e^{-i(t-s)\Ha} \widetilde {u_R}(s) \right) \,ds,
\]
where
\begin{align*}
\frac {\partial}{\partial s} \left( e^{-i(t-s)\Ha} \widetilde {u_R}(s) \right)
& = \frac {\partial}{\partial s}\int_{-R}^R e^{-is\t} e^{-i(t-s)\Ha} \big(\Ha -(\t-2i\g)\big)\inv \th(\t-2i\g) \, d\t\\
& = i \int_{-R}^R e^{-i(t-s)\Ha} e^{-is\t} (\Ha-\t) \big( \Ha - (\t-2i\g) \big) \inv \th(\t-2i\g) \, d\t\\
& = 2\g e^{-i(t-s)\Ha} \widetilde {u_R} (s) + i e^{-i(t-s)\Ha}  \int_{-R}^R e^{-is\t}  \th(\t-2i\g) \, d\t.
\end{align*}
This proves that the map $s \mapsto \frac {\partial}{\partial s} \left( e^{-i(t-s)\Ha} \widetilde {u_R}(s) \right)$ belongs to $L^2([0,t],L^2(\O))$ uniformly in $t$ and $R>0$, and its $L^2([0,t],L^2(\O))$ norm is controlled by the norm of $u_0$ in $L^2(\O)$. We finally obtain $C \geq 0$ such that for all $t \in \R$ and $R > 0$ we have
\[
\nr{\widetilde {u_R} (t)}_{L^2(\O)} \leq C \pppg t^{\frac 12} \nr {u_0}_{L^2(\O)}.
\]
With \eqref{estim-uR-w} this proves \eqref{estim-uR} and concludes the proof of Theorem \ref{th-energy-decay}.

\section{The case of a weakly dissipative boundary condition} \label{sec-non-diss}

\newcommand{\Tab} {T_{\al,\ao}}
\newcommand{\Hab} {H_{\al,\ao}}
\newcommand{\Tsab} {T_{s\al,s\ao}}
\newcommand{\lnab}{\srev_n(\al,\ao)}
\newcommand{\loab}{\srev_0(\al,\ao)}
\newcommand{\lnsab}{\srev_n(s\al,s\ao)}
\newcommand{\losab}{\srev_0(s\al,s\ao)}
\newcommand{\fab} {\f_n({\al,\ao})}

In this section we prove Theorem \ref{th-energy-decay-nondiss} about the problem \eqref{schrodinger-nondiss}. The absorption index $a$ now takes the value $\al$ on $\R^{\dimm-1} \times \singl l$ and $\ao$ on $\R^{\dimm-1} \times \singl 0$.

The proof follows the same lines as in the dissipative case, except that well-posedness of the problem is not an easy consequence of the general dissipative theory. We will use the separation of variables as in Section \ref{sec-separation} instead. Once we have a decomposition as in Proposition \ref{prop-separation-variables} for the initial datum, we can propagate each term by means of the unitary group generated by $-\D_x$ and define the solution of \eqref{schrodinger-nondiss} as a series of solutions on $\R^{\dimm-1}$.\\

Let us first look at the transverse problem. The transverse operator on $L^2(0,\oo)$ corresponding to the problem \eqref{schrodinger-nondiss} is now given by $\Tab = - \frac {d^2} {dy^2}$ with domain
\[
\Dom(\Tab) = \singl{u \in H^2(0,\oo) \st u'(0) = -i\ao u(0), u'(\oo) = i\al u(\oo)}.
\]
As already mentioned in Remark \ref{rem-a-diff-constantes}, we can reproduce exactly the same analysis as in Section \ref{sec-transverse} if $\al > 0$ and $\ao >0$ (or if one of them vanishes). In particular, there is no restriction on the sizes of these coefficients. The results we give here to handle the weakly dissipative case are also valid in this situation. \\

The strategy will be the same as in Section \ref{sec-transverse}, so we will only emphasize the differences. We first remark that 0 is an eigenvalue of $\Tab$ if and only if $\al=\ao=0$. Otherwise, $\srev ^2 \in \C^*$ is an eigenvalue of $\Tab$ if and only if 
\begin{equation} \label{eq-z-ab}
(\srev - \al)(\srev - \ao) e^{2i\srev \oo} = (\srev + \al) (\srev + \ao).
\end{equation}
We recover \eqref{eq-z} when $\al = \ao$.

\begin{lemma} \label{lemma-cross-nuN}
Let $\al,\ao \in \R$ and $\l \in \C^*$ be such that $\al+\ao \neq 0$ and $\l^2$ is an eigenvalue of $\Tab$. Then $\Re(\l) \notin  \n\N$.
\end{lemma}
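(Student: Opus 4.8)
The plan is to imitate the corresponding step in the proof of Proposition \ref{prop-srev} for the one-sided operator $\Ta$ (the argument that $\Ta$ has no eigenvalue with real part in $\n\N$), but with one essential difference: since $\al$ and $\ao$ are now distinct, the characteristic equation \eqref{eq-z-ab} no longer factors as a perfect square, so I cannot take a square root as in the symmetric case. Instead I would work directly with the real and imaginary parts of the relevant ratio.

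Suppose for contradiction that $\Re(\l) = n\n$ for some $n \in \N$. Since $\n\oo = \pi$, I would first note that $e^{2i\l\oo} = e^{2\pi i n} e^{-2\oo\Im(\l)} = e^{-2\oo\Im(\l)} \in \R_+^*$. Before dividing I would check that $\l \neq \al$ and $\l \neq \ao$: if, say, $\l = \al$, the left-hand side of \eqref{eq-z-ab} vanishes, forcing $2\al(\al+\ao) = 0$, hence $\al = 0$ (as $\al+\ao \neq 0$) and then $\l = 0$, which is excluded. Thus the denominator below is nonzero and \eqref{eq-z-ab} rewrites as
\[
Q := \frac{(\l+\al)(\l+\ao)}{(\l-\al)(\l-\ao)} = e^{2i\l\oo} \in \R_+^*.
\]
Writing $N = (\l+\al)(\l+\ao)$ and $D = (\l-\al)(\l-\ao)$, so that $Q = N\bar D/\abs D^2$, the reality of $Q$ is equivalent to $\Im(N\bar D)=0$. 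A short expansion with $s = \al+\ao$ and $p = \al\ao$ gives $\Im(N\bar D) = 2s\,\Im(\l)\,(p - \abs\l^2)$. Since $s \neq 0$ by hypothesis, either $\Im(\l)=0$ or $\abs\l^2 = \al\ao$.

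The first case is immediate: if $\Im(\l)=0$ then $\l = n\n$ is real, and \eqref{eq-z-ab} becomes $(\l-\al)(\l-\ao) = (\l+\al)(\l+\ao)$, i.e. $2(\al+\ao)\l = 0$, whence $\l = 0$, contradicting $\l \in \C^*$. The main obstacle is the second case $\abs\l^2 = \al\ao$, which in particular forces $\al\ao > 0$, so $\al$ and $\ao$ share a sign. Here I would compute the real part $\Re(N\bar D)$; using $\abs\l^2 = \al\ao$ together with $\Re(\l^2) = 2(n\n)^2 - \abs\l^2$, it collapses to $\Re(N\bar D) = \al\ao\big(4(n\n)^2 - (\al+\ao)^2\big)$. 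Since $\abs D^2 > 0$ and $\al\ao > 0$, the strict positivity $Q > 0$ would require $(\al+\ao)^2 < 4(n\n)^2$, that is $\abs{\al+\ao} < 2n\n$. The contradiction then comes from arithmetic–geometric mean: because $\al,\ao$ have the same sign, $\abs{\al+\ao} = \abs\al + \abs\ao \geq 2\sqrt{\abs{\al\ao}} = 2\abs\l \geq 2\abs{\Re(\l)} = 2n\n$, which is incompatible with $\abs{\al+\ao} < 2n\n$. (For $n=0$ the same chain already gives $\abs{\al+\ao} \geq 2\abs\l > 0$, ruling out the required $\abs{\al+\ao} < 0$.) This disposes of the second case and establishes that $\Re(\l) \notin \n\N$.

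I expect the only delicate point to be the case $\abs\l^2 = \al\ao$: it is precisely the configuration that has no analogue in the one-sided computation, and it is where the sign hypothesis enters through the AM–GM inequality rather than through an appeal to Proposition \ref{prop-realvp}. The algebraic identities $\Im(N\bar D) = 2s\,\Im(\l)(p-\abs\l^2)$ and $\Re(N\bar D) = \al\ao(4(n\n)^2-(\al+\ao)^2)$ under the constraint $\abs\l^2=\al\ao$ are the computational heart of the proof; everything else is bookkeeping.
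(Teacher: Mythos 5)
Your proof is correct; both of your computational identities check out: $\Im(N\bar D)=2(\al+\ao)\,\Im(\l)\big(\al\ao-\abs{\l}^2\big)$, and, under the constraint $\abs{\l}^2=\al\ao$, $\Re(N\bar D)=\al\ao\big(4(n\n)^2-(\al+\ao)^2\big)$. The skeleton is the same as the paper's: assume $\Re(\l)\in\n\N$, deduce that $(\l+\al)(\l+\ao)/\big((\l-\al)(\l-\ao)\big)$ lies in $\R_+^*$, and extract from the vanishing imaginary part (the paper's \eqref{cond-partie-im}) the dichotomy $\Im(\l)=0$ or $\abs{\l}^2=\al\ao$. The endgames differ, in both directions. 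The paper splits on the sign of $\al\ao$: when $\al\ao\geq 0$ it rules out $\Im(\l)=0$ by invoking the unique-continuation argument of Proposition \ref{prop-realvp}, and then its real-part inequality \eqref{cond-partie-reelle} becomes $-\al\ao(\al-\ao)^2-4\al\ao\Im(\l)^2>0$, absurd at a glance since both terms are nonpositive; when $\al\ao<0$ it gets $\Im(\l)=0$ and concludes exactly as you do. You instead dispose of the branch $\Im(\l)=0$ purely algebraically and uniformly in the sign of $\al\ao$ (from $e^{2i\oo\l}=1$ and \eqref{eq-z-ab} one gets $2(\al+\ao)\l=0$), which makes your proof self-contained: no appeal to Proposition \ref{prop-realvp} or unique continuation is needed anywhere. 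The price is paid in the branch $\abs{\l}^2=\al\ao$: by substituting $\Im(\l)^2=\al\ao-(n\n)^2$ you bring $n\n$ into the real part and then must invoke the arithmetic--geometric mean inequality $\abs{\al}+\abs{\ao}\geq 2\sqrt{\al\ao}=2\abs{\l}\geq 2n\n$ to contradict $(\al+\ao)^2<4(n\n)^2$, whereas the paper's unexpanded form of the same quantity (the two expressions agree precisely via $\Im(\l)^2=\al\ao-(n\n)^2$) is contradictory with no further inequality. Your preliminary verification that $\l\neq\al$ and $\l\neq\ao$ before dividing is a small point the paper passes over silently.
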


We recall from \cite{krejcirikbz06} that if $\al+\ao = 0$ ($\Pc\Tc$-symmectric case) then $n^2 \n ^2 \in \s(\Tab)$ for all $n \in \N^*$ (see also Figure \ref{fig-30eigenvalues} for $\al+\ao > 0$ small).

\begin{proof}
\stepp We assume by contradiction that $\Re(\l) \in \n\N$. According to \eqref{eq-z-ab} we have 
\[
\frac {(\l + \al)(\l+\ao)}{(\l-\al)(\l-\ao)} = e^{2i\oo \l} = e^{-2 \oo \Im (\l)} \in \R_+^*.
\]
After multiplication by $\abs{\l - \al}^2 \abs{\l - \ao}^2 \in \R_+^*$ we obtain
\[
\big ( \abs \l ^2 - 2 i \al \Im (\l) - \al ^2 \big) \big ( \abs \l ^2 - 2 i \ao \Im (\l) - \ao ^2 \big)  \in \R_+^*.
\]
Taking the real and imaginary parts gives
\begin{equation} \label{cond-partie-reelle}
\abs{\l}^4 - (\al^2 + \ao^2) \abs \l ^2 + \al^2 \ao^2 - 4 \al \ao \Im(\l)^2 > 0
\end{equation}
and 
\begin{equation} \label{cond-partie-im}
2 \Im(\l) (\al +\ao) \big(\abs\l^2 - \al \ao \big) = 0.
\end{equation}

\stepp Assume that $\al \ao  \geq 0$. In this case $\Im (\l) \neq 0$ (for the same reason as in the proof of Proposition \ref{prop-realvp}), so \eqref{cond-partie-im} implies $\abs \l ^2 = \al \ao$. Then \eqref{cond-partie-reelle} reads
\[
- \al \ao (\al - \ao)^2 - 4\al \ao \Im(\l)^2 > 0,
\]
which gives a contradiction.

\stepp Now assume that $\al \ao < 0$. Then \eqref{cond-partie-im} implies $\Im(\l) = 0$ and hence $e^{2i\oo \l} = 1$. From \eqref{eq-z-ab} we now obtain 
\[
(\l-\al ) (\l-\ao) = (\l+\al ) (\l+\ao),
\]
which is impossible since $\l(\al + \ao) \neq 0$. This concludes the proof.
\end{proof}

\begin{proposition} \label{prop-Im-lab}
There exists $\rho > 0$ such that if $\abs {\al} + \abs {\ao} \leq \rho$ and $\al + \ao > 0$ then the spectrum of $\Tab$ is given by a sequence $(\srev_n(\al,\ao)^2)_{n\in\N}$ of algebraically simple eigenvalues such that
\[
\sup_{n\in\N} \, \Im \left(\srev_n(\al,\ao)^2\right) < 0.
\]
Moreover, any sequence of normalized eigenfunctions corresponding to these eigenvalues forms a Riesz basis.
\end{proposition}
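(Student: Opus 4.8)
The plan is to follow the strategy of Section~\ref{sec-transverse}, using the scaling family $\Tsab$, $s\in[0,1]$, which is an analytic family of operators of type B in the sense of Kato \cite[\S VII.4]{kato} joining the Neumann operator $\HoI$ at $s=0$ to $\Tab$ at $s=1$. Since $\Tab$ is a regular second order operator on the bounded interval $]0,\oo[$, it has compact resolvent, so its spectrum consists of isolated eigenvalues. The eigenvalues $(n\n)^2$ of $\HoI$ being simple, Theorem~VII.1.7 in \cite{kato} provides for each $n\in\N$ an analytic branch $(\al,\ao)\mapsto \srev_n(\al,\ao)^2$ emanating from $(n\n)^2$, together with an analytic normalized eigenfunction $\fab$. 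Exactly as in the first step of the proof of Proposition~\ref{prop-srev}, each eigenvalue has geometric multiplicity one, and by Lemma~\ref{lemma-cross-nuN} the real parts $\Re(\srev_n(\al,\ao))$ avoid $\n\N$ as soon as $\al+\ao\neq0$; hence distinct branches cannot collide and remain algebraically simple for $|\al|+|\ao|$ small. Combined with the large-$n$ asymptotics below, this shows that the numbers $\srev_n(\al,\ao)^2$, $n\in\N$, exhaust $\s(\Tab)$.

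Next I would prove the imaginary part estimate, which is the core of the statement. Differentiating \eqref{eq-z-ab} and using $|\srev_n(\al,\ao)|\geq n\n$ as in the last step of the proof of Proposition~\ref{prop-srev} gives, uniformly for $|\al|+|\ao|$ bounded,
\[
\srev_n(\al,\ao)^2 = (n\n)^2 - \frac{2i(\al+\ao)}{\oo} + \bigo n {+\infty}\big(n^{-1}\big),
\]
so that $\Im(\srev_n(\al,\ao)^2)\to -2(\al+\ao)/\oo<0$. The difficulty is concentrated in the low modes, where in the weakly dissipative regime $\al\ao<0$ the leading term may be swamped by the $O((|\al|+|\ao|)^2)$ remainder when $\al+\ao$ is much smaller than $|\al|+|\ao|$. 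The clean way around this is to use the $\Pc\Tc$-symmetric line $\{\al+\ao=0\}$: we recalled from \cite{krejcirikbz06} that the spectrum of $T_{-\al,\al}$ is real (one checks directly that, when $\ao=-\al$, the value $\srev=n\n$ solves \eqref{eq-z-ab} for $n\geq1$ while $\srev^2=\al^2$ for $n=0$), hence $\Im(\srev_n(\al,\ao)^2)$ vanishes identically on $\{\al+\ao=0\}$. Being real-analytic it therefore factors as $\Im(\srev_n(\al,\ao)^2) = (\al+\ao)\, g_n(\al,\ao)$ with $g_n$ real-analytic. A first variation computation, equivalently the flux identity $\Im(\srev_n^2)=-\al|\fab(\oo)|^2-\ao|\fab(0)|^2$ evaluated at the Neumann eigenfunctions, gives $g_n(0,0)=-2/\oo$ for $n\geq1$ and $g_n(0,0)=-1/\oo$ for $n=0$; in particular $g_n(0,0)\leq -1/\oo$. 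Since $\al+\ao>0$, once $g_n<0$ we get $\Im(\srev_n^2)<0$.

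The main obstacle is to make $g_n<0$ hold uniformly in $n$ on a fixed ball $|\al|+|\ao|\leq\rho$. For this I would first establish uniform-in-$n$ $C^2$ bounds on $(\al,\ao)\mapsto\srev_n(\al,\ao)^2$ near the origin by differentiating \eqref{eq-z-ab} up to second order; because $\srev_n(\al,\ao)=n\n+\bigo n {+\infty}(n^{-1})$ with derivatives of size $O(n^{-1})$, the resulting bounds on $\srev_n^2$ are $O(1)$. A quantitative division (Hadamard) lemma then yields, from the vanishing of $\Im(\srev_n^2)$ on $\{\al+\ao=0\}$ and these uniform second-order bounds, a uniform $C^1$ bound on the quotients $g_n$. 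Since $g_n(0,0)\leq -1/\oo$ for every $n$, uniform equicontinuity forces $g_n\leq -1/(2\oo)$ on the ball $|\al|+|\ao|\leq\rho$ for $\rho$ small enough, whence
\[
\sup_{n\in\N}\Im\big(\srev_n(\al,\ao)^2\big) \leq -\frac{\al+\ao}{2\oo} < 0.
\]
This is precisely where the smallness of $|\al|+|\ao|$ is genuinely used, in agreement with the discussion of low frequencies in the introduction.

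Finally, the Riesz basis assertion follows the template of Proposition~\ref{prop-riesz}. The eigenfunctions keep the form $\fab: y\mapsto A_n\big(e^{i\srev_n(\al,\ao)y}+r_n e^{-i\srev_n(\al,\ao)y}\big)$ with $\srev_n(\al,\ao)=n\n+O(n^{-1})$ and $r_n=1+O(n^{-1})$ uniformly for $(\al,\ao)$ small, so the off-diagonal estimates of Lemma~\ref{lem-phin} carry over unchanged and yield the two-sided frame bounds. Completeness is obtained by the perturbation argument of Kato (Theorem~V.2.20 in \cite{kato}) starting from the Neumann orthonormal basis $(\f_n(0))_{n\in\N}$, using $\fab=\f_n(0)+O(n^{-1})$; the dual basis is $\f_n^*(\al,\ao)=\overline{\f_n(-\al,-\ao)}$, which comes from the adjoint relation $\Tab^*=T_{-\al,-\ao}$ (cf. Remark~\ref{rem-H-a}) together with $\innp{\fab}{\f_m(-\al,-\ao)}=0$ for $n\neq m$.
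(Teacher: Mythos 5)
Your proposal is correct and shares the paper's skeleton (analytic perturbation from the Neumann operator, geometric simplicity and Lemma \ref{lemma-cross-nuN} to keep the branches from colliding, large-$n$ asymptotics from \eqref{eq-z-ab}, and the Riesz-basis machinery of Section \ref{sec-transverse}), but it handles the heart of the matter, the uniform bound $\sup_n \Im\big(\l_n(\al,\ao)^2\big)<0$, by a genuinely different argument. The paper writes the expansion \eqref{taylor-lnab}, $\l_n(\al,\ao)=n\n-\tfrac{i}{n\pi}(\al+\ao)+\g(\al+\ao)^2+O(\abs\al^3,\abs\ao^3)$, remarks that its first two terms also give the large-$n$ asymptotics, treats $n=0$ by the first-order computation $\nabla_{\al,\ao}(\l_0^2)=-\tfrac{i}{\oo}(1,1)$, and concludes. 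You instead exploit the $\Pc\Tc$-symmetric line: on $\{\al+\ao=0\}$ the spectrum is exactly $\{\al^2\}\cup\{n^2\n^2,\ n\geq 1\}$, hence real, so the real-analytic function $\Im\big(\l_n^2\big)$ vanishes there and factors as $(\al+\ao)g_n(\al,\ao)$; uniform-in-$n$ $C^2$ bounds on the branches plus a quantitative division lemma then force $g_n\leq -1/(2\oo)$ on a fixed ball. This is more work but more robust: it deals explicitly with the regime $0<\al+\ao\ll\abs\al+\abs\ao$ (the genuinely weakly dissipative one), where the paper's remainders $O(\abs\al^3,\abs\ao^3)$, taken at face value, could swamp the linear term. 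The paper's expansion implicitly carries the same divisibility structure (equation \eqref{eq-z-ab} depends only on the symmetric functions $\al+\ao$ and $\al\ao$, and the eigenvalues are exactly $n^2\n^2$ on the $\Pc\Tc$ line), but it never makes this explicit; your route buys full rigor and the explicit rate $\Im\big(\l_n^2\big)\leq -(\al+\ao)/(2\oo)$, the paper's buys brevity by recycling the computations of Proposition \ref{prop-srev}.

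Three small repairs. First, your dual basis formula should read $\f_n^*(\al,\ao)\propto\f_n(-\al,-\ao)=\overline{\f_n(\al,\ao)}$ (eigenfunctions of $T_{\al,\ao}^*=T_{-\al,-\ao}$, cf. Remark \ref{rem-H-a}); as written, $\overline{\f_n(-\al,-\ao)}$ equals $\f_n(\al,\ao)$ itself, which cannot be biorthogonal to a non-orthonormal family. Second, the exhaustion of $\s(T_{\al,\ao})$ by the branches does not follow from the large-$n$ asymptotics alone: as in the paper, you need the a priori bound \eqref{eq-ImRe-srev-bis} (bounded $\abs{\Re\l}$ implies bounded $\abs{\Im\l}$) so that any eigenvalue can be continued back to $(0,0)$ inside a compact set and identified with some $(n\n)^2$. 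Third, your factorization requires the branches to be analytic \emph{across} $\{\al+\ao=0\}$; this does hold for $\abs\al+\abs\ao$ small, because the explicit $\Pc\Tc$ spectrum has no collisions there when $\abs\al<\n$, but it should be stated, since Lemma \ref{lemma-cross-nuN} only protects the branches off that line.
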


\begin{figure} 
\includegraphics[width= \textwidth]{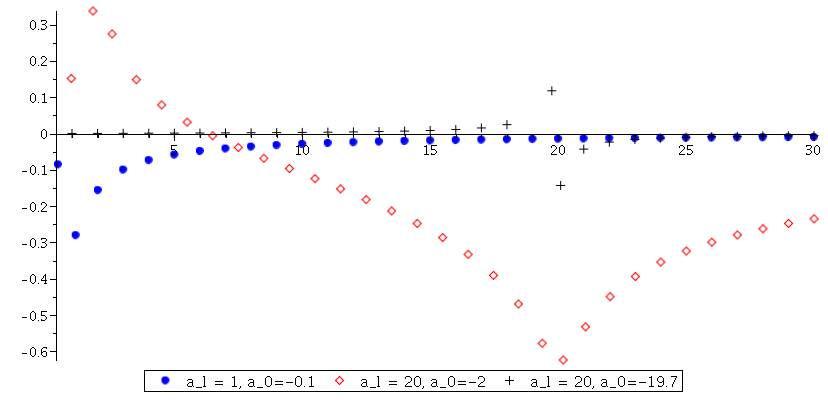}
\caption{$\lnab$ for $n \in \Ii 0 {30}$ and $\oo = \pi$.} \label{fig-30eigenvalues}
\end{figure}

\begin{proof}
As in the proof of Proposition \ref{prop-srev} we can see that for any $R > 0$ there exists $C_R \geq 0$ such that if $\l^2 \in \C^*$ is an eigenvalue of $\Tab$ we have
\begin{equation} \label{eq-ImRe-srev-bis}
\abs{\Re \srev} \leq R \quad \implies \quad \abs {\Im \srev} \leq C_R.
\end{equation}
The operator $\Tab$ depends analytically on the parameters $\al$ and $\ao$, and we know that when $\al = \ao = 0$ the eigenvalues $n^2\n^2$ for $n \in \N$ are algebraically simple. With the restrictions given by \eqref{eq-ImRe-srev-bis} and Lemma \ref{lemma-cross-nuN}, we obtain as in Section \ref{sec-transverse} a sequence of maps $(\al,\ao) \mapsto \srev_n(\al,\ao)$ such that the eigenvalues of $\Tab$ are $\srev_n(\al,\ao)^2$ for $n \in \N$. 
Let $n \in \N^*$. We have 
\begin{equation} \label{taylor-lnab}
\lnab = n\nu - \frac i {n\pi} (\al + \ao ) + \g (\al+\ao)^2 + O \left( \abs \al^3 , \abs \ao^3 \right),
\end{equation}
with $\Re(\g) = \oo/(n\pi)^3 > 0$. As in the dissipative case, we obtain that for any $\al,\ao$ with $\al + \ao > 0$ these eigenvalues $\lnab^2$ are simple.
If moreover $\al$ and $\ao$ are small enough, the eigenvalue $\lnab^2$ is close to $(n\n)^2$ and away from the real axis uniformly in $n \in \N^*$ (the first two terms in \eqref{taylor-lnab} are also the first two terms of the asymptotic expansion for large $n$ and fixed $\al$ and $\ao$). It remains to check that we also have $\Im(\srev_0(\al,\ao)^2) < 0$. For small $\al,\ao$ we denote by $\f_{\al,\ao}(0)$ a normalized eigenvector corresponding to the eigenvalue $\srev_0(\al,\ao)^2$ and depending analytically on $\al$ and $\ao$. For all $\p \in H^1(0,\oo)$ we have 
\[
\innp{\f_{\al,\ao}'}{\p'}_{L^2(0,\oo)}  - i \al {\f_{\al,\ao}(\oo)} \bar {\p(\oo)} - i \ao {\f_{\al,\ao}(0)} \bar {\p(0)} = \loab^2 \innp{\f_{\al,\ao}}{\p}_{L^2(0,\oo)} 
\]
We apply this with $\p = \f_{\al,\ao}$, take the derivatives with respect to $\al$ and $\ao$ at point $(\al,\ao) = (0,0)$, and use the facts that $\f_{0,0}$ is constant and $\srev_{0}(0,0) = 0$. We obtain
\[
\nabla_{\al,\ao} \big(\srev_0^2 \big) = -\frac   i \oo \big( 1 , 1 \big). 
\]
This proves that $\Im\big(\loab^2\big)<0$ if $\al$ and $\ao$ are small enough with $\al + \ao > 0$.
The Riesz basis property relies as before on the fact that 
\[
\abs{\lnab - n\n} = O(n\inv).
\]
For this point we can follow what is done in Section \ref{sec-transverse} for the dissipative case.
\end{proof}

For $n \in \N$ and $\al,\ao \in \R$ with $\al + \ao > 0$ we consider a normalized eigenvector $\fab \in L^2(0,\oo)$ corresponding to the eigenvalue $\lnab^2$ of $\Tab$. We denote by $(\f_n^*(\al,\ao))_{n\in\N}$ the dual basis.

\begin{proposition} \label{prop-solution-nondiss}
Let $u_0 \in \Dom(\Haoal)$. Then the problem \eqref{schrodinger-nondiss} has a unique solution $u \in C^1 (\R,L^2(\O)) \cap C^0(\R,\Dom(\Haoal))$. Moreover if we write 
\[
u_0 = \sum_{n\in\N} u_{0,n} \otimes \fab
\]
where $u_{0,n} \in L^2(\R^{\dimm -1})$, then $u$ is given by 
\[
u(t) = \sum_{n\in\N} \left( e^{-it \left(-\D_x+\lnab^2 \right)} u_{0,n}\right) \otimes \fab.
\]
\end{proposition}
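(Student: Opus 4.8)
The plan is to follow the separation-of-variables scheme of Sections \ref{sec-transverse}--\ref{sec-separation}, but to \emph{define} the solution through an explicit series rather than to rely on semigroup theory, which is unavailable here since $\Haoal$ need not be dissipative. By Lemma \ref{lem-sect-form} the transverse form (and likewise the full form $\int_\O \abs{\nabla u}^2 - i\al\abs{u(\cdot,\oo)}^2-i\ao\abs{u(\cdot,0)}^2$) is sectorial and closed with no restriction on the signs of $\al,\ao$, so by Proposition \ref{prop-sectorial} (whose proof never uses the sign of the absorption index) $\Haoal$ is m-sectorial, in particular closed, and coincides with $-\D$ on the stated $H^2$ domain. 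Proposition \ref{prop-Im-lab} provides the Riesz basis $(\fab)_{n\in\N}$ of $L^2(0,\oo)$, its dual basis $(\f_n^*(\al,\ao))_{n\in\N}$ with the biorthogonality relation, and the location of the $\lnab^2$ in the lower half-plane with $\sup_n \Im(\lnab^2)<0$; combined with the asymptotics \eqref{taylor-lnab} this shows that the sequence $(\Im(\lnab^2))_{n\in\N}$ is bounded, say by $M$, a fact I will use repeatedly.

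First I would record the transverse decomposition of the data. For $u_0\in\Dom(\Haoal)$ set $u_{0,n}(x)=\innp{u_0(x,\cdot)}{\f_n^*(\al,\ao)}_{L^2(0,\oo)}$. As in Proposition \ref{prop-separation-variables}, applying that decomposition to $u_0$ and to $\D_x u_0\in L^2(\O)$ yields $u_{0,n}\in H^2(\R^{\dimm-1})$ together with $u_0=\sum_n u_{0,n}\otimes\fab$. Using $\Tab\f_n(\al,\ao)=\lnab^2\f_n(\al,\ao)$ and biorthogonality, the transverse coefficients of $\Haoal u_0=-\D u_0$ are $(-\D_x+\lnab^2)u_{0,n}$, and the two-sided bounds of the Riesz basis give $\sum_n \nr{(-\D_x+\lnab^2)u_{0,n}}_{L^2(\R^{\dimm-1})}^2\lesssim \nr{\Haoal u_0}_{L^2(\O)}^2<\infty$. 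This is the analogue of Proposition \ref{prop-dec-res}(i) in the present setting.

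Next I would set $c_n(t)=e^{-it\lnab^2}\,e^{it\D_x}u_{0,n}$, so that $c_n(t)=e^{-it(-\D_x+\lnab^2)}u_{0,n}$, and $u(t)=\sum_n c_n(t)\otimes\fab$. Since $e^{it\D_x}$ is unitary, $\nr{c_n(t)}_{L^2(\R^{\dimm-1})}=e^{t\Im(\lnab^2)}\nr{u_{0,n}}_{L^2(\R^{\dimm-1})}\leq e^{\abs t M}\nr{u_{0,n}}_{L^2(\R^{\dimm-1})}$, so the upper Riesz bound applied fibrewise and integrated in $x$ shows that the series for $u(t)$, and also $w(t):=\sum_n(-\D_x+\lnab^2)c_n(t)\otimes\fab$ (whose coefficients have norm $e^{t\Im(\lnab^2)}\nr{(-\D_x+\lnab^2)u_{0,n}}$), converge in $L^2(\O)$ uniformly for $t$ in any compact interval. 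As $\Haoal$ is closed this yields $u(t)\in\Dom(\Haoal)$ with $\Haoal u(t)=w(t)$, both continuous in $t$, so $u\in C^0(\R,\Dom(\Haoal))$. Each $c_n$ is differentiable with $\partial_t c_n=-i(-\D_x+\lnab^2)c_n$, hence the term-by-term $t$-derivative of the series equals $-i\,w(t)$; the same uniform bound legitimizes differentiation under the sum, giving $u\in C^1(\R,L^2(\O))$ with $\partial_t u=-i\Haoal u=i\D u$, i.e. $i\partial_t u+\D u=0$, while $u(0)=u_0$ and the boundary conditions are automatic from $u(t)\in\Dom(\Haoal)$.

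Finally, for uniqueness I would take any solution $u\in C^1(\R,L^2(\O))\cap C^0(\R,\Dom(\Haoal))$ with $u(0)=0$ and project fibrewise, $u_n(t,x)=\innp{u(t,x,\cdot)}{\f_n^*(\al,\ao)}$. Since $u(t,x,\cdot)\in\Dom(\Tab)$ for a.e. $x$, the same biorthogonality computation gives $\innp{-\partial_y^2 u(t,x,\cdot)}{\f_n^*(\al,\ao)}=\lnab^2 u_n(t,x)$, so each $u_n$ solves $\partial_t u_n=-i(-\D_x+\lnab^2)u_n$ on $\R^{\dimm-1}$ with $u_n(0)=0$; as $-i(-\D_x+\lnab^2)$ generates a $C_0$-group, uniqueness forces $u_n\equiv 0$, and completeness of the Riesz basis gives $u\equiv 0$. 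The main obstacle is the third step: controlling the scalar multipliers $e^{-it\lnab^2}$ uniformly in $n$ on compact time intervals — precisely where the boundedness of $(\Im(\lnab^2))_n$ and the two-sided Riesz estimates of Proposition \ref{prop-Im-lab} are indispensable — so as to interchange the infinite sum with both differentiation in $t$ and the closed action of $\Haoal$.
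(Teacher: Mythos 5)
Your proof is correct and follows essentially the same route as the paper: decompose along the transverse Riesz basis of Proposition \ref{prop-Im-lab}, propagate each mode by $e^{-it(-\D_x+\l_n(\al,\ao)^2)}$, and control the resulting series using the two-sided Riesz bounds together with the boundedness of $\big(\Im \l_n(\al,\ao)^2\big)_{n\in\N}$. The differences are purely organizational --- you obtain existence via closedness of $\Haoal$ plus term-by-term differentiation where the paper sums difference-quotient estimates uniformly in the time increment, and you prove uniqueness by projecting a zero-data solution where the paper derives the explicit formula directly for an arbitrary solution --- neither of which changes the substance of the argument.
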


\begin{proof} 
\stepp
Assume that $u \in C^0(\R_+, \Dom(\Haoal)) \cap C^1(\R_+^*,L^2(\O))$ is a solution of \eqref{schrodinger-nondiss}. Let $t \in \R_+^*$. For all $n \in \N$ and almost all $x \in \R^{\dimm-1}$ we can define 
\[
u_n(t,x) = \innp{u(t;x,\cdot)}{\f^*_n(\al,\ao)}_{L^2(0,\oo)},
\]
so that in $L^2(\O)$ we have
\[
u(t) = \sum_{n\in\N} u_n(t) \otimes \fab.
\]
According to Proposition \ref{prop-dec-res} (which can be proved similarly in this context) we have $u_n(t) \in H^2(\R^{\dimm-1})$ for all $t \in \R_+^*$ and $n \in \N$, and for $s \in \R^*$ we have 
\begin{multline*}
i\frac {u(t+s)-u(t)}s - \Haoal u(t) \\ = \sum_{n\in\N} \left( i\frac {u_n(t+s)-u_n(t)}s - \left(-\D_x +\lnab^2 \right) u_n(t) \right) \otimes \fab.
\end{multline*}
Let $n\in\N$. According to Proposition \ref{prop-riesz} we have 
\begin{multline*}
\nr{i\frac {u_n(t+s)-u_n(t)}s - \left(-\D_x +\lnab^2 \right) u_n(t)}_{L^2(\R^{\dimm-1})}
\\ 
\lesssim \nr{i\frac {u(t+s)-u(t)}s - \Haoal u(t)}_{L^2(\O)}
\limt s 0 0.
\end{multline*}
This proves that $u_n$ is differentiable and for all $t > 0$
\[
i u_n'(t) = \left( - \D_x + \lnab^2 \right) u_n(t). 
\]
Then for all $t > 0$
\[
 u_n(t) = e^{-it\left( -\D_x + \lnab^2 \right)} u_n(0) =   e^{-it\left( -\D_x + \lnab^2 \right)} u_{0,n}.
\]

\stepp 
Conversely, let us prove that the function $u$ defined by the statement of the proposition is indeed a solution of \eqref{schrodinger-nondiss}. 
Let $t \in \R$. According to Proposition \ref{prop-dec-res}, $u_{0,n}$ and hence $e^{-it \left(-\D_x+\lnab^2 \right)} u_{0,n}$ belong to $H^2(\R^{\dimm-1})$ for all $n \in \N$. Therefore $u(t) \in \Dom(\Haoal)$. Then for all $s \in \R^*$ we have 
\begin{eqnarray*}
\lefteqn{ \sum_{n\in\N}\nr{\left( i\frac {e^{-is \left(-\D_x + \lnab^2 \right)} - 1}s -\left(- \D_x + \lnab^2\right) \right) \,  e^{-it(-\D_x + \lnab^2)} u_{0,n}}^2 }\\
&& \lesssim_t \sum_{n\in\N}\nr{ \frac 1 s \int_0^s \big( e^{-i\th(-\D_x + \lnab^2)} - 1 \big) \left(- \D_x + \lnab^2\right) u_{0,n} \, d\th}^2_{L^2(\R^{\dimm-1})}\\
&& \lesssim_t \sum_{n\in\N}\nr{\left(- \D_x + \lnab^2\right) u_{0,n} }^2_{L^2(\R^{\dimm-1})}\\
&& \lesssim_t \nr{ \Haoal u_0}_{L^2(\O)}^2
\end{eqnarray*}
This series of functions converges uniformly in $s$ so we can take the limit, which proves that for any $t \in \R$
\[
\nr{i\frac {u(t+s)-u(t)}s -\Haoal u(t)}^2 \limt s 0 0.
\]
This proves that $u$ is differentiable and $i u'(t) + \Haoal u(t) = 0$, so $u$ is indeed a solution of \eqref{schrodinger-nondiss}.
\end{proof}

Now we can prove Theorem \ref{th-energy-decay-nondiss}:

\begin{proof} [Proof of Theorem \ref{th-energy-decay-nondiss}]
According to Proposition \ref{prop-solution-nondiss} we have existence and uniqueness for the solution $u$ of the problem \ref{schrodinger-nondiss}. Then with Proposition \ref{prop-riesz} we have 
\begin{align*}
\nr{u(t)}_{L^2(\O)}^2 
& \lesssim  \sum_{n \in \N} \nr{e^{-it \left(-\D_x + \lnab^2 \right)} u_{0,n}}_{L^2(\R^{\dimm-1})}^2 \\
& \lesssim  \sum_{n \in \N} e^{t \Im \left(\lnab^2\right)}\nr{ u_{0,n}}_{L^2(\R^{\dimm-1})}^2 .
\end{align*}
Proposition \ref{prop-Im-lab} gives $\g_{\al,\ao} > 0$ such that 
\begin{align*}
\nr{u(t)}_{L^2(\O)}^2 \lesssim  e^{-t \g_{\al,\ao}} \sum_{n \in \N}  \nr{ u_{0,n}}_{L^2(\R^{\dimm-1})}^2  \lesssim e^{-t \g_{\al,\ao}} \nr{u_0}_{L^2(\O)}^2,
\end{align*}
which concludes the proof.
\end{proof}

In the end of this section we show that the smallness assumption on $\abs{\al}+ \abs {\ao}$ is necessary in Theorem \ref{th-energy-decay-nondiss}. More precisely, if $\abs{\al}$ and $\abs{\ao}$ are too large, then the transverse operator $\Tab$ has eigenvalues with positive imaginary parts and hence the solution of the Schr\"odinger equation grows exponentially.

\begin{proposition} \label{prop-imvp-up}
Let $\al, \ao \in \R$ with $\al + \ao > 0$ and $\al \ao < 0$. Let $n \in \N$. If $s > 0$ is large enough, we have $\Im\big(\lnsab^2\big) > 0$.
\end{proposition}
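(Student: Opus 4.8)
The plan is to exploit the overdamping phenomenon: as $s \to +\infty$ the boundary conditions $u'(\oo) = is\al u(\oo)$ and $u'(0) = -is\ao u(0)$ degenerate, after division by $s$, to the Dirichlet conditions $u(\oo) = u(0) = 0$. Hence one expects the eigenvalues $\lnsab^2$ to converge to the Dirichlet eigenvalues $(k\n)^2$, $k \geq 1$, and the whole point is that the first correction in $1/s$ has positive imaginary part exactly when $\al + \ao > 0$ and $\al\ao < 0$. Rather than pass to the limit at the level of operators (whose domains degenerate), I would work entirely with the secular equation \eqref{eq-z-ab}, which for $\Tsab$ reads
\[
(\srev - s\al)(\srev - s\ao)\,e^{2i\srev\oo} = (\srev + s\al)(\srev + s\ao).
\]

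Setting $\e = 1/s$ and dividing by $s^2$, I would introduce the entire function
\[
\Phi(\srev,\e) = (\e\srev - \al)(\e\srev - \ao)\,e^{2i\srev\oo} - (\e\srev + \al)(\e\srev + \ao),
\]
so that $\srev$ is an eigenparameter of $\Tsab$ iff $\Phi(\srev,1/s) = 0$. At $\e = 0$ one has $\Phi(\srev,0) = \al\ao\,(e^{2i\srev\oo}-1)$, whose zeros are exactly $\srev = k\n$, $k \in \Z$, with $\partial_\srev\Phi(k\n,0) = 2i\oo\,\al\ao \neq 0$. The implicit function theorem then produces, for each $k \geq 1$, a unique analytic branch $\srev = \hat\srev_k(\e)$ with $\hat\srev_k(0) = k\n$, and a one-line computation of the $\e$-derivative gives $\hat\srev_k'(0) = -\partial_\e\Phi(k\n,0)/\partial_\srev\Phi(k\n,0) = -\,i\,k\n(\al+\ao)/(\oo\,\al\ao)$, which is purely imaginary. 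Consequently $\Im\big(\hat\srev_k(\e)^2\big) = 2k\n\,\e\,\Im\hat\srev_k'(0) + O(\e^2) = -\,\dfrac{2k^2\n^2(\al+\ao)}{\oo\,\al\ao}\,\e + O(\e^2)$, and under $\al+\ao>0$, $\al\ao<0$ the coefficient is strictly positive, so $\Im\big(\hat\srev_k(\e)^2\big) > 0$ for $\e$ small.

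It remains to match the branch $\lnsab$ of the statement with the correct $\hat\srev_k$, and here lies the only subtlety: the right identification is $\lnsab \to (n+1)\n$, i.e. the index is shifted by one. To establish this I would first confine the branch: by Lemma \ref{lemma-cross-nuN} we have $\Re\lnsab \in (n\n,(n+1)\n)$ for all $s>0$, and taking moduli in the secular equation,
\[
e^{-4\oo\,\Im\srev} = \frac{\big((\Re\srev+s\al)^2+(\Im\srev)^2\big)\big((\Re\srev+s\ao)^2+(\Im\srev)^2\big)}{\big((\Re\srev-s\al)^2+(\Im\srev)^2\big)\big((\Re\srev-s\ao)^2+(\Im\srev)^2\big)},
\]
the right-hand side tends to $1$ whenever $|\Im\srev|\to\infty$ (the cross terms are of lower order in $s$), which forces $\Im\lnsab$ to stay bounded uniformly in $s$. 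Thus $\lnsab$ remains in a bounded region, so by the $\e=0$ limit of $\Phi$ any limit point is a Dirichlet value $k\n$, and confinement to the closed strip $[n\n,(n+1)\n]$ leaves only the two endpoints.

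The lower endpoint $n\n$ is excluded by a counting argument, which is the main obstacle. Each $(k\n)^2$ is a simple Dirichlet eigenvalue, so the implicit function theorem guarantees that for large $s$ there is exactly one eigenvalue of $\Tsab$ in a fixed small disc around $(k\n)^2$. For $n=0$ the lower endpoint is $0$, which is not a Dirichlet value at all ($\srev=0$ is merely the spurious simple root of $\Phi(\cdot,1/s)$), hence $\losab \to \n$. Inductively, if $\srev_{n-1}(s\al,s\ao) \to n\n$, then $\lnsab$ cannot also converge to $n\n$ without producing two distinct eigenvalues in the disc around $(n\n)^2$, contradicting uniqueness; therefore $\lnsab \to (n+1)\n$ and $\lnsab = \hat\srev_{n+1}(1/s)$ for $s$ large. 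Combined with the computation above this yields $\Im(\lnsab^2) > 0$ for $s$ large enough. The delicate point is thus exactly this index shift — the fact that under overdamping the lowest Robin mode does not return to $0$ but migrates up to the first Dirichlet level — and its rigorous justification through the uniform boundedness estimate together with the simplicity-based counting.
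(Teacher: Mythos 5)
Your proposal is correct, but it takes a genuinely different route from the paper. The paper never rescales and never identifies the limits of the eigenvalues: it writes the secular equation as $e^{2i\lambda l}=\frac{(\lambda+sa_l)(\lambda+sa_0)}{(\lambda-sa_l)(\lambda-sa_0)}$, expands the right-hand side as $1+\frac{2\lambda}{s}\bigl(\frac{1}{a_l}+\frac{1}{a_0}\bigr)+O(s^{-2})$, and uses $\frac{1}{a_l}+\frac{1}{a_0}=\frac{a_l+a_0}{a_l a_0}<0$ together with $\Re\lambda_n(sa_l,sa_0)>n\nu$ and boundedness of $\Im\lambda_n(sa_l,sa_0)$ to conclude that the right-hand side has modulus $<1$ for $s$ large; hence $e^{-2l\,\Im\lambda_n}<1$, i.e.\ $\Im\lambda_n>0$ and therefore $\Im(\lambda_n^2)>0$. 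The only delicate case in the paper is $n=0$, where one must rule out $\lambda_0(sa_l,sa_0)\to0$, which is done by comparing the expansions of both sides after dividing by $\lambda_0$. Your route instead makes the overdamping heuristic rigorous: the rescaled secular function $\Phi(\lambda,\varepsilon)$, the holomorphic implicit function theorem at the Dirichlet values $k\nu$, and the first-order shift $\Im\bigl(\hat\lambda_k(\varepsilon)^2\bigr)=-\frac{2k^2\nu^2(a_l+a_0)}{l\,a_la_0}\varepsilon+O(\varepsilon^2)$ (your computation of this constant is correct, and note it is the same quantity $\frac{1}{a_l}+\frac{1}{a_0}$ as in the paper, in disguise). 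What your approach buys is quantitative: convergence of the eigenvalues to the Dirichlet thresholds at rate $1/s$ with an explicit constant. What the paper's buys is brevity: it needs no information on where $\lambda_n$ converges, only the strip confinement.

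Two points should be tightened. First, your index-shift induction is more machinery than the statement requires: for the sign of $\Im(\lambda_n^2)$ it suffices that the limit be $k\nu$ for \emph{some} $k\geq1$, and for $n\geq1$ this is automatic from $\Re\lambda_n(sa_l,sa_0)\in(n\nu,(n+1)\nu)$ — whether the limit is $n\nu$ or $(n+1)\nu$, your perturbation formula applies and gives a positive sign. The only genuinely necessary exclusion is $\lambda_0(sa_l,sa_0)\not\to0$ (exactly the point the paper treats separately), and your Rouch\'e argument there — the root at $0$ of $\Phi(\cdot,\varepsilon)$ is spurious, remains simple, and is pinned at $0$, so the genuine nonzero branch cannot approach it — is a valid alternative; but it does require saying explicitly that $\lambda_0(sa_l,sa_0)\neq0$ for $s>0$ (since $0$ is an eigenvalue only when both boundary parameters vanish). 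So the ``main obstacle'' is not the identification $k=n+1$ but only this $n=0$ case. Second, two steps are stated loosely: the uniform-in-$s$ bound on $\Im\lambda_n$ needs the observation that the right-hand side of your modulus identity tends to $1$ as $s\to\infty$ \emph{uniformly} in $\Im\lambda$ when $\Re\lambda$ is bounded (not merely as $\abs{\Im\lambda}\to\infty$ at fixed $s$); ``exactly one root in a fixed disc for $s$ large'' is Rouch\'e/Hurwitz rather than the implicit function theorem, which gives only local uniqueness; and convergence of $\lambda_n(sa_l,sa_0)$ (rather than subsequential limits in $\{n\nu,(n+1)\nu\}$) uses connectedness of the limit set of a bounded continuous curve. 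These are routine repairs; the structure of your argument is sound.
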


\begin{proof}
We know that the curves $s \mapsto \lnsab$ for $n \in \N$  are defined for all $s \in \R$ and remain bounded. Moreover we have chosen the square root $\lnsab$ of $\lnsab^2$ which has a non-negative real part, so the imaginary parts of $\lnsab$ and $\lnsab^2$ have the same signs. Assume that $\al \ao < 0$, and let $n \in \N^*$ be fixed. We have
\begin{equation} \label{dev-lnab-large}
\frac {(\lnsab+s \al)(\lnsab+s\ao)}{(\lnsab-s\al)(\lnsab-s\ao)} = 1 + \frac {2\lnsab}s \left( \frac 1 {\al} + \frac 1 {\ao} \right) + \bigo s {+\infty} \big( s^{-2} \big). 
\end{equation}
Since $\Re (\lnsab) > n\n$ and $\Im (\lnsab)$ is bounded, this quantity is of norm less than 1 when $s > 0$ is large enough, so 
\[
 e^{-2\oo \Im(\lnsab)} = \abs{e^{2i\oo \lnsab }} < 1,
\]
and hence $\Im(\lnsab) > 0$. When $n = 0$, the same holds if we can prove that $\losab$ does not go to 0 for large $s$. Indeed, in this case the only possibility to have 
\[
e^{2i\oo \losab} \to 1
\]
is that $\losab$ goes to $\n$, and then $\Re(\losab)$ is bounded by below by a positive constant, and we can conclude as before. So assume by contradiction that $\losab$ goes to 0 as $s$ goes to $+\infty$. Then we have 
\[
e^{2i\oo \losab} = 1 + 2i\oo \losab + \bigo {s} {+\infty} \big( \abs{\losab)}^2 \big),
\]
which gives a contradiction with \eqref{dev-lnab-large}, where the rest $O(s^{-2})$ has to be replaced by $O\big( \l_0^2 s^{-2}\big)$. This concludes the proof.
\end{proof}

\begin{remark} \label{rem-anti-diss}
We remark from \eqref{taylor-lnab} (see also Figure \ref{fig-30eigenvalues}) that given $\al,\ao \in \R$ such that $\al +\ao > 0$ we always have $\Im (\lnab) < 0$ if $n$ is large enough. By duality, this means that there is always an eigenvalue with positive imaginary part when $\al + \ao < 0$, and hence the norm of the solution of \eqref{schrodinger-nondiss} is always exponentially increasing in this case.
\end{remark}

\bibliographystyle{alpha}
\bibliography{bibliotex}

\vspace{1cm}

%
%
%
%
%
%
%
%

\end{document}